\makeatletter
\newcommand*{\rom}[1]{\expandafter\@slowromancap\romannumeral #1@}
\makeatother

\documentclass[journal,12pt,onecolumn,draftclsnofoot]{IEEEtran}

%\documentclass[lettersize,journal]{IEEEtran}
%\documentclass[12pt, draftclsnofoot, onecolumn]{IEEEtran}
%\documentclass[journal,onecolumn,12pt,doublespace]{IEEEtran}
%\documentclass[journal,onecolumn]{IEEEtran}
%asli

\IEEEoverridecommandlockouts
\ifCLASSINFOpdf
\else
\fi
\usepackage{graphics}
%% added by me 
\usepackage[table]{xcolor}

\usepackage{booktabs}
\usepackage{tabularx}
\usepackage{makecell}
\usepackage{pifont}
\newcommand{\cmark}{\ding{51}}
\newcommand{\xmark}{\ding{55}}

%\usepackage{psfrag}
% \newenvironment{sproof}{\noindent{ \emph{ Sketch of proof:}}}{\qed\bigskip}
%me
\newenvironment{sproof}[1][Sketch of proof]{%
  \noindent\emph{#1.}\quad
}{%
  \qed\bigskip
}
\usepackage{units}

\usepackage{enumitem}
\usepackage{overpic}
%added for the end of the proofs
\usepackage{amsthm}

\usepackage{multicol}
\usepackage{lipsum}
\usepackage{color}
\usepackage[cmex10]{amsmath}
\usepackage{amsthm}
\usepackage{amssymb}
\usepackage{mathrsfs}
\usepackage{mathtools}
\usepackage{amsbsy}
\usepackage{xcolor}
\usepackage{mathrsfs}
\usepackage{setspace}
\usepackage{float}
\usepackage{graphicx}
\usepackage{psfrag}
\usepackage{epstopdf}
\usepackage{lettrine}
\usepackage{psfrag}
\usepackage{newclude}
\usepackage[normalem]{ulem}
\usepackage{latexsym}
\usepackage{algpseudocode}
\usepackage{algorithm,algpseudocode}
\usepackage{algorithmicx}

\usepackage{multirow}
\usepackage{rotating}
\usepackage{booktabs}
\usepackage{amsmath}
\usepackage{wasysym}
\usepackage{mathrsfs}
\usepackage{bbm} % indicator function
\usepackage{filecontents}
\usepackage{amsmath,epsfig,cite,amsfonts,amssymb,psfrag,color}

\usepackage{apptools}

\usepackage{chngcntr}
%%%%%%%%%%%%%%%%%%%%%%%%%%%%%%%%%%%%%%%%%%%%%%%%%%%%%%%%%%%
\ifCLASSOPTIONcompsoc
\usepackage[caption=false,font=normalsize,labelfon
t=sf,textfont=sf]{subfig}
\else
\usepackage[caption=false,font=footnotesize]{subfig}
\fi
%%%%%%%%%%%%%%%%%%%%%%%%%%%%%%%%%%%%%%%%%%%%%%%%%%%%%%%%%%%%
\AtAppendix{\counterwithin{lemma}{section}}

%%%%%%%%%%%%%%%%%\def\CE{\mathcal{E}}
%%%%%%%%%%%%%%%%%\def\A{{A^n_\epsilon}}
%%%%%%%%%%%%%%%%%\def\bS{{\boldsymbol{S}}}
%%%%%%%%%%%%%%%%%\def\bs{{\boldsymbol{s}}}
%%%%%%%%%%%%%%%%%\def\bz{{\boldsymbol{z}}}
%%%%%%%%%%%%%%%%%\def\bu{{\boldsymbol{u}}}
%%%%%%%%%%%%%%%%%\def\bv{{\boldsymbol{v}}}
%%%%%%%%%%%%%%%%%\def\bU{{\boldsymbol{U}}}
%%%%%%%%%%%%%%%%%\def\bx{{\boldsymbol{x}}}
%%%%%%%%%%%%%%%%%\def\bX{{\boldsymbol{X}}}
%%%%%%%%%%%%%%%%%\def\by{{\boldsymbol{y}}}
%%%%%%%%%%%%%%%%%\def\bY{{\boldsymbol{Y}}}
%%%%%%%%%%%%%%%%%\def\pr{{\text{P}}}
%%%%%%%%%%%%%%%%%\def\c{{\mathcal{C}}}
%%%%%%%%%%%%%%%%% Define \doublehat
\usepackage{accents}
\newlength{\dhatheight}

%%%%%%%%%%%%%%%%%%% Define Independent Symbol

%%%%%%%%%%%%%%%%%%%%%%%%%%%
\allowdisplaybreaks
\theoremstyle{remark}
\newtheorem{remark}{Remark}
\theoremstyle{example}
\newtheorem{example}{Example}
\theoremstyle{theorem}
\newtheorem{theorem}{Theorem}
\newtheorem{lemma}{Lemma}

\newtheorem*{lemma*}{Lemma} % unnumbered - added by me
\newtheorem{lemalpha}{Lemma}

\newtheorem*{example*}{Example} % unnumbered - added by me
\graphicspath{{figures/}}
\allowdisplaybreaks

\hyphenation{op-tical net-works semi-conduc-tor}

%%%%%%%%%%%%%%%%%%%%%%%%%%%%%%%%%%%%%%%%%%%%%%%%%%%%%%%%%%%%%%%%%%%%%%%%%%%%%%%%%%%%%%%%%%%%%%%%%%%%%%%%%%%%%%%%%%%%%%%%%%%%%%%%%%%%
\usepackage[skins,theorems]{tcolorbox}
\tcbset{highlight math style={shrink tight,extrude by=0.55mm,colframe=red,
		boxrule=0.4pt,frame style={opacity=0.6}}}

\theoremstyle{definition}

\begin{document}
	\title{Degrees of Freedom of Cache-Aided Interference Channels
Assisted by Active Intelligent Reflecting Surfaces}
\author{Abolfazl Changizi, Ali H. Abdollahi Bafghi, Mahtab Mirmohseni, and Masoumeh Nasiri-Kenari}

% The paper headers
%\markboth{Journal of \LaTeX\ Class Files,~Vol.~X, No.~XX, XXX~2025}%
% {Shell \MakeLowercase{\textit{et al.}}: A Sample Article Using IEEEtran.cls for IEEE Journals}

%\IEEEpubid{0000--0000/00\$00.00~\copyright~2025 IEEE}
% Remember, if you use this you must call \IEEEpubidadjcol in the second
% column for its text to clear the IEEEpubid mark.

\maketitle

%%%%%%%%%%%%%%%%%%%%%%%%%%%%%%%%%%%%%%%%%%%%%%%%%%%%%%%%%%%%%%%%%%%%%%%%%%%%%%%%%%%%%%%%%%%%%%%%%%%%%%%%%%%%%%%%%%%%%%%%%%%%%%%%%%%%
	\textcolor[rgb]{0,0,0}{
		\begin{abstract}
			This paper studies cache-aided wireless networks in the presence of active intelligent reflecting surfaces (IRSs) from an information-theoretic perspective. Specifically, we investigate interference management in a cache-aided wireless network assisted by an active IRS to enhance the achievable degrees of freedom (DoF). To this end, we jointly design the content placement, delivery phase, and IRS coefficients, and propose a one-shot achievability scheme. Our scheme exploits transmitters' cooperation, cache contents, interference alignment, and IRS capabilities, based on the network parameters. We derive the achievable one-shot sum-DoF for different cache sizes, network configurations, and numbers of IRS elements, followed by an upper bound. Our results highlight the potential of deploying an IRS in cache-aided wireless communication systems. In particular, they underscore the enhancement of achievable DoF for various parameter regimes, especially when cache sizes are inadequate. Notably, we show that access to an IRS with a sufficient number of elements enables the achievement of the maximum possible DoF for various parameter regimes of interest.
	\end{abstract}}
	
	\IEEEpeerreviewmaketitle
	
	\textit{Index Terms}\textemdash Intelligent reflecting surface (IRS), reconfigurable intelligent surface (RIS), coded caching, degrees of freedom (DoF), interference management.
	%%%%%%%%%%%%%%%%%%%%%%%%%%%%%%%%%%%%%%%%%%%%%%%%%%%%%%%%%%%%%%%%%%%%%%%%%%%%%%%
	
	\section{Introduction}\label{sec1introduc}
	%\IEEEPARstart{}{}, 
	Future communication networks are expected to offer better data rates, connectivity, and high-quality services. 
    %such as localization, low latency, and ultra-reliable communication. 
    %To support these demands, 
    Intelligent reflecting surfaces (IRSs), also known as reconfigurable intelligent surfaces (RISs), enable the capability of reconfiguring wireless channels and are considered a promising solution in 6G  \cite{IRS_Survey2, IRS_Survey1, comprehensive_6g}. Specifically, 
    IRSs provide an opportunity to manipulate the electromagnetic properties of the incident waves in real-time, thereby offering improvements in 
    %several areas, including 
    coverage extension \cite{IRS_Coverage}, localization accuracy \cite{Nasri_localization}, physical layer security enhancement \cite{Physical_layer_IRS_Security}, 
    %simultaneous power and information transfer \cite{SWIPT_IRS}, 
    and interference mitigation \cite{bafghi_TCom}. 

    One significant capability of IRSs is particularly their ability to weaken the cross channels that carry interference, thus greatly enhancing the degrees of freedom (DoF) of interference channels. For instance, in \cite{bafghi_TCom}, the authors showed that the assistance of an active IRS with at least $K(K-1)$ elements or a passive IRS with a sufficiently large number of elements can increase the sum-DoF‌ of a $K$-user time-selective interference channel, traditionally known for achieving a sum-DoF of $\frac{K}{2}$\cite{DoF_Jafar}, to $K$. In the same channel model, the authors of \cite{Yu_nulling_interference} proposed an algorithm to achieve a sum-DoF of $K$ using a passive IRS with $2K(K-1)$ elements, under the assumptions of blocked direct links and line-of-sight IRS–transmitter/receiver channels.
    %. The method of \cite{Yu_nulling_interference} is applicable when the direct channels between user pairs are blocked, and the channel between the IRS and the transmitters/receivers is line-of-sight. 
    It was also shown that the achievable sum-DoF of $\frac{K_T K_R}{K_T+K_R-1}$ for an X-network of size $K_T \times K_R$, derived in \cite{jafar_Xchannel}, can be increased to  $\min\{K_T, K_R\}$ using either a passive or active IRS with a sufficiently large number of elements \cite{bafghi_xnetwork}.

%\IEEEpubidadjcol
The DoF of multiple-input multiple-output (MIMO) interference channels assisted by IRSs has also been studied in the literature  
\textcolor{black}{\cite{MIMO_D2D, DoF_Modulation_IRS, zheng2022dof, K_User_MIMO, Rank_deficient_MIMO, jiang2025dofanalysisbeamformingdesign, RIS_Cooperative_IA_MIMO, Iterative_IA, noncoherentmimo}}. 
%For instance, 
%The DoF of classic channel is improved ...
For instance, \cite{MIMO_D2D} studied a multi-antenna D2D network with a passive IRS and applied techniques based on Riemannian Manifold for a rank-minimization problem. In \cite{DoF_Modulation_IRS}, the authors showed that the sum-DoF of a MIMO interference channel with $K_T$ antennas at the transmitter and $K_R$ antennas at the receiver can be enhanced from $\min \{K_T, K_R\}$, derived in
  \cite{JafarMIMO}, to   $\min \{K_T+\frac{Q}{2}-\frac{1}{2}, Q, K_R\}$ using a $Q$-element IRS. \textcolor{black}{In \cite{zheng2022dof}, the authors derived achievable sum-DoF results and active IRS gain conditions for $2 \times 2$ MIMO interference channels with arbitrary antenna configurations.
  % \textcolor{blue}{The DoF of IRS-assisted K-user MIMO networks has also been investigated in \cite{K_User_MIMO, Rank_deficient_MIMO, RIS_Cooperative_IA_MIMO, jiang2025dofanalysisbeamformingdesign}}. 
 In IRS-assisted K-user MIMO networks, \cite{K_User_MIMO} showed that joint active-passive beamforming based on interference subspace alignment improves the DoF compared with the case without an IRS, while \cite{Rank_deficient_MIMO, jiang2025dofanalysisbeamformingdesign} derived lower and upper DoF bounds for rank-deficient channels.} 
 \textcolor{black}{In \cite{RIS_Cooperative_IA_MIMO}, a passive IRS-assisted cooperative interference alignment scheme was proposed for MIMO multi-user networks, leveraging space–time precoding to enhance achievable DoF with fewer IRS elements.
 Additionally, \cite{noncoherentmimo} studied scenarios without channel state information (CSI) in non-coherent settings.}
 %, as well as IRS-assisted  $2\times 2$ MIMO systems with an equal number of antennas.
  % , the authors studied the same problem in the absence of channel state information (CSI) and within a non-coherent setup. Furthermore, the sum-DoF of an IRS-assisted $2\times 2$ MIMO, when the number of antennas is equal, was also studied in \cite{zheng2022dof}. 
 Furthermore, active IRSs have been investigated in the presence of security constraints, where properly designed IRS beamforming can suppress information leakage 
  %with destructive interference at the eavesdropper 
  and increase the secure DoF of the system \cite{Yener_Wiretap, SDoF_Broadcast_Confidential, SDoF_MIMO_Wiretap_Su, SDoF_Green}.

 On the other hand, the exponential
growth and non-uniform distribution of network traffic cause  congestion during peak-traffic periods and under-utilization during off-peak times in the network \cite{main_caching_maddah_niesen}.
Caching in distributed memories across the base stations
and end users is an efficient approach to mitigate congestion, reduce
backhaul cost, and manage physical layer interference  \cite{main_caching_maddah_niesen, nader}. Wireless caching has been explored 
%from various perspectives 
%including throughput, power consumption, delay, and spectral efficiency 
in various networks, including heterogeneous networks \cite{hetnet_cache}, 
%cellular networks with small cells \cite{cache_small_cell}, 
D2D networks \cite{cache_D2D}, cloud-radio access networks \cite{content_tao, cache_bafghi, cache_abarghooyi}, and adversarial channels \cite{zamani2025cache}. 
In the following, we provide an overview of the information-theoretic literature on cache-aided wireless networks, with a particular emphasis on cache-aided interference channels. 

An information-theoretic approach toward caching, also known as coded caching, has been initially introduced by Maddah-Ali and Niesen in an error-free broadcast channel (BC) with one server and multiple cache-enabled users \cite{main_caching_maddah_niesen}. Their work has shown that cache memories at the receivers can offer significant global gains, in addition to the existing local caching gains. The authors then extended coded caching to a distributed network, achieving a close-to-optimal rate compared to centralized methods \cite{maddah2014decentralized}. In \cite{shariatpanahi2016multi}, coded caching has been studied in scenarios with multiple servers, resulting in significant performance enhancements. 
%Given the popularity of files, \cite{pop_cache1, pop_cache2, pop_cache3} presented near-to-optimal schemes in terms of average traffic load performance.

Caching at transmitters has also been explored in \cite{Madda_niesen_interfernce_33, simeone, niesen_caching_DoF, nader, naderializadeh2019cache, Tao_Fundamental, Tao_partial}, enabling effective interference management, thanks to the potential for cooperation among transmitters. The idea was first introduced in a $3 \times 3$ channel, where only transmitters had cache memories \cite{Madda_niesen_interfernce_33}. Subsequently, it was extended to any number of cache-enabled transmitters and receivers in \cite{simeone}. Cache-aided interference channels, where both transmitters and receivers are equipped with cache memories, have been also studied in high signal-to-noise ratio (SNR) regimes, with DoF or normalized delivery time (NDT) as the performance metric \cite{niesen_caching_DoF, nader, naderializadeh2019cache, Tao_Fundamental, Tao_partial}. In \cite{niesen_caching_DoF}, a coded caching scheme was investigated, based on the separation of the physical layer and network layer. In \cite{nader}, linear one-shot schemes were proposed, achieving a sum-DoF within a constant factor of $2$ optimality in a general network setting with caches at both transmitters and receivers. The authors in \cite{naderializadeh2019cache} studied this problem in a wireless cellular network 
%, characterizing the DoF per cell within an additive gap of $\frac{1}{3}$ and a multiplicative
%gap of $2$ for all system parameters, 
under one-shot linear schemes. In \cite{Tao_Fundamental}, lower and upper bounds on the NDT were derived, with the proposed scheme achieving optimality in certain cache size regimes and a bounded multiplicative gap elsewhere. The idea of coded caching was also explored in 
%\textcolor{blue}{MIMO networks \cite{caching_MIMO_fundamental}} 
partially-connected networks, including linear networks \cite{Tao_partial, shariatpanahi2016multi}, random networks \cite{random_topology}, and flexible networks \cite{shariatpanahi2016multi}.

The distinct potentials of IRSs and caching for interference management naturally raise the following question: Can their simultaneous use provide synergistic gains beyond what either technique can achieve alone? To address this question, we study a cache-aided interference channel, where cache memories are deployed both at the transmitters and receivers, in the presence of an active
%\footnote{The use of an active IRS is motivated by its enhanced capabilities compared to a passive IRS, which suffers from double-fading effects \cite{active_IRS, active_IRS_2}, and because its analysis can serve as a basis for the passive case \cite{bafghi_TCom, bafghi_xnetwork}. Further explanation is provided in Remark \ref{remark 1}.} 
IRS. We aim to derive the achievable sum-DoF of this channel by exploiting transmitters’ cooperation, cache contents (as side information), interference alignment, and IRS capabilities, \textcolor{black}{based on} the network parameters. %To the best of our knowledge, this is the first work to explore the combined utilization of caching and IRSs for interference management. 

For the system under study, we summarize the main contributions of this paper as follows:
\begin{itemize}
    \item First, we introduce a general wireless interference channel with caches deployed at all transmitters and receivers assisted by an active IRS. The IRS enables the elimination of undesired cross-links, while caching offers the potential for transmitters' cooperation and the utilization of cached data as side information for interference management.
    \item Then, we propose a one-shot achievability scheme with a closed-form sum-DoF of the network, followed by the derivation of an upper bound. To this end, we first present our results for the scenario where there is no overlap between the caches of transmitters, i.e., when the cache size at the transmitters is minimum, and for any cache sizes at receivers. Next, we extend the results for larger caches at transmitters and any size of caches at receivers within certain network sizes.
    In that regard, we meticulously design the cache placement and delivery phase, incorporating carefully chosen IRS coefficients, beamforming coefficients, and packet scheduling.
    %add comment about non-overlaping group - use ideas from Niu et al. paper
    \item Finally, numerical results are provided to compare the performance of the proposed schemes with the seminal works that have utilized caching for interference management. Our results show that the proposed schemes can notably increase the sum-DoF, especially when the cache sizes are inadequate at transmitters. 
    It also demonstrates that the maximum possible sum-DoF for many parameter regimes of interest can be achieved with a sufficient number of IRS elements, which was not achievable in the schemes without IRSs.
    
\end{itemize}

The most closely related work to our study is \cite{caching_IRS_Caire}, which considers a passive IRS-assisted network consisting of a \textcolor{black}{single} multiple-antenna server and \textcolor{black}{multiple} single-antenna cache-aided users
%. In particular, the authors 
and proposed a grouping algorithm to address a combinatorial optimization problem to achieve the maximum DoF. 
%Here, we obtain closed-form expressions of sum-DoF in a multi-transmitter setup, where caching is also implemented at the transmitters. Our analysis includes scenarios in which the caches of transmitters do not overlap, partially overlap, and fully overlap. 
%Additionally, we study the cases where the number of IRS elements is insufficient to achieve the maximum DoF. 
%\color{blue}
The main differences between our work and \cite{caching_IRS_Caire} are as follows. First, we study an active IRS, whereas \cite{caching_IRS_Caire} considers a passive IRS, leading to different 
%IRS-induced connectivities, 
interference-nulling mechanisms and IRS size requirements. Second, the system models differ fundamentally, i.e., \cite{caching_IRS_Caire} focuses on a  multiple-input single-output (MISO) broadcast channel with no direct links, while we consider a $K_T \times K_R$ single-antenna interference channel allowing for the presence of direct links. Third, in contrast to \cite{caching_IRS_Caire}, which assumes full library access at the transmitter, we consider transmitter-side caching with arbitrary overlap determined by the cache size.  Finally, the coded caching schemes are different, with \cite{caching_IRS_Caire} relying on multiple-antenna placement delivery arrays-based coded caching \cite{MAPDA}, while our approach builds on the caching framework in \cite{nader} and incorporates the IRS into the system using techniques from \cite{bafghi_TCom, bafghi_xnetwork}. Overall, due to these differences,
%in IRS architecture, channel model, caching assumptions, and the achievability schemes, 
the two works address complementary but distinct problems and fill different gaps in the literature.

\color{black}

%\textcolor{blue}{
A comparison of our work and the existing works on the DoF of interference channels assisted by IRSs can be found in Table \ref{tab:comparison}.
%}%add ref

%%%%%%%%%%%%%%%%%%%%%%%%table
\begin{table*}[t]%\color{blue}
\centering
\caption{Comparison of this work with related works on the DoF of interference channels assisted by IRSs.}
\label{tab:comparison}

\setlength{\tabcolsep}{4pt}
\renewcommand{\arraystretch}{1.15}
\footnotesize

\begin{tabular}{c cc cc cc c c cc p{2.9cm}}
\toprule
\textbf{Ref.} &
\multicolumn{2}{c}{\textbf{Caching}} &
\multicolumn{2}{c}{\textbf{IRS}} &
\multicolumn{2}{c}{\textbf{Multi-ant.}} &
\textbf{Scheme} &
\textbf{Security} &
\textbf{Lower} &
\textbf{Upper} &
\textbf{Channel} \\
\cmidrule(lr){2-3}\cmidrule(lr){4-5}\cmidrule(lr){6-7}
& \makecell{\textbf{Tx}} & \makecell{\textbf{Rx}}
& \makecell{\textbf{Active}} & \makecell{\textbf{Passive}}
& \makecell{\textbf{Tx}} & \makecell{\textbf{Rx}}
& & & & & \\
\midrule

\cite{bafghi_TCom} 
& \xmark & \xmark & \cmark & \cmark & \xmark & \xmark
& SE & \xmark & \cmark & \cmark & K-user IC \\

\cite{Yu_nulling_interference} 
& \xmark & \xmark & \xmark & \cmark & \xmark & \xmark
& OS & \xmark & \cmark & \xmark & K-user IC \\

\cite{bafghi_xnetwork} 
& \xmark & \xmark & \cmark & \cmark & \xmark & \xmark
& SE & \xmark & \cmark & \cmark & X-network \\

\cite{MIMO_D2D}
& \xmark & \xmark & \xmark & \cmark & \cmark & \cmark
& SE & \xmark & \cmark & \xmark & MIMO D2D IC \\

\cite{DoF_Modulation_IRS} 
& \xmark & \xmark & \xmark & \cmark & \cmark & \cmark
& SE & \xmark & \cmark & \cmark & MIMO IC \\

\cite{zheng2022dof} 
& \xmark & \xmark & \xmark & \cmark & \cmark & \cmark
& OS & \xmark & \cmark & \xmark & $2 \times 2$ MIMO IC \\

\cite{K_User_MIMO} 
& \xmark & \xmark & \cmark & \cmark & \cmark & \cmark
& OS & \xmark & \cmark & \xmark & K-user MIMO IC \\

\cite{Rank_deficient_MIMO} 
& \xmark & \xmark & \cmark & \xmark & \cmark & \cmark
& OS & \xmark & \cmark & \cmark & K-user MIMO IC \\

\cite{jiang2025dofanalysisbeamformingdesign} 
& \xmark & \xmark & \cmark &  \xmark   & \cmark & \cmark 
& OS & \xmark & \cmark & \cmark  & K-user MIMO IC\\

\cite{RIS_Cooperative_IA_MIMO} 
& \xmark & \xmark & \xmark & \cmark & \cmark & \cmark
& SE & \xmark & \cmark & \xmark & K-user MIMO IC  \\

\cite{Iterative_IA} 
& \xmark & \xmark & \cmark & \xmark  & \cmark  & \cmark 
& OS & \xmark & \cmark & \xmark &  K-user MIMO IC\\

\cite{noncoherentmimo} 
& \xmark & \xmark & \xmark & \cmark & \cmark & \cmark
& SE & \xmark & \cmark & \xmark & NC MIMO IC\\

\cite{Yener_Wiretap}
& \xmark & \xmark & \cmark & \xmark & \cmark & \cmark
& OS & \cmark & \cmark & \cmark & MIMO WIC \\

\cite{SDoF_Broadcast_Confidential} 
& \xmark & \xmark & \cmark & \xmark & \cmark & \cmark
& OS & \cmark & \cmark & \cmark & 2-user MIMO BC \\

\cite{SDoF_MIMO_Wiretap_Su} 
& \xmark & \xmark & \cmark & \xmark & \cmark & \cmark
& OS & \cmark & \cmark & \cmark & 2-user MIMO WIC \\

\cite{SDoF_Green} 
& \xmark & \xmark & \xmark & \cmark & \cmark & \cmark 
& SE & \cmark & \cmark & \cmark & MIMO WIC \\

\cite{caching_IRS_Caire} 
& \xmark & \cmark & \xmark & \cmark & \cmark & \xmark
& OS & \xmark & \cmark & \cmark & K-user MISO BC \\

%\midrule
%gray!15
\rowcolor{gray!15} {Ours}
& \cmark & \cmark & \cmark & \xmark & \xmark & \xmark
& {OS} & \xmark & \cmark & \cmark & $K_T \times K_R$ IC \\
\bottomrule
\end{tabular}

\vspace{2mm}
{\footnotesize
SE: symbol-extension-based; OS: one-shot; W: Wiretap; IC: interference channel; NC: non-coherent.
}
\end{table*}

%%%%%%%%%%%%%
\textit{Organization}: The remainder of this paper is organized as follows. Section \ref{sec2sysmodel} presents the system model and problem formulation. \textcolor{black}{Section \ref{Main_Results_and_Discussions} presents the main results and discusses their implications. This is  followed by the achievability scheme in Section \ref{achivability_Schemes:sec}.} 
%and deriving an upper bound in Section \ref{upperbound_section}.
Numerical results illustrating the achievability scheme are provided in Section \ref{sec4Ssim}. Finally, Section \ref{sec5con} concludes the paper.
%\textit{Organization}: The remainder of this paper is organized as follows. We present the system model and problem formulation in Section \ref{sec2sysmodel}. \textcolor{blue}{In Section \ref{Main_Results_and_Discussions}, our results, along with discussion on their implications, are presented. This section is followed by the achievability schemes in Section \ref{achivability_Schemes:sec}}. In Section \ref{sec4Ssim}, the numerical results for the achievability results are given. Finally, Section \ref{sec5con} provides the concluding remarks.%adds sth about the new sections added to the manuscript
	
	\textit{Notations}: Symbols $\mathbb{C}$, $\mathbb{N}$, $\mathbb{R}$, and $\mathbb{R}^{+}$ denote the sets of complex, natural, real, and positive real numbers, respectively. We use $[a, b]$ to denote the real numbers between $a, b \in \mathbb{R}$. For $0 \leq k \leq n$, we denote the binomial coefficient with $n \choose k$. Sets and vector spaces are denoted by calligraphic uppercase letters. For a set $\mathcal{A}$, the cardinality, supremum, and infimum are denoted by $|\mathcal{A}|$, $\operatorname{sup}(\mathcal{A})$, and $\operatorname{inf}(\mathcal{A})$, respectively. We show the set $\{1, 2, \ldots, K\}$ by $[K]$ and the set $\{1, \ldots, i-1, i+1, \ldots, K\}$ by $[K]\backslash \{i\}$. The set  $\{i, i+1, \ldots, j\}$ for $j > i$ is shown by $[i:j]$ and the set $\{i, \ldots, k-1, k+1, \ldots, j\}$ for $i < k < j$ is denoted by $[i:j]\backslash\{k\}$. %An ordered partition of the set $\mathcal{A}$ with $M$ subsets is shown by $(A_1, A_2, \ldots, A_M) \vdash \mathcal{A}$.  
    Vectors and matrices are denoted by bold lower-case and upper-case letters, respectively. For a matrix $\mathbf{V}$, $v_{i, j}$ demonstrates the element in the $i$-th row and the $j$-th column. 
    %We show the transpose 
    %and the Hermitian of 
    %matrix $\mathbf{V}$ by $\mathbf{V}^{\top}$.
    %and $\mathbf{V}^{\mathrm{H}}$, respectively. 
    % $\operatorname{diag}\left(v_{1}, \ldots, v_{L}\right)$ is a diagonal matrix,
    % whose diagonal elements are $v_{1}, \ldots, v_{L}$. %For a matrix $\mathbf{V}$, the space spanned by its column vectors is denoted by $\operatorname{span}(\mathbf{V})$ and the corresponding dimension is denoted by $d(\mathbf{V})$, which is equal to the rank of matrix $\mathbf{V}$. 
    %We denote the determinant of a square matrix $\mathbf{V}$ by $\operatorname{det}(\mathbf{V})$. 
    We represent the expected value of a random variable $X$ by $\mathbb{E}[X]$ and the probability of an event $E$ by $\operatorname{Pr}(E)$. Furthermore, $X \sim \mathcal{C} \mathcal{N}(\mu,\sigma^2)$ indicates that $X$ is a complex Gaussian random variable with mean $\mu$ and variance $\sigma^2$. We use $\mathcal{L}(\alpha, \beta)$ to denote the linear combination of $\alpha$ and $\beta$. We also denote the indicator function as $\mathbb{I}(x)$, where $\mathbb{I}(x) = 0$ if $x = 0$, otherwise, $\mathbb{I}(x) = 1$. For the integers $i, j,$ and $m$, we denote $i \oplus_{m} j$ as:
    $
    	i \oplus_m j= 1 + (i+j-1  \bmod m),
    $
    where $\bmod$ is the modulo operation, returning the remainder of a division. Moreover, a function $f(\rho)$ is $o(\log (\rho))$, if $\lim _{\rho \rightarrow \infty} \frac{|f(\rho)|}{\log (\rho)}=0$, \textcolor{black}{and is $O(\rho)$ if there exist constants $c > 0$ and $\rho_0$ such that $|f(\rho)| \le c\,\rho,  \text{ for all } \rho \ge \rho_0$.} %and is $\Theta(\rho)$ if there exist constants $c_1, c_2 > 0$, and $\rho_0$ such that $c_1 \rho \le |f(\rho)| \le c_2 \rho \quad \text{for all } \rho \ge \rho_0$.}

	%%%%%%%%%%%%%%%%%%%%%%%%%%%%%%%%%%%%%%%%%%%%%%%%%%%%%%%%%%%%%%%%%%%%%%%%%%%%%%%%%%%%
	\section{System Model and \textcolor{black}{Preliminaries}}\label{sec2sysmodel}
	% In this part, we start by presenting a broad overview of the problem and the key parameters of the system model. Following this, we provide a detailed explanation of the problem formulation.
	%\color{blue}
    \subsection{System Model}\color{black}\label{systemmodelsubsection}
	%%%%%%%%%%%%%%%%%%%%%%%%%%%%%%%%%%%%%%%%%%%%%%%%%%%%%%%%%%%%%
	%\vspace{0.05cm}
	% \subsection{Problem Overview}
	\begin{figure}%0.2
		\centering
		\begin{overpic}[scale=.28]{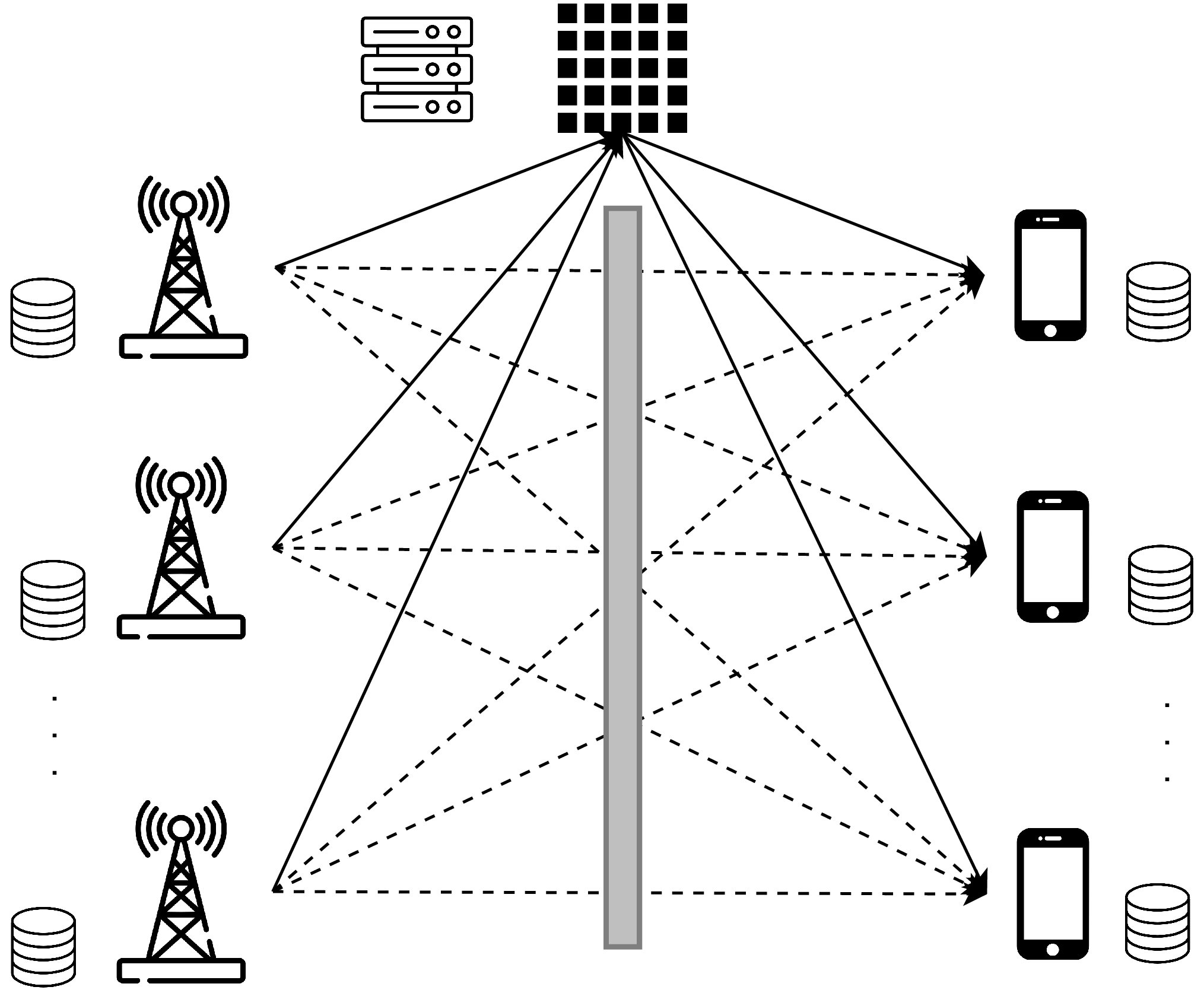}
			% M_T F
			\put(1,65){\scriptsize $M_T$}
			\put(1.25,61){\scriptsize files}
			%\put(0,65){\scriptsize $M_T F$}
			%\put(1.25,61){\scriptsize bits}
			\put(1,41){\scriptsize $M_T$}
			\put(1.25,37){\scriptsize files}
			%\put(0,41){\scriptsize $M_T F$}
			%\put(1.25,37){\scriptsize bits}
			\put(1,13){\scriptsize $M_T$}
			\put(1.25,9){\scriptsize files}
			%\put(0,13){\scriptsize $M_T F$}
			%\put(1.25,9){\scriptsize bits}
			% M_R F
			\put(93.5,66){\scriptsize $M_R$}
			\put(94,62){\scriptsize files}
			%\put(92,66){\scriptsize $M_R F$}
			%\put(93.25,62){\scriptsize bits}
			\put(93.5,42){\scriptsize $M_R$}
			\put(94,38){\scriptsize files}
			%\put(92,42){\scriptsize $M_R F$}
			%\put(93.25,38){\scriptsize bits}
			\put(93.5,14){\scriptsize $M_R$}
			\put(94,10){\scriptsize files}
			%\put(92,14){\scriptsize $M_R F$}
			%\put(93.25,10){\scriptsize bits}
			% Texts
			\put(10,75){\footnotesize Library of}
			\put(11.5,70){\footnotesize $N$ files}
			\put(39,-1){\footnotesize Smooth Blockage}
			\put(46.5,83){\footnotesize IRS (Q)}
			% TX
			\put(13,48){\footnotesize Tx$_1$}
			\put(13,24){\footnotesize Tx$_2$}
			\put(13,-4){\footnotesize Tx$_{K_{T}}$}
			% RX
			\put(85,49){\footnotesize Rx$_1$}
			\put(85,25){\footnotesize Rx$_2$}
			\put(85,-3){\footnotesize Rx$_{K_{R}}$}
			% H
			\put(65,66.5){\footnotesize $\mathbf{H}_{\mathrm{IR}}$}
			\put(32,67){\footnotesize $\mathbf{H}_{\mathrm{IT}}$}
			\put(65,59.5){\footnotesize $\mathbf{H}$}
		\end{overpic}
		\caption{Illustration of a cache-aided interference channel assisted by an active IRS with $Q$ elements, $K_T$ transmitters, and $K_R$ receivers, where each transmitter and receiver caches up to $M_T$ and $M_R$ files, respectively, from a library composed of $N$ files. The gray rectangle indicates a smooth blockage in the direct link, resulting in attenuation of the line-of-sight signals.}
	\label{Fig:System_Model}
	\end{figure}
	As illustrated in Fig. \ref{Fig:System_Model}, we consider a time-selective\footnote{Channel coefficients in different time slots
		are independent.} cache-aided interference channel assisted by a $Q$-element IRS, with $K_T$ single-antenna transmitters denoted by 
	$\left\{\mathrm{Tx}_i\right\}_{i=1}^{K_T}$ and $K_R$ single-antenna
	receivers denoted by 
	$\left\{\mathrm{Rx}_j\right\}_{j=1}^{K_R}$. The received signal at the $j$-th receiver in the $t$-th time slot is denoted by $Y^{[j]}(t)$, which is as follows:
	% \begin{equation}\label{ch1}
	% 	Y^{[j]}(t) =  \sum_{i=1}^{K_T} { H^{[j i]}(t) X^{[i]}(t) + \sum_{u=1}^{Q} H_{\mathrm{IR}}^{[j u]}(t) X_{\operatorname{IRS}}^{[u]}(t)} + Z^{[j]}(t),
	% \end{equation}
    	
	% Now, using \eqref{ch2} and \eqref{ch3}, \eqref{ch1} can be reformulated as follows:
	%{%\small
    \begin{align}\label{ch4}
		Y^{[j]}(t) %&
        =  \sum_{i=1}^{K_T} \Bigl(\underbrace{ {H^{[j i]}(t) + \sum_{u=1}^{Q} H_{\mathrm{IR}}^{[j u]}(t) q^{[u]}(t) H_{\mathrm{TI}}^{[u i]}(t) }}_{\tilde{H}^{[j i](t)}}\Bigl)X^{[i]}(t)%\nonumber %\\&
        + Z^{[j]}(t),
	\end{align}where $X^{[i]}(t)$ indicates the transmitted signal of the $i$-th
	transmitter, %$Y^{[j]}(t)$ denotes the received signal by $j$-th receiver, 
    $H^{[j i]}(t)$ is the direct channel coefficient between the $i$-th transmitter and the $j$-th receiver, $H_{\mathrm{TI}}^{[u i]}(t)$ is the channel coefficient between the $i$-th transmitter and the $u$-th IRS‌ element, $H_{\mathrm{IR}}^{[j u]}(t)$ is the channel coefficient between the $u$-th IRS‌ element and the $j$-th receiver, and $Z^{[j]}(t)$ is the additive white Gaussian noise component with a variance of $1$ at the $j$-th receiver in the $t$-th time slot. 	
	% The received signal at the $u$-th element of IRS in the $t$-th time slot, denoted by $Y_{\operatorname{IRS}}^{[u]}(t)$, can be written as follows:
	% \begin{equation}\label{ch2}
	% 	Y_{\mathrm{IRS}}^{[u]}(t)=\sum_{i=1}^{K_T} H_{\mathrm{TI}}^{[u i]}(t) X^{[i]}(t),
	% \end{equation}
	% where $H_{\mathrm{TI}}^{[u i]}(t)$ is the channel coefficient between the $i$-th transmitter and the $u$-th IRS‌ element for $u \in [Q]$. With this definition, we can write $X_{\operatorname{IRS}}^{[u]}$ as follows:
    Moreover, $q^{[u]}(t)$ is the coefficient applied by the $u$-th IRS element, written as 
	\begin{equation}\label{ch3}
		%X_{\operatorname{IRS}}^{[u]}(t)=q^{[u]}(t) Y_{\operatorname{IRS}}^{[u]}(t)=\rho^{[u]}(t) e^{j \phi^{[u]}(t)} Y_{\operatorname{IRS}}^{[u]}(t),\quad u \in\left[Q\right],
q^{[u]}(t) =\rho^{[u]}(t) e^{j \phi^{[u]}(t)},\quad u \in\left[Q\right],
	\end{equation}
	where 
	$\phi^{[u]}(t) \in[0,2 \pi]$ and
	$\rho^{[u]}(t) \in \mathbb{R}^{+}$
	are the phase shift and amplitude factor applied to the received signal. Furthermore, 
    $\tilde{H}^{[j i]}(t)$ denotes the equivalent channel at time slot $t$. Additionally, the transmitted signal at the $i$-th transmitter is subject to the power constraint denoted by 
	$\mathbb{E}\left[\left|X^{[i]}(t)\right|^2\right] \leq P$.
	%The possible ranges of values of amplitude factor, $\rho^{[u]}(t)$, depend on the type IRS. In that regard, three types of IRS‌ are supposed:
	%\begin{itemize}
	%\item Active IRS: $\rho^{[u]}(t) \in \mathbb{R}^{+},$
	%\item Passive IRS: $\rho^{[u]}(t) \in[0,1],$
	%\item Passive lossless IRS: $\rho^{[u]}(t) = 1,$
	%\end{itemize}
	%As it can be seen, we consider the IRS‌ to be active.
       %\color{blue}
	\begin{remark}\label{remark 1}
		%As mentioned earlier, 
        
        %, allowing $\rho^{[u]}(t)$ to take any positive real value.
        This work focuses on active IRS‌s.
        In particular, 
        conventional passive IRSs suffer from a fundamental performance bottleneck caused by the multiplicative (or double-fading) effect, %where the cascaded transmitter–IRS–receiver path loss is the product of the transmitter–IRS and IRS–receiver path losses, 
        which leads to severely attenuated reflected signals and negligible capacity gains when the direct links are not weak \cite{active_IRS, active_IRS_3}. Although this issue can be mitigated by increasing the number of reflecting elements or placing the IRS close to the transmitters/receivers, such solutions lead to large surface sizes, higher circuit power consumption, increased channel estimation overhead, and greater real-time optimization complexity, limiting practical deployment\cite{active_IRS_2, active_IRS_3}. In contrast, active IRSs,
        %To overcome this fundamental limitation, active IRSs have been proposed, 
        where reflecting elements are equipped with active loads (e.g., negative resistance) to amplify the incident signal at the electromagnetic level, 
        %thereby compensating 
        compensate for the double-fading attenuation \cite{active_IRS, active_IRS_2}. By enabling signal amplification in addition to passive beamforming, active IRSs can achieve strengthened IRS-aided links with fewer reflecting elements and reduced physical size under a given power budget, while avoiding the use of complex and power-hungry RF chains and remaining more cost- and energy-efficient than conventional active relays \cite{active_IRS_2, active_IRS_3}. Accordingly, in our system model, as the direct transmitter–receiver links are allowed to exist, the active IRS is deployed to enhance system performance.
   
        %\footnote{We note that there is a power limit for IRS denoted as $\rho_{\operatorname{max}}$. However, since the DoF analysis is conducted in the high SNR regime, where the power constraint tends to infinity, $\rho_{\operatorname{max}}$ is irrelevant to be considered.}, 
        %offering advantages over passive IRSs, particularly by overcoming double fading effects \cite{active_IRS, active_IRS_2}. Moreover, the problem formulated in this paper has not yet been studied in the literature. Considering the IRS as active better highlights the potential of IRSs. In addition, this analysis can later serve as a foundation for the passive case \cite{bafghi_TCom, bafghi_xnetwork}.
        %Moreover, the approach used can be adapted for passive IRSs (similar to \cite{bafghi_TCom, bafghi_xnetwork}). 
		%We also note that active IRSs do not contain RF chains and only amplify the signal
		%Also, note that active IRSs amplify signals without RF chains\cite{active_IRS, active_IRS_2}, introducing a low level of noise, which is negligible compared to the noise caused by the RF chains. In addition, since the considered channel is analyzed in the high SNR regime, the power of the additive Gaussian noise does not affect the DoF \cite{bafghi_TCom}.

	\end{remark}
      \color{black}

	The \textcolor{black}{active} IRS-assisted channel can mitigate unwanted cross-links between transmitters and receivers by adjusting the phase shift and amplitude of IRS elements, a capability unavailable in conventional wireless networks \cite{bafghi_TCom, bafghi_xnetwork}. To delineate the difference, we define the $K_T \times K_R$ network matrix $\mathcal{N}$, which characterizes the network connectivity using binary elements $n_{i,j}$. Specifically, $n_{i,j} = 1$ indicates a non-zero equivalent channel coefficient between the $i$-th transmitter and the $j$-th receiver, while $n_{i,j} = 0$ denotes a zero equivalent channel coefficient. This can be written as 
    %follows:
	%\begin{align}\label{indicator_network_matrix}
		$n_{i, j}=\mathbb{I}\bigl(\tilde{H}^{[j i](t)}\bigl)$.

	In each time slot $t$, the channel coefficients for direct links $(\text{i.e., }H^{[ji]}(t), \forall i, j)$ and the concatenated transmitter-IRS and IRS-receiver $(\text{i.e., }H_{\mathrm{TI}}^{[ui]}\left(t\right) H_{\mathrm{IR}}^{[j u]}\left(t\right), \forall i, j, u)$ of that time slot are all known at the transmitters, receivers, and IRS\footnote{We note that the DoF analysis under imperfect CSI is significantly more challenging than that under perfect CSI and typically requires substantially more complicated analytical tools (see, e.g., \cite{caching_imperfect_CSIT} for cache-aided settings and \cite{Arash1, Arash2} for systems without caching). Extending such analyses to IRS-assisted systems remains largely unexplored. Consequently, the DoF analysis with imperfect CSI %is beyond the scope of this paper but 
represents an important direction for future research.}. Similar to \cite{bafghi_TCom, bafghi_xnetwork}, all channel coefficients $H^{[j i]}(t)$, $H_{\mathrm{IR}}^{[j u]}(t)$, and $H_{\mathrm{TI}}^{[u i]}(t)$ 
	are considered independent random variables for each
	$i$,
	$j$,
	$u$,
	and
	$t$, drawn from a continuous cumulative probability distribution. To ensure this independence, IRS elements must be spaced by more than half a wavelength, and the transmitter-to-IRS and IRS-to-receiver channels should be rich scattering \cite{bjornson2020rayleigh, bafghi_TCom}. 
%    \color{blue}
%     \begin{remark}%[On the global CSI assumption] 
% We note that the DoF analysis under imperfect CSI is significantly more challenging than that under perfect CSI and typically requires substantially more complicated analytical tools (see, e.g., \cite{caching_imperfect_CSIT} for cache-aided settings and \cite{Arash1, Arash2} for systems without caching). Extending such analyses to IRS-assisted systems remains largely unexplored. Consequently, the DoF analysis with imperfect CSI %is beyond the scope of this paper but 
% represents an important direction for future research. 
%     \end{remark}
%    \color{black}
	\begin{remark}
		When the spacing between IRS elements exceeds half the wavelength, channel correlations become negligible and can be considered independent \cite{IRS_applications, MISO_Schober, RIS_energy_efficiency, bafghi_TCom}. However, spacing beyond $\frac{\sqrt{2}}{2}$-wavelength  significantly degrades performance due to inefficient aperture use and grating lobes \cite{phased_array_paradox, bafghi_TCom}. This creates a trade-off. This work assumes half-wavelength spacing for simplicity,  leading to negligible channel correlations and grating lobes.  
	\end{remark}
	
	We assume that each receiver requests an arbitrary file from a library comprising $N$ files $\mathcal{W}=\left\{W_{k}: k \in[N]\right\}$.
 %, transmitted by the transmitters. 
	Each file $W_{k}$ contains $F$ packets $\left\{\mathbf{w}_{k, f}\right\}_{f=1}^F$,  each containing $B$ bits, i.e., $\mathbf{w}_{k, f} \in \mathbb{F}_2^B$. Additionally, it is assumed that each transmitter and receiver stores a part of the library before the receivers' demands are revealed. The cache size is $M_T F$ packets for the transmitters and $M_R F$ packets for the receivers. 
	
    The network functions in two distinct phases: the prefetching phase and the delivery phase, explained as follows:
	
	\textit{Prefetching Phase}: During this phase, 
    %each transmitter arbitrarily selects a subset of the packets from the library's files based on its cache size and stores them. Specifically, 
    each transmitter $i$ designs a caching function 	$\phi_{i}$ that maps  
	$\mathcal{W}_F = \left\{\mathbf{w}_{k, f}: k \in [N], f \in [F]\right\}$
	into its cache while satisfying the cache size constraint as:
	\begin{equation}\label{memory_func_tx}
		U_{i} = \phi_{i}(\mathcal{W}_F), \quad |U_{i}| \leq M_{T} F, \quad \forall i \in [K_T].
	\end{equation}
	% where it should be consistent with the cache size constraint, which is expressed as:
	% \begin{equation}\label{Tx_constraint}
	% 	%	H\left(U_{i}\right) \leq M_{T} F,\quad \forall i \in [K_T],
	% 	|U_{i}| \leq M_{T} F,\quad \forall i \in [K_T].
	% \end{equation}
	
	Likewise, each receiver 
    %based on its cache size constraint, selects an arbitrary subset of packets from the library's files. Similarly, each receiver 
    designs a caching function $\zeta_j$ that maps 	$\mathcal{W}_F$ into its cache, consistent with the cache size constraint, as follows: 
	\begin{equation}
		V_{j} = \zeta_{j}(\mathcal{W}_F), \quad |V_{j}| \leq M_{R} F, \quad \forall j \in [K_R].
	\end{equation}
	% where the cache constraint is defined as:
	% \begin{equation}\label{Rx_constraint}
	% 	|V_{j}| \leq M_{R} F,\quad \forall j \in [K_R].
	% \end{equation}
	
	Caching functions are known at all nodes and the IRS. Similar to \cite{nader}, in this paper, caching occurs at the packet level. Moreover, all transmitters and receivers use uncoded prefetching, not being allowed to cache combinations of multiple packets as a single packet. Note that caching takes place regardless of the receivers' demands in the future. Additionally, content placement ensures that each packet is cached in at least one transmitter, removing the backhaul cost and latency. To achieve this, we need to have
	%\begin{equation}\label{cache_all_lib}
		$K_T M_T F \geq NF$.
	%\end{equation}
	
	We also consider that $M_R < N$, preventing receivers from caching the entire library, as there would be no need to formulate this problem otherwise. Moreover, we assume $N \geq K_R$, enabling each receiver to request a distinct file and allowing us to analyze the worst-case scenario.
	
	\textit{Delivery Phase}: During this phase, each receiver $j \in [K_R]$ requests an arbitrary file $W_{d_j}$ from the library. We denote the
	demand vector as $\mathbf{d} \triangleq(W_{d_1}, W_{d_2}, \ldots, W_{d_{K_R}}) \in[N]^{K_{R}}$. Then, the transmitters send the requested files. Note that, as determined during the prefetching phase, each receiver may have already cached some packets of its desired file. As a result, the transmitters need to transmit only the uncached packets of each file to the corresponding receiver.%, leveraging the files they have cached and the assistance of the IRS. 

More specifically and similar to \cite{nader}, each transmitter uses a random Gaussian coding scheme $\psi: \mathbb{F}_2^B \rightarrow \mathbb{C}^{\tilde{B}}$ with a rate of $\log P+o(\log P)$ to encode each of its cached packets into a coded packet consisting of $\tilde{B}$ complex symbols. Thus, each coded packet carries one
	DoF. The coded version of the packet 	$\mathbf{w}_{k, f}$ is denoted as 	$\tilde{\mathbf{x}}^{}_{k, f} = \tilde{\mathbf{w}}_{k, f} \triangleq \psi\left(\mathbf{w}_{k, f}\right)$. Subsequently, communication takes place over $H$ blocks, with each block comprising $\tilde{B}$ time slots. In each block $m \in [H]$, a subset of requested packets ($\mathcal{D}_m$) is transmitted interference-free to a subset of receivers ($\mathcal{R}_m$) with the aid of the active IRS.
	
	Particularly, during block $m \in [H]$, each transmitter $i$ is assumed to employ a one-shot linear scheme to transmit a linear combination of the cached coded packets in	$\mathcal{D}_m$, denoted as 	$\mathbf{x}^{[i]}(m) \in \mathbb{C}^{\tilde{B}}$, as follows:
	\begin{equation}\label{linear_comp_tx}
		\mathbf{x}^{[i]}(m) =\sum_{\substack{(k, f): \\ \mathbf{w}_{k, f} \in {U}_i \cap \mathcal{D}_m}} \tilde{\mathbf{V}}^{[i]}_{k, f}(m) \tilde{\mathbf{x}}^{}_{k, f},
	\end{equation}
	where 		$\tilde{\mathbf{V}}^{[i]}_{k, f}(m) \in \mathbb{C}^{\tilde{B} \times \tilde{B}}$ is a diagonal matrix with
	$\tilde{B}$ complex beamforming coefficient in each block 		$m \in [H]$ as its elements. Each transmitter $i$‌ can send a linear combination of the coded packets using this matrix.
	
	In each block $m \in [H]$, the received signal at each receiver $j$ from the subset 	$\mathcal{R}_m$ is given as follows:
	\begin{equation}
		\mathbf{y}^{[j]}(m)=\sum_{i=1}^{K_T} \tilde{\mathbf{H}}^{[ji]}(m) \mathbf{x}^{[i]}(m)+\mathbf{z}^{[j]}(m),
	\end{equation}
	where	$\tilde{\mathbf{H}}^{[ji]}(m) \in \mathbb{C}^{\tilde{B} \times \tilde{B}}$ is the diagonal matrix composed of the channel coefficients between transmitter $i$ and receiver $j$ and 	$\mathbf{z}^{[j]}(m) \in \mathbb{C}^{\tilde{B}}$ is the Gaussian noise. Then, each receiver $j$ forms a linear combination of its received signal and its cache contents, denoted by 
	%\begin{equation}\label{linear_comb}
		$\mathcal{L}_{j, m}(\mathbf{y}^{[j]}(m), {V}_j)$
        %,
	%\end{equation}
	to decode its desired coded packet 	$\tilde{\mathbf{w}}_{d_j, f}$. 
	
	The transmission of packets $\mathcal{D}_m$ during block $m \in [H]$ is successful if, at the receiver's side, there exists a linear combination 
    %as described in \eqref{linear_comb} 
    such that:
	\begin{equation}\label{channel_created}
		\mathcal{L}_{j, m}\left(\mathbf{y}^{[j]}(m), {V}_j\right) = \tilde{\mathbf{w}}_{d_j, f} + \mathbf{z}^{[j]}(m).
	\end{equation}
	
	The channel described in \eqref{channel_created} represents a point-to-point channel with a $\log P+o(\log P)$ capacity. Since each packet $\tilde{\mathbf{w}}_{d_j, f}$ is coded with the same rate, it can be decoded with a vanishing error probability as $B$ increases. 
%	Note that the communication continues for $H$ blocks until all requested packets are successfully delivered to their intended receivers.
	
	Note that the one-shot linear sum-DoF of 	$|\mathcal{D}_m|$ is achievable in each block $m \in [H]$ as each packet carries one DoF. 
    %Therefore, sum-DoF $\frac{|\cup_{m = 1}^{H} \mathcal{D}_m|}{H}$ is achievable over the course of $H$ communication blocks. In this context, 
    Hence, for a given caching realization, we define the one-shot linear sum-DoF as the maximum achievable one-shot linear sum-DoF for the worst-case demands, as follows:
	\begin{equation}\label{DoF_def}
		\mathrm{DoF_{sum}}(M_T, M_R)
		%			(\{U_i\}_{i = 1}^{K_T}, \{V_j\}_{j = 1}^{K_R}) 
		= \inf _{\mathbf{d}} \sup _{H,\left\{\mathcal{D}_m\right\}_{m=1}^H} \frac{\left|\bigcup_{m=1}^H \mathcal{D}_m\right|}{H}.
	\end{equation}
		
	Note that $\mathrm{DoF_{sum}}$ provides an approximate capacity of the system in a high SNR regime, which is accurate within $\operatorname{o}(\log(P))$. In the following, we explore the trade-off between the caching parameters $\left(M_T, M_R\right)$, the number of IRS elements $Q$, and the sum-DoF $\mathrm{DoF_{sum}}$.

	 %\color{blue}
\subsection{Preliminaries}\label{preliminaries_sec}
In the following, we highlight two results that are used throughout the paper. The first is a lemma on changing the network topology using an active IRS. The second is a combinatorial lemma that is employed in the achievability proofs.
% \subsubsection{Active IRS Preliminaries}
\begin{lemalpha}\label{lemma_bafghi} Given the network described in Section \ref{systemmodelsubsection}, an active IRS with $Q$ elements, with $u$-th element  modeled by     $q^{[u]}(t) =\rho^{[u]}(t) e^{j \phi^{[u]}(t)}$ for
    $u \in [Q]$,
	$\phi^{[u]}(t) \in[0,2 \pi]$, and
	$\rho^{[u]}(t) \in \mathbb{R}^{+}$, is able to change the topology  by realizing a network matrix $\mathcal{N} = [n_{i, j}]$, which has $Q$ zero elements with probability equal to $1$.
    \end{lemalpha}
    \begin{proof}
        Please refer to \cite{bafghi_TCom, bafghi_xnetwork}. 
    \end{proof}

% \subsubsection{Combinatorial Preliminaries}
%Before presenting the achievability scheme of Theorem \ref{Theorem_RIS_without_extended_symbol_tau_geq_1}, 
% In this part, we briefly introduce the Baranyai's theorem using set-theoretic terminology \cite{Baranyai_main}. In particular, Baranyai’s theorem guarantees the existence of the combinatorial partitions required for the achievability scheme in Theorem 2.
   % \color{blue}
    \begin{lemalpha}[Baranyai's theorem]\label{Baranyai's Theorem}
Let $2 \leq k<n$ be integers such that $k$ divides $n$. There exist $\binom{n-1}{k-1}$ partitions of $[n]$ into $k$-element subsets such that no subset appears in more than one partition.
\end{lemalpha}

\begin{proof}
    The proof can be found in \cite{Baranyai_main} or  in standard combinatorics textbooks such as \cite{combinatorics_book}.
\end{proof}
	We refer to this collection of $\binom{n-1}{k-1}$ partitions as a $(\frac{n}{k}, k)$-subset partition. To illustrate Baranyai's theorem, consider the following example.
    % \begin{example*}
    %     For $k = 2$ and $n=4$, there exist $\binom{n-1}{k-1} = \binom{3}{1}$ partitions of $[n]=[4]$ into $2$-element subsets such that no subset appears in more than one partition. These partitions are as follows:
    %     $$
    % \biggl\{ \Bigl\{\{1, 2\}, \{3, 4\}\Bigl\}, \Bigl\{\{1, 3\}, \{2, 4\}\Bigl\}, \Bigl\{\{1, 4\}, \left\{2, 3\right\}\Bigl\}\biggl\}.
    % $$
    % \end{example*}
        \begin{example*}
        For $k = 2$ and $n=6$, there exist $\binom{n-1}{k-1} = \binom{5}{1}$ partitions of $[n]=[6]$ into $2$-element subsets such that no subset appears in more than one partition. One example of a $(3, 2)$-subset partition is given by:
        %These partitions are as follows:
      \begin{align*}
			\biggl\{&\Bigl\{
			\{1, 2\}, \{3, 4\}, \{5, 6\}\Bigl\},
			\Bigl\{
			\{1, 3\}, \{2, 5\}, \{4, 6\}\Bigl\}, \Bigl\{
			\{1, 4\}, \{2, 6\}, \{3, 5\}\Bigl\}, \\
			&
			\Bigl\{
			\{1, 5\},\{2, 4\}, \{3, 6\}\Bigl\},%\\&
            \Bigl\{
			\{1, 6\}, \{2, 3\}, \{4, 5\}\}\Bigl\}\biggl\}.
		\end{align*}	
    \end{example*}
	%%%%%%%
   % \color{blue}
    	\section{Main Results and Discussion}
           \label{Main_Results_and_Discussions}
        \color{black}
        \textcolor{black}{In this section, we present our main results on the one-shot sum-DoF of the network and discuss their implications.} To this end, we define the parameters ${\mu_T} \triangleq \frac{K_{T}M_{T}}{N}$ and ${\mu_R} \triangleq \frac{K_{R}M_{R}}{N}$
	at the transmitter and receiver sides, respectively. In the following, we first present the results for the case ${\mu_T} = 1$, and then extend them to the case $\mu_T > 1$. 
     %Note that the achievability result of the former is a lower bound for the latter. 
    
\begin{theorem}\label{Theorem_RIS_without_extended_symbol}
		Consider a cache-aided interference network with $K_T$ transmitters, $K_R$ receivers, ${\mu_T} = 1$, and ${\mu_R} \in [K_R-1]$,  assisted by an IRS‌ with $Q = L (L+1) $ elements, $L \in [\min\{K_T-1, K_R-1\}]$. Then, the one-shot sum-DoF of 
		$\mathrm{DoF_{sum}} = \operatorname{min}\{{\mu_R}+1+L, K_R\}$
		is achievable.
	\end{theorem}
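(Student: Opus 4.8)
The plan is to build a one-shot linear delivery scheme on top of the combinatorial placement already fixed, and to show that each communication block can serve $d \triangleq \min\{\mu_R+1+L,\,K_R\}$ receivers interference-free, so that after all demanded subfiles in \eqref{send_general_tau1} are delivered the ratio in \eqref{DoF_def} equals $d$. The placement guarantees that every subfile $W_{d_j,\{i\},\mathcal{R}}$ originates at a single transmitter $i$ (since $\mu_T=1$) and is already stored at the $\mu_R$ receivers in $\mathcal{R}$. Concentrating on the regime $\mu_R+1+L<K_R$, I would fix in each block $m$ a set of exactly $L+1$ \emph{active} transmitters and a set $\mathcal{R}_m$ of $d$ served receivers, and assign to each served receiver $j$ one fresh subfile whose caching set $\mathcal{R}\subseteq\mathcal{R}_m\setminus\{j\}$ lies inside the served group. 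Each active transmitter then sends the diagonally beamformed sum of the subfiles assigned to it.

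Next I would classify the interference seen by a served receiver $j$: a subfile intended for another served receiver is either cancellable at $j$ through its cache (precisely when $j$ lies in that subfile's caching set) or must be removed by the IRS. The design target is to choose the assignment so that, for \emph{each} active transmitter, the served receivers at which its transmitted sum is neither desired nor cache-cancellable number exactly $L$; nulling the equivalent channel $\tilde{H}^{[ji]}$ for those receivers then clears all of that transmitter's interference at once. With $L+1$ active transmitters this amounts to nulling exactly $(L+1)L=Q$ equivalent cross-links per block, leaving every served receiver with only its desired stream plus already-known cacheable terms. I would then verify decoding directly: after cache subtraction and IRS nulling, receiver $j$ observes $\tilde{\mathbf{H}}^{[j i_j]}(m)\,\tilde{\mathbf{V}}^{[i_j]}_{d_j,f}(m)\,\tilde{\mathbf{x}}_{d_j,f}$, which realizes the clean point-to-point channel of \eqref{channel_created}, so one DoF is delivered to each served receiver and $|\mathcal{D}_m|=d$.

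The feasibility of the required nulling is exactly where the budget $Q=(L+1)L$ is consumed. For the $Q$ targeted pairs $(j,i)$, imposing $\tilde{H}^{[ji]}(t)=0$ gives, in each time slot, a linear system in the $Q$ IRS coefficients $q^{[u]}(t)$ whose coefficient matrix has entries $H_{\mathrm{IR}}^{[ju]}(t)H_{\mathrm{TI}}^{[ui]}(t)$, as seen from \eqref{indicator_network_matrix}. I would argue, following \cite{bafghi_TCom, bafghi_xnetwork}, that because all channel coefficients are drawn independently from continuous distributions this $Q\times Q$ matrix is almost surely invertible, and that the desired equivalent channels $\tilde{H}^{[j i_j]}$ stay nonzero under the resulting solution; hence the needed topology $\mathcal{N}$ is realizable for generic channels.

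The remaining and, I expect, hardest part is the combinatorial bookkeeping: exhibiting an assignment of active-transmitter sets, served-receiver groups, and caching sets that is internally consistent with the ``exactly $L$ nullings per transmitter'' requirement and, simultaneously, \emph{partitions} the full demanded collection \eqref{send_general_tau1} into such blocks with each subfile delivered exactly once. Only a multiplicity-one partition yields $\bigl|\bigcup_{m}\mathcal{D}_m\bigr|=Hd$ and hence $\mathrm{DoF_{sum}}=d$; a uniform over-cover of multiplicity $\lambda$ would degrade the rate to $d/\lambda$, so the exactness of the decomposition is essential rather than cosmetic. I would construct this partition by specifying one canonical block and generating all others through the natural symmetries of the placement (cyclic/modular shifts of transmitter and receiver indices together with permutations), then checking that the transmitter/receiver incidence counts balance. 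Finally I would treat the cap $\mu_R+1+L\ge K_R$ separately: there every block serves all $K_R$ receivers, each receiver has at most $K_R-1-\mu_R\le L$ uncached interferers, so the nulling demand remains within $Q$ and $d=K_R$ is delivered by the same decoding and feasibility arguments.
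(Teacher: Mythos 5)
Your blueprint is structurally the same as the paper's own proof: the same split into the regimes $\mu_R+1+L\geq K_R$ and $\mu_R+1+L<K_R$, the same per-block architecture with $L+1$ active transmitters, binary (select/not-select) beamforming since $\mu_T=1$, cache cancellation at the $\mu_R$ receivers in each subfile's index set, exactly $L$ nulled cross-links per active transmitter for a total of $(L+1)L=Q$, and the same almost-sure solvability argument for the $Q\times Q$ system in the IRS coefficients (the paper invokes \cite[Lemma 1]{bafghi_TCom} for precisely the equations $\sum_{u}H_{\mathrm{TI}}^{[ui]}H_{\mathrm{IR}}^{[ju]}q^{[u]}=-H^{[ji]}$, $(i,j)\in\tilde{\mathcal{N}}$). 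Your added remark that the desired equivalent channels must remain nonzero after nulling is a point the paper leaves implicit, and it indeed holds almost surely.

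The genuine gap is in the part you yourself flag as hardest and then only sketch: the multiplicity-one scheduling. Two concrete pieces are missing. First, in the regime $\mu_R+1+L<K_R$ your plan assigns each original subfile $W_{d_j,i,\mathcal{R}}$ to one served group, but such a subfile is compatible with $\binom{K_R-\mu_R-1}{L}$ different groups $\mathcal{C}\supseteq\{j\}\cup\mathcal{R}$, and choosing a single group per subfile while keeping every block's assignment balanced is exactly the obstruction your ``canonical block plus symmetries'' plan would run into. The paper dissolves this by a further refinement of the placement: each $W_{d_j,i,\mathcal{R}}$ is split into $\binom{K_R-\mu_R-1}{L}$ smaller subfiles $W_{d_j,i,\mathcal{R},\mathcal{L}}$, one per $L$-subset $\mathcal{L}$ of $[K_R]\setminus(\{j\}\cup\mathcal{R})$, so that every piece belongs to exactly one group $\mathcal{C}=\{j\}\cup\mathcal{R}\cup\mathcal{L}$ of size $\mu_R+L+1$, and the full-DoF scheme is then run on each sub-network $[K_T]\times\mathcal{C}$ in turn; the DoF ratio ${K_T\binom{K_R-1}{\mu_R+L}\binom{\mu_R+L}{\mu_R}}\big/{K_T\binom{K_R}{\mu_R+L+1}\binom{\mu_R+L}{\mu_R}}=(\mu_R+L+1)/K_R$ per user then falls out by symmetry. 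Second, even inside one group the exact schedule must be exhibited, which the paper does in Lemma \ref{main_lemma_theorem_mut1_oneshot}: blocks are indexed by a cache-set index $k_1\in\bigl[1:\binom{K_R-1}{\mu_R}\bigr]$ and a shift index $k_2\in[K_T]$ with active transmitters $S_{k_1,k_2,l_j}=S_{k_1,1,l_j}\oplus_{K_T}(k_2-1)$, and the verification that each receiver's $K_T\binom{K_R-1}{\mu_R}$ subfiles appear exactly once rests on Pascal's rule $\binom{K_R-1}{\mu_R}=\binom{K_R-2}{\mu_R-1}+\binom{K_R-2}{\mu_R}$. A minor additional imprecision: in the cap regime your feasibility check counts ``at most $K_R-1-\mu_R\leq L$ uncached interferers per receiver,'' but summed over receivers this bound exceeds $Q$; the correct accounting is per transmitter--receiver pair (receivers inside $\mathcal{R}$ need no nulling because the lead transmitter's bundled stream is cache-resolvable there), which is what restores the total to exactly $(L+1)L$. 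Until the refinement and the explicit schedule are supplied, the claimed $\mathrm{DoF_{sum}}=\min\{\mu_R+1+L,K_R\}$ is a plausible target rather than an established bound.
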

    \begin{theorem}\label{Theorem_RIS_without_extended_symbol_tau_geq_1}
			Consider a cache-aided interference network with $K_T = M \mu_T$ transmitters, $K_R$ receivers, ${\mu_T} \in [2:K_T]$, and ${\mu_R} \in [K_R-1]$, assisted by an IRS‌ with $Q = \mu_T L(L+1) $ elements, $L \in [\min\{M-1, K_R-1\}]$. Then, the one-shot sum-DoF of 
			$\mathrm{DoF_{sum}} = \operatorname{min}\{{\mu_R}+\mu_T+L, K_R\}$
			is achievable.
		\end{theorem}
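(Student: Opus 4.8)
The plan is to obtain Theorem~\ref{Theorem_RIS_without_extended_symbol_tau_geq_1} by upgrading the single-transmitter-cache scheme of Theorem~\ref{Theorem_RIS_without_extended_symbol} with \emph{cooperative transmit zero-forcing}, which is exactly what promotes the per-subfile interference budget from $1+\mu_R+L$ to $\mu_T+\mu_R+L$. First I would fix the target $D=\min\{\mu_R+\mu_T+L,K_R\}$ and, mirroring the two cases in the proof of Theorem~\ref{Theorem_RIS_without_extended_symbol}, aim to deliver in each communication block one fresh desired subfile, interference-free, to each of $D$ selected receivers. This realizes a point-to-point channel of the form \eqref{channel_created} for each served receiver, hence a per-block sum-DoF of $D$; a symmetric schedule over blocks then lifts this to $\mathrm{DoF_{sum}}=D$. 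When $\mu_R+\mu_T+L\ge K_R$ (case I) every receiver is served in every block, whereas when $\mu_R+\mu_T+L< K_R$ (case II) only a rotating size-$D$ subset of receivers is served, so that the unselected receivers contribute $0$ in that block.

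The heart of the argument is a per-block interference-elimination count. In a block, for each served receiver $j$ the scheme transmits one desired subfile $W_{d_j,\mathcal{S},\mathcal{R}}$ with $|\mathcal{S}|=\mu_T$, $|\mathcal{R}|=\mu_R$, $j\notin\mathcal{R}$, sent cooperatively by the $\mu_T$ transmitters in $\mathcal{S}$. The interference this subfile creates at any other served receiver $j'$ must be annihilated by exactly one of three mechanisms: (i) \emph{cache cancellation} at the $\mu_R$ receivers in $\mathcal{R}$ that store it as side information; (ii) \emph{cooperative transmit zero-forcing}, whereby the $\mu_T$ co-located virtual-antenna transmitters in $\mathcal{S}$ beamform to null the subfile at $\mu_T-1$ prescribed receivers while delivering it to $j$; and (iii) \emph{IRS nulling}, which zeroes the relevant entries of the equivalent channel so that $L$ further receivers do not hear it. Since $1+\mu_R+(\mu_T-1)+L=D$, these budgets exactly tile the intended receiver plus the $D-1$ unintended served receivers, so the design reduces to choosing, for all $D$ subfiles at once, a globally consistent assignment of each ordered interfering pair to one of the three roles.

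Two feasibility facts must then be discharged. For the transmit side I would invoke Definition~\ref{partition_M_tau}: because $K_T=M\mu_T$, an $(M,\mu_T)$-subset partition lets me cycle the active transmitter sets $\mathcal{S}$ through all $\binom{K_T}{\mu_T}$ choices via $\binom{K_T-1}{\mu_T-1}$ partitions into $M$ \emph{disjoint} size-$\mu_T$ subsets, so that within each block the virtual transmitters do not share antennas and each cooperative beamformer can be designed independently; genericity of the continuous channel law then makes the relevant $(\mu_T-1)\times\mu_T$ channel matrices full row rank, so a beamformer nulling the prescribed $\mu_T-1$ receivers exists and generically delivers nonzero signal to $j$ (here $L\le M-1$ guarantees enough disjoint virtual transmitters, and $L\le K_R-1$ keeps the IRS target within the receiver count). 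For the IRS I would set $\{q^{[u]}\}$ so that the prescribed physical cross-links vanish, each null being one linear equation $H^{[j'i]}+\sum_{u=1}^{Q}H_{\mathrm{IR}}^{[j'u]}q^{[u]}H_{\mathrm{TI}}^{[ui]}=0$ in the $Q=(L+1)L$ unknowns $q^{[u]}$; for a generic realization the coefficient matrix has full row rank, so the system is solvable precisely when the number of prescribed nulls is matched to $(L+1)L$ — the same balance that underlies the $\mu_T=1$ case and \cite{bafghi_TCom}. The resulting network matrix $\mathcal{N}$ of \eqref{indicator_network_matrix}, together with the transmit nulls and cache cancellations, leaves every served receiver a clean stream.

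The main obstacle is the coupling of these mechanisms inside a single block: each transmitter's beamformers are shared across every subfile it participates in, while the IRS offers only one global configuration $\{q^{[u]}\}$ that must simultaneously realize all prescribed nulls. The substantive work is therefore to (a) exhibit a combinatorial assignment of the three elimination roles to the $D(D-1)$ ordered interfering pairs that is jointly realizable — using the disjointness from the subset partition to decouple the transmit nulls and the $(L{+}1)$-structure to localize the IRS nulls so their total matches $(L+1)L$ — and (b) prove that the combined beamforming-plus-IRS linear system is generically nonsingular, so that no mechanism is over-subscribed and the diagonal (time-selective, one-shot, no-symbol-extension) structure is preserved. Finally I would verify the schedule: cycling over the $\binom{K_T-1}{\mu_T-1}$ partitions and over the case-II receiver rotations delivers all $K_R\binom{K_T}{\mu_T}\binom{K_R-1}{\mu_R}$ required subfiles of \eqref{send_general_tau1} with equal airtime per receiver, which pins the achieved one-shot sum-DoF at $D=\min\{\mu_R+\mu_T+L,K_R\}$.
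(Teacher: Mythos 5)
You reproduce the paper's skeleton faithfully --- the same two cases, the same three elimination mechanisms with the budget $1+\mu_R+(\mu_T-1)+L$, the $(M,\mu_T)$-subset partition for cycling active transmitter sets, and receiver-subset rotation when $\mu_R+\mu_T+L<K_R$ --- but you stop exactly where the proof has to start: you yourself label items (a) and (b) as ``the substantive work,'' and those items are the entire content of the paper's Appendix B. Concretely, the paper resolves your (a) not by an abstract role-assignment argument but by refining the file splitting: each subfile $W_{d_j,\mathcal{S},\mathcal{R}}$ is split further into ${K_R-\mu_R-1 \choose \mu_T-1}$ pieces $W_{d_j,\mathcal{S},\mathcal{R},\mathcal{T}}$ indexed by the ZF-null set $\mathcal{T}$ (and, in case II, additionally by the IRS-null set $\mathcal{L}$), so that every piece carries its own consistent assignment of cache/ZF/IRS roles; the schedule is then an explicit partition of all pieces into groups of size $K_R$ (Lemma \ref{lemma1_tau_geq1}, verified via Pascal's rule), and per-block beamformer feasibility reduces to the square linear systems of Lemmas \ref{lemma_1RX_without_interference} and \ref{lemma_tplustau_RX_without_interference}. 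Without this finer splitting, your claim that cycling the partitions ``delivers all $K_R{K_T \choose \mu_T}{K_R-1 \choose \mu_R}$ required subfiles with equal airtime'' has no proof. Moreover, you assume an $(M,\mu_T)$-subset partition exists; Definition \ref{partition_M_tau} is only a definition, and the paper devotes a separate branch (ordered-partition prefetching, Lemmas \ref{absense_l1}--\ref{absense_l2}) to the case where no such partition is available --- you provide neither that fallback nor an existence argument.

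The step that would actually fail as you stated it is your IRS budget. Every served subfile is transmitted cooperatively by all $\mu_T$ transmitters of its set, so silencing it at one of its $L$ IRS-designated receivers by link removal costs $\mu_T$ equations of the form $H^{[ji]}+\sum_{u=1}^{Q}H_{\mathrm{IR}}^{[ju]}q^{[u]}H_{\mathrm{TI}}^{[ui]}=0$, one per cooperating transmitter; summing over the $L+1$ active sets and their $L$ designated receivers gives $\mu_T L(L+1)$ prescribed nulls per block, not $L(L+1)$. Hence ``the same balance that underlies the $\mu_T=1$ case'' does not carry over, and your genericity argument only yields solvability when $Q$ is at least the number of prescribed nulls. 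Nor can the deficit be absorbed by the beamformers: each subfile's $\mu_T$ coefficients are exhausted by its single delivery constraint and its $\mu_T-1$ ZF nulls, which is precisely why the paper's Lemmas \ref{lemma_1RX_without_interference} and \ref{lemma_tplustau_RX_without_interference} are square systems with no slack. (The same counting tension appears in the paper's own network-matrix realization in Lemma \ref{lemma2_tau_geq1}, which prescribes $\mu_T L(L+1)$ zero entries while invoking only $Q=L(L+1)$ elements; but regardless of how the paper is read, your proposal asserts the balance rather than proving it, and as stated it is false for $\mu_T\geq 2$.) So item (a) of your plan is not a verification to be ``discharged'' --- it is the crux of the $\mu_T>1$ extension, and your attempt leaves it open.
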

%\color{blue}
In the following, we highlight several discussions related to the results of Theorems~\ref{Theorem_RIS_without_extended_symbol} and~\ref{Theorem_RIS_without_extended_symbol_tau_geq_1}:
\begin{enumerate}
     \item (Discussion of the $K_T = M\mu_T$ constraint for Theorem \ref{Theorem_RIS_without_extended_symbol_tau_geq_1})
     This condition ensures that the transmitters can be partitioned into disjoint groups of size $\mu_T$ in each transmission block, such that no group appears in more than one block (refer to Lemma \ref{Baranyai's Theorem}). If $K_T$ is not divisible by $\mu_T$, we suggest two possible modifications to the scheme. 
     First, 
     %note that the achieved sum-DoF of a network with transmitter normalized cache size ${\mu_T}^\prime$ serves as a lower bound for the sum-DoF of a network with a larger normalized cache size ${\mu_T}$.
     %, provided that all other system parameters remain unchanged. 
     %This follows from the fact that one can always ignore access to additional cached subfiles and either restrict the prefetching and delivery phases accordingly or modify the delivery phase. 
     %Consequently, 
     %for cases where $K_T$ is not divisible by $\mu_T$, 
     one solution is to apply the achievability scheme corresponding to a smaller normalized cache size $\mu_T^{\prime}$, where $1 \leq \mu_T^{\prime} < \mu_T$ and $\mu_T^{\prime} \mid K_T$ by ignoring access to additional cached subfiles and modifying the prefetching and delivery phases accordingly. 
     %In particular, $\mu_T^{\prime} = 1$ always satisfies this condition, implying that the achievable sum-DoF of Theorem 1 constitutes a valid lower bound for Theorem 2. 
     This results in a DoF loss of $\mu_T-\mu_T^{\prime}$. However, the achieved sum-DoF, given by $\operatorname{min}\{\mu_T'+\mu_R+L, K_R\}$, can provide a significant gain over the baseline sum-DoF $\operatorname{min}\{\mu_T+\mu_R, K_R\}$\cite{nader}, depending on the number of IRS elements.
     %, i.e., $Q = \mu_T' L(L+1)$.
Alternatively, 
%the prefetching phase, and subsequently the delivery phase, can be modified when $K_T$ is not divisible by $\mu_T$. Specifically, 
let $K_T = M \mu_T + r$ where $0 < r < \mu_T$. 
%In this case, one possible approach is to 
We can arbitrarily ignore $r$ transmitters and perform the prefetching and delivery phases using the remaining ${K}'_T = M\mu_T$ transmitters. To assess the resulting loss in the achievable sum-DoF, we compare this setting with a baseline employing $M'\mu_T$ transmitters where $M' = M+1$. Since $L \in \operatorname{min}\{M'-1, K_R-1\}$, the maximum loss in the sum-DoF is at most $1$.

     \item (Scaling of sum-DoF with number of elements)
     % In this regard, in the achievability proofs of Theorems~1 and~2, a total of $L+1$ and $\mu_T(L+1)$ transmitters, respectively, are active during each communication block. For each of these transmitters, $L$ links to unintended receivers are removed through appropriate IRS configurations, resulting in a requirement of $Q = L(L+1)$ and $Q = \mu_T L(L+1)$ IRS elements in Theorem 1 and 2, receptively, as stated in the lemma from \cite{bafghi_TCom, bafghi_xnetwork}. The scheduled subfiles, together with the realization of a partially connected network via IRS configuration in each communication block, enable a linear increase in the achievable sum-DoF with respect to $L$ 
     When 
     $\mathrm{DoF_{sum}} < K_R$, 
     %$1+\mu_R+L < K_R$ for Theorem 1 and $\mu_T+\mu_R+L < K_R$ for Theorem 2, 
     the sum-DoF scales linearly with $L$. Since $Q = L(L+1)$ in Theorem \ref{Theorem_RIS_without_extended_symbol} and $Q = \mu_T L(L+1)$ in Theorem \ref{Theorem_RIS_without_extended_symbol_tau_geq_1}, the improvement in sum-DoF as a function of the number of elements scales as $O(\sqrt{Q})$.
     %, which is a consequence of the staircase behavior noted by the reviewer. 
     When $\mathrm{DoF_{sum}} = K_R$,  
     %$\mu_R+1+L \geq K_R$ for Theorem 1 or $\mu_T+\mu_R+L \geq K_R$ for Theorem 2, 
      increasing $Q$ provides no further DoF gain, as the maximum sum-DoF is already achieved. Therefore, in cases where the maximum sum-DoF can be achieved using $Q' \leq Q$ IRS elements, the remaining elements are kept silent. Moreover, %for a non-ideal number of IRS elements, i.e., 
     if no integer $L$ satisfies $Q=L(L+1)$ for Theorem 1 or $Q=\mu_T L(L+1)$ for Theorem 2, we select the largest $Q^{\prime} \leq Q$ of that form. This implies that only $Q^{\prime}$ elements of the IRS are activated. 

    \item (Results in the presence of a passive IRS) Based on Lemma \ref{lemma_bafghi}, a key advantage of an active IRS with amplitude control is that it guarantees the realization of a desired network matrix with probability one using a finite (and small) number of elements, thereby providing a \textit{deterministic} DoF improvement. In contrast, when a passive IRS is employed,
    %the current analysis developed for the active IRSs is not directly applicable to the passive case, and a new analysis is required. This 
    its reduced capabilities cause
    %of passive IRSs, where $q^{[u]}(t) = e^{j \phi^{[u]}(t)}$ for
    %$u \in [Q]$ and
	%$\phi^{[u]}(t) \in[0,2 \pi]$. As a result, 
    the set of feasible network matrices in a given time slot to depend on the realization of the channel coefficients in that slot, which introduces randomness and makes the analysis \textit{probabilistic} \cite{bafghi_TCom, bafghi_xnetwork}. 
    % In that regard, Bafghi et al. derived probabilistic inner and outer bounds for the DoF region, as well as probabilistic lower and upper bounds for the sum-DoF, for the $K$-user interference channel in \cite{bafghi_TCom} and for the X-network in \cite{bafghi_xnetwork} in the presence of passive IRSs, based on the mathematical framework originally developed for active IRSs. They further showed that, in both networks, 
    % the lower bounds on the sum-DoF with a passive IRS asymptotically approach the corresponding achieved results with an active IRS, as the number of IRS elements tends to infinity. More specifically, the sum-DoF in the presence of a passive IRS converges to the value achieved in the active case  as $Q \rightarrow \infty$, with convergence rate of at least $O(\frac{1}{Q})$ for both the $K$-user interference channel and the X-network \cite{bafghi_TCom, bafghi_xnetwork}. 
    %I can mention O(1/Q)
    Due to the same channel assumptions and a similar approach for converting a fully connected interference network into a desired partially connected one as in \cite{bafghi_TCom, bafghi_xnetwork}, our results are expected to hold in a probabilistic sense for a passive IRS. Specifically, under mild conditions on the channel coefficients (e.g., \cite[equations (25)–(29)]{bafghi_TCom} or \cite[Theorem 10]{bafghi_xnetwork}), satisfied by common fading models such as Rayleigh fading, the sum-DoF is expected to converge in probability to that of the active case as $Q \rightarrow \infty$, with a rate of at least $O(\frac{1}{Q})$. A detailed and rigorous analysis for the passive case is left for future work.

% It is worth noting that these channel assumptions, which ensure the validity of the results in the passive case, imply that the distributions of the channel coefficients from the transmitters to the IRS and from the IRS to the receivers are identical across all IRS elements, i.e., they do not depend on index $u \in [Q]$. Common fading models, such as Rayleigh fading, satisfy these assumptions \cite{bafghi_TCom}.
\item (Active IRS power constraint)
 %\color{blue}
 %\begin{remark}[]
        Regarding the assumptions on the IRS amplification, we followed the approach in \cite{bafghi_TCom, bafghi_xnetwork}. Accordingly, we have not explicitly imposed a power constraint $\rho_{\text{max}}$ on the elements of active IRS. This choice is mainly motivated by the DoF analysis being conducted in the high SNR regime, where the transmit power tends to infinity in the DoF definition and, consequently, such a power constraint  may become irrelevant. More precisely, if a power constraint is imposed, e.g., $\left|q^{[u]}(t)\right| = \rho^{[u]}(t)< \rho_{\text{max}}, \forall u \in [Q]$, then the results remain applicable over the fraction of time slots in which the constraint is satisfied, i.e., when $\operatorname{Pr}\left\{\left|q^{[u]}(t)\right| = \rho^{[u]}(t) <\rho_{\text{max}}, \forall u \in [Q]\right\}$ holds. By choosing $\rho_{\text{max}}$ sufficiently large, this probability approaches one. Moreover, due to the imposed bound on the IRS amplification power, only a constant noise term is introduced, which can be neglected in the DoF analysis as the power of signal goes to infinity, i.e., $P \rightarrow \infty$, independently of the IRS power constraint.
 %\end{remark}%\color{black}
    \item (Upper bound)\label{Upper_bound}
Our results can be upper bounded by $\operatorname{min}\{K_T Q+\mu_R, K_R\}$. This follows by considering a genie-aided channel in which all transmitters have access to the entire content library and can fully cooperate in determining their channel inputs, and the IRS is assumed to be more powerful, capable of individually amplifying and forwarding the signals from each transmitter, rather than forwarding the summation of amplified signals. This genie-aided model corresponds to a MISO broadcast channel with a single transmitter equipped with $K_T Q$ antennas and $K_R$ single-antenna receivers, each with a normalized receiver-side cache parameter $\mu_R$. Since providing additional side information and enhanced capabilities cannot reduce the capacity region, the capacity region of our original channel in both Theorems is contained within that of the genie-aided channel. 
%Consequently, any upper bound or optimal DoF result for the genie-aided channel serves as an upper bound for our original system. 
Under uncoded cache placement and one-shot linear delivery, the optimal sum-DoF of the genie-aided channel is $\operatorname{min}\{K_T Q+\mu_R, K_R\}$  \cite{Petros_MIMO}, which therefore constitutes an upper bound for our system.
\end{enumerate}
% \begin{remark}[Upper bound]\label{Upper_bound}
% 			For a cache-aided interference network with $K_T$ transmitters, $K_R$ receivers, ${\mu_T} \in [K_T]$, and ${\mu_R} \in [K_R-1]$, assisted by an IRS‌ with $Q$ elements, the one-shot sum-DoF of the network can be upper-bounded as 
			
% 		\end{remark}
   \color{black}
   % Remark: I can put a  remark that our results are acheivable to the results of maddah when Q = 0, i.e., L = 0.
   %%%%%%%%%%%%%%%%%%%%%%%%%%%%%%%%%%%%%%%%%%%%%%%%%%%%%%%%%%%%%%%%

   %%%%%%%%%%%%%%%%%%%%%%%%%%%%%%%%%%%%%%%%%%%%%%%%%%%%%%%%%%%%%%%%
	\section{Achievability Scheme}
 \label{achivability_Schemes:sec}

	%In this section, we consider the problem in its most general case. To this end, 
     %According to this definition, the size of the transmitters' cache is denoted by $M_T = {\mu_T}\frac{N}{K_T}$, and the size of the receivers' cache is denoted by $M_R = {\mu_R}\frac{N}{K_R}$. 
     %\subsection{Proof of Theorems \ref{Theorem_RIS_without_extended_symbol} and \ref{Theorem_RIS_without_extended_symbol_tau_geq_1}}
     \color{black} 
    In the following, we propose the achievability scheme for the integer\footnote{For the non-integer values, we can use memory-sharing, similar to time-sharing \cite{nader, main_caching_maddah_niesen}.} values of ${\mu_T} \in [K_T]$ and ${\mu_R} \in [K_R-1]$. \textcolor{black}{To this end, since Theorems~\ref{Theorem_RIS_without_extended_symbol} and~\ref{Theorem_RIS_without_extended_symbol_tau_geq_1} share common structure in their prefetching and delivery phases, we first present these common aspects and then explain the rest of the scheme for each theorem separately.} 
	
	\textit{Prefetching Phase}: In this phase, similar to \cite{nader}, each file $W_k$ is partitioned into ${K_T \choose {\mu_T}}{K_R \choose {\mu_R}}$ distinct subfiles. Thus, each file $W_k$ can be represented as follows:
	\begin{equation}%*
		%\resizebox{\linewidth}{!}
        %{$
        W_{k}=\left\{W_{k, \mathcal{S}, \mathcal{R}}: \mathcal{S} \subseteq\left[K_{T}\right],|\mathcal{S}|={\mu_T}, \mathcal{R} \subseteq\left[K_{R}\right],|\mathcal{R}|={\mu_R}\right\}.%$}.
	\end{equation}
	% and the set of all files can be described as:
	% \begin{align}\label{All_subfiles_format1}
	% 	\mathcal{W} &=\{W_{k, \mathcal{S}, \mathcal{R}}: \mathcal{S} \subseteq\left[K_{T}\right],|\mathcal{S}|={\mu_T}, \mathcal{R} \subseteq\left[K_{R}\right],|\mathcal{R}|={\mu_R},\nonumber\\ &k \in [N]\}.
	% \end{align}
	
	Based on the above partitioning, each subfile $W_{k, \mathcal{S}, \mathcal{R}}$ is cached at transmitter $i$, if $i \in \mathcal{S}$. Therefore, each transmitter stores $N {K_T -1 \choose {\mu_T}-1}{K_R \choose {\mu_R}}$ subfiles in its cache memory, which is consistent with the size constraint.
 %    as:
	% $$
	% {N}{{K_T - 1} \choose {\mu_T}-1}{K_R \choose {\mu_R}}\frac{F}{{K_T \choose {\mu_T} }{K_R \choose {\mu_R}}} = M_TF \text{{ packets.}}
	% $$
	Similarly, each subfile $W_{k, \mathcal{S}, \mathcal{R}}$ is cached at  receiver $j$ if $j \in \mathcal{R}$. Likewise, each receiver caches $N {K_T \choose {\mu_T} } {K_R-1 \choose {\mu_R}-1}$ subfiles, satisfying the cache size constraint.
    %It can be easily proved that this caching strategy satisfies the cache size constraint at the receivers' side as follows:
	% $$
	% {N}{{K_T} \choose {\mu_T}}{{K_R-1} \choose {{\mu_R}-1}}\frac{F}{{K_T \choose {\mu_T} }{K_R \choose {\mu_R}}} = M_RF \text{{ packets.}}
	% $$
	
	\textit{Delivery Phase}: In this phase, based on the users' demand vector $\mathbf{d} = (W_{d_1}, W_{d_2}, \ldots, W_{d_{K_R}})$, the transmitters are required to deliver the following $K_{R} {K_{T} \choose {\mu_T}} {K_{R}-1 \choose {\mu_R}}
	$ subfiles:
	\begin{align}\label{send_general_tau1}
		\left\{%&
        W_{d_{j}, \mathcal{S}, \mathcal{R}}: j \in\left[K_{R}\right], \mathcal{S} \subseteq\left[K_{T}\right], |\mathcal{S}|={\mu_T}, \mathcal{R} \subseteq\left[K_{R}\right]\backslash \{j\},%\nonumber\\&
         |\mathcal{R}|={\mu_R}
         \right\}.
	\end{align}
	 
% Based on the content placement phase, we can assume three general types of transmission in our problem:
	
% 	\begin{itemize}
% 		\item $\mu_T = 1$: In this scenario, each transmitter has some distinct messages (equivalently some subfiles) for each receiver. From this perspective, the problem shares similarities with the conventional X-network \cite{jafar_Xchannel}. However, the differences lie in the presence of IRS‌ and the potential of caching at the receivers.  
% 		\item $1 < \mu_T < K_T$: In this scenario, each subfile  $W_{{d_j}, \mathcal{S}, \mathcal{R}}$ is available at  ${\mu_T}$ receivers, making the problem similar to a multiple-input single-output (MISO) X-network consisting of  $K_{T} \choose {\mu_T}$ virtual transmitters, each with ${\mu_T}$ antennas, and $K_R$ receivers  \cite{MIMO_X_Netowrk}. Nevertheless, the key differences are the presence of IRS‌ and caching at the receivers.
% 		\item $\mu_T = K_T$: In this case, the scenario resembles the MISO broadcast channel, where each transmitter stores all the $N$ files in its cache, allowing to fully cooperate in sending the $K_R$ requested files by the receivers. However, the presence of the IRS‌ and access to caches at the receivers introduce key distinctions.  
% 	\end{itemize}
	
\textcolor{black}{In the following, we present the proof sketch of each theorem.}%check
    %\color{blue}
    \subsection{Proof of Theorem \ref{Theorem_RIS_without_extended_symbol}}
    \color{black}
	\begin{sproof}[Sketch of proof] %-- Theorem \ref{Theorem_RIS_without_extended_symbol}]
		%\textcolor{black}{%In the following, we present an outline of the proof of Theorem \ref{Theorem_RIS_without_extended_symbol}. 
        % To this end, 
        We consider two cases: (I) $\mu_R+L+1 \geq K_R$ and (II) $\mu_R+L+1<K_R$. In case (I), we demonstrate the achievability of the maximum possible sum-DoF. Specifically, we first illustrate how the set of requested subfiles can be partitioned into disjoint subsets of size $K_R$, each \textcolor{black}{corresponding to} one receiver, \textcolor{black}{as shown in Lemma \ref{main_lemma_theorem_mut1_oneshot}.} We then explain how the $K_R$ subfiles \textcolor{black}{in each set} can be delivered interference-free to \textcolor{black}{their} intended receivers \textcolor{black}{within a} communication block. \textcolor{black}{This is achieved by realizing a desired partially-connected network through appropriate IRS configuration and beamforming design, as described in Lemmas \ref{lemma_bafghi} and \ref{main_lemma_theorem_mut1_oneshot_2}}.
        % Particularly, we illustrate how $K_R$‌ distinct subfiles in each communication block, together with the IRS‌ coefficients, should be chosen to decode each subfile at the intended receiver interference-free. 
        Since there is no overlap between the caches of transmitters in this case, the beamforming coefficients \textcolor{black}{are restricted to binary values, i.e., $0$ or $1$, indicating whether a subfile is selected for transmission.} In case (II), the same procedure is followed, but only for a subset of receivers. \textcolor{black}{As a result,} in each communication block, \textcolor{black}{only the selected} receivers achieve a DoF of $1$, while \textcolor{black}{the remaining receivers achieve} a DoF‌ of $0$. The complete proof is provided in Appendix A. 
	\end{sproof}

	%%%%%%%%%%
    %new figure for ex1
    \begin{figure}[htbp]%\color{blue}
\centering

% --- Row 1 ---
\begin{minipage}{0.32\linewidth}
    \centering
    \includegraphics[width=\linewidth]{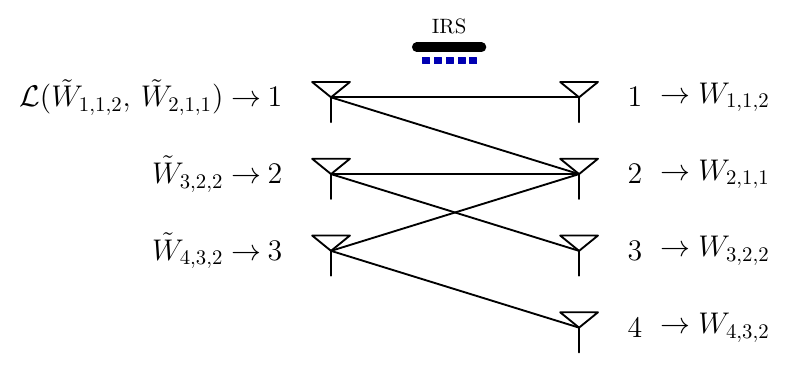}
    \footnotesize(a)
\end{minipage}
\begin{minipage}{0.32\linewidth}
    \centering
    \includegraphics[width=\linewidth]{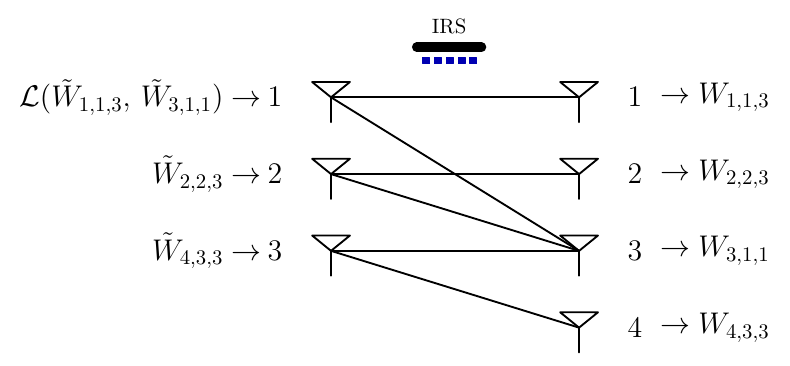}
   \footnotesize (b)
\end{minipage}
\begin{minipage}{0.32\linewidth}
    \centering
    \includegraphics[width=\linewidth]{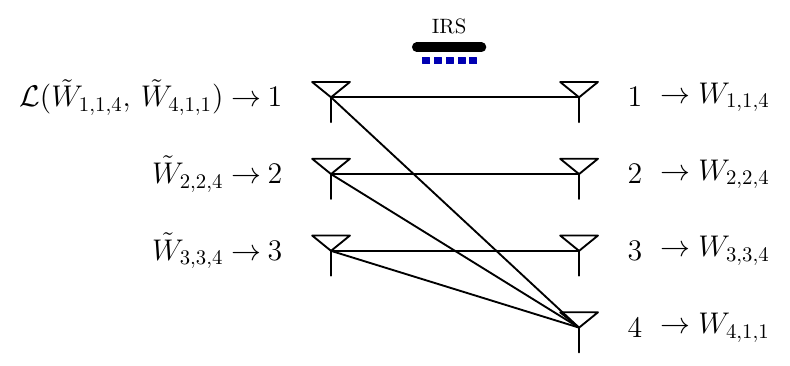}
   \footnotesize (c)
\end{minipage}

\vspace{0.3em}

% --- Row 2 ---
\begin{minipage}{0.32\linewidth}
    \centering
    \includegraphics[width=\linewidth]{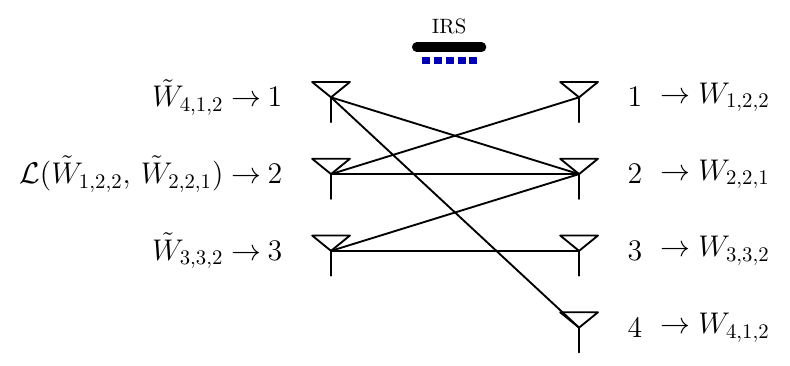}
   \footnotesize (d)
\end{minipage}
\begin{minipage}{0.32\linewidth}
    \centering
    \includegraphics[width=\linewidth]{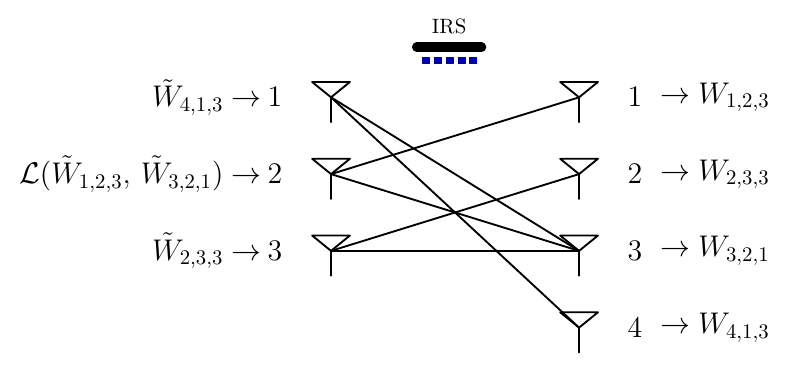}
   \footnotesize (e)
\end{minipage}
\begin{minipage}{0.32\linewidth}
    \centering
    \includegraphics[width=\linewidth]{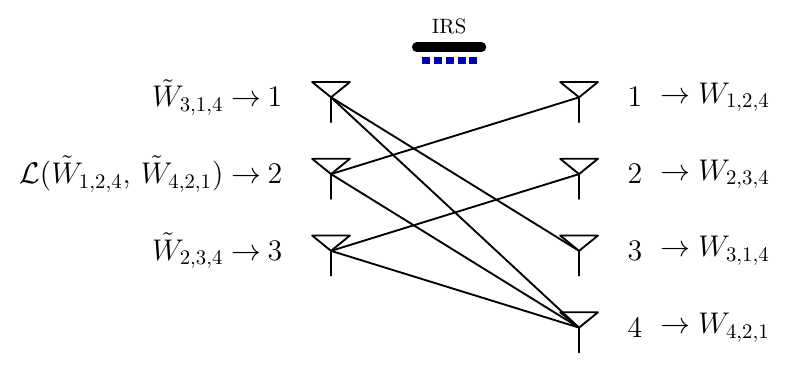}
    \footnotesize(f)
\end{minipage}

\vspace{0.3em}

% --- Row 3 ---
\begin{minipage}{0.32\linewidth}
    \centering
    \includegraphics[width=\linewidth]{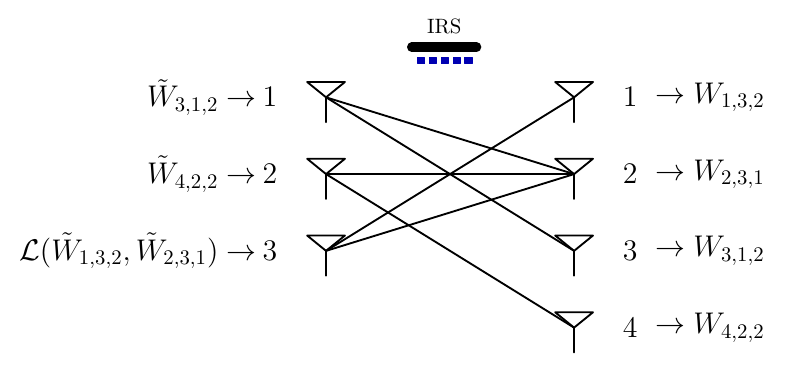}
    {\footnotesize(g)}
\end{minipage}
\begin{minipage}{0.32\linewidth}
    \centering
    \includegraphics[width=\linewidth]{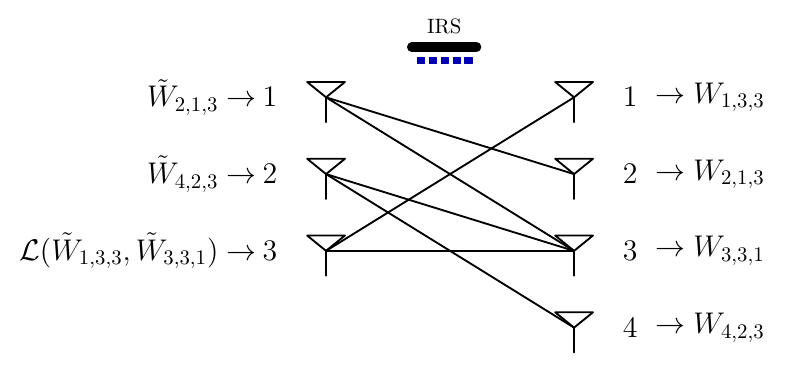}
 \footnotesize   (h)
\end{minipage}
\begin{minipage}{0.32\linewidth}
    \centering
    \includegraphics[width=\linewidth]{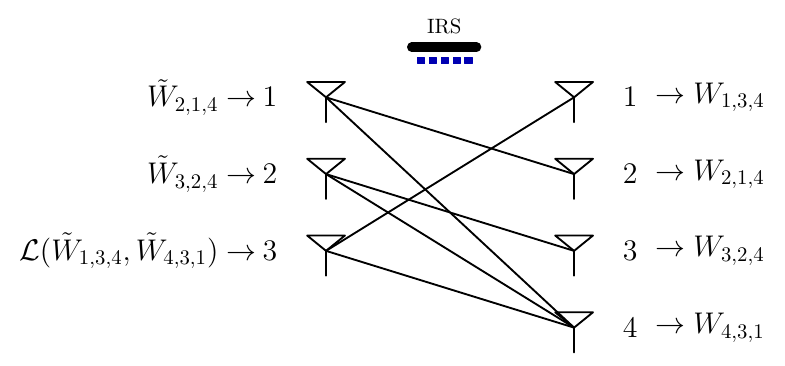}
    \footnotesize(i)
\end{minipage}

\caption{\color{black} Illustration of transmission in a cache-aided interference channel assisted by a $6$-element active IRS with
						$K_T = 3$,
						$K_R = 4$,
						${\mu_T} = 1$,
						$\mu_R = 1$. At each communication block, the subfiles transmitted by the transmitters are shown on the left, while the subfiles that each receiver can decode are shown on the right.}
\label{example_Kt3_Kr4}
\end{figure}
%%%%%%%%%%%%%%%%%
%%%%%%%%%%%%%%%%%
\textcolor{black}{In the following, we explain the achievability scheme for case (I) of Theorem \ref{Theorem_RIS_without_extended_symbol} through an example.}
%\color{blue}
			\begin{example}%[Description of the achievability scheme of Theorem \ref{Theorem_RIS_without_extended_symbol} via an example]%[\textcolor{blue}{Description of the achievability of Theorem \ref{Theorem_RIS_without_extended_symbol} via an example}] 
            %\normalfont
                Consider a cache-aided communication system assisted by an active IRS with $Q = 6$, $K_T = 3$, and $K_R = 4$, where each transmitter and receiver is equipped with cache memories of sizes $M_T = 4$ and $M_R = 3$‌ files, respectively. If the library consists of $N = 12$ files, denoted by 	$\mathcal{W} = \left\{W_1,\ldots, W_{12}\right\}$, the %\textcolor{blue}
                {maximum} DoF of $1$ is achievable for each receiver.  
                \normalfont
                
			%\color{blue}
			      %\begin{proof}
              Note that ${\mu_T} = \frac{K_T M_T}{N} =1$ and 	${\mu_R} = \frac{K_R M_R}{N} = 1$.  The prefetching and delivery phases are described as follows: 
              
                \textit{Prefetching Phase:} 
                In this phase, each file $W_k, k \in [12]$ is divided into  ${3 \choose 1}{4 \choose 1} = 12$  subfiles $W_{k, \mathcal{S}, \mathcal{R}}$, where $\mathcal{S} \subseteq [K_T] = [3]$ and $\mathcal{R} \subseteq [K_R] = [4]$, such that $|\mathcal{S}|={1}$ and $|\mathcal{R}|={1}$.
                %Since $L = 2$, we have $\mu_R+L+1 = K_R$.  
				%Now, we break each file into 	${3 \choose 1}{4 \choose 1} = 12$ subfiles. 
                For instance, file 
				$W_1$ is divided as follows: 
				\begin{align*}
					\{
					%	&
					%&
                    W_{1, 1, 1}, W_{1, 1, 2}, W_{1, 1, 3}, W_{1, 1, 4}, 
					%	\nonumber \\
					%	&‌
					W_{1, 2, 1}, W_{1, 2, 2}, 
                    %\nonumber\\&
                    W_{1, 2, 3}, W_{1, 2, 4}, 
					%	 \nonumber \\
					%	&
					W_{1, 3, 1}, W_{1, 3, 2}, W_{1, 3, 3}, W_{1, 3, 4}
					\},
				\end{align*}
				and each of the remaining files is divided into $12$ subfiles in the same manner. Then, $W_{k, \mathcal{S}, \mathcal{R}}$ is stored in the cache of the single transmitter in $\mathcal{S}$ and the single receiver in $\mathcal{R}$.
                
                \textit{Delivery Phase}: Without loss of generality, we assume that each receiver $j$‌ requests file $W_j$‌ for all $j \in [4]$. Each receiver has already cached $3$ subfiles of its requested file. Therefore, it suffices for the transmitters to deliver the remaining $9$ subfiles per each requested file, resulting in a total of $36$ subfiles to be transmitted. 
                For instance, the following subfiles must be delivered to receiver 1:
           % In particular, the following subfiles must be delivered to the corresponding receivers:
$$
					\left\{W_{1, 1, 2}, W_{1, 1, 3}, W_{1, 1, 4},
					%		 \nonumber \\
					%		&‌ 
					W_{1, 2, 2}, W_{1, 2, 3}, W_{1, 2, 4},
					%		\nonumber \\
					%&
					W_{1, 3, 2}, W_{1, 3, 3}, W_{1, 3, 4}\right\}, %\text{ to receiver 1},
					%$}.
$$ 
% $$
% 					\left\{W_{2, 1, 1}, W_{2, 1, 3}, W_{2, 1, 4},
% 					%		 \nonumber \\
% 					%		&‌ 
% 					W_{2, 2, 1}, W_{2, 2, 3}, W_{2, 2, 4},
% 					%		\nonumber \\
% 					%&
% 					W_{2, 3, 1}, W_{2, 3, 3}, W_{2, 3, 4}\right\} \text{ to receiver 2},
% 					%$}.
% $$      
% $$
% 					\left\{W_{3, 1, 1}, W_{3, 1, 2}, W_{3, 1, 4},
% 					%		 \nonumber \\
% 					%		&‌ 
% 					W_{3, 2, 1}, W_{3, 2, 2}, W_{3, 2, 4},
% 					%		\nonumber \\
% 					%&
% 					W_{3, 3, 1}, W_{3, 3, 2}, W_{3, 3, 4}\right\} \text{ to receiver 3},
% 					%$}.
% $$ 
% $$
% 					\left\{W_{4, 1, 1}, W_{4, 1, 2}, W_{4, 1, 3},
% 					%		 \nonumber \\
% 					%		&‌ 
% 					W_{4, 2, 1}, W_{4, 2, 2}, W_{4, 2, 3},
% 					%		\nonumber \\
% 					%&
% 					W_{4, 3, 1}, W_{4, 3, 2}, W_{4, 3, 3}\right\} \text{ to receiver 4}.
% 					%$}.
% $$ 

‌Based on Lemmas \ref{main_lemma_theorem_mut1_oneshot} and \ref{main_lemma_theorem_mut1_oneshot_2}, these $36$ subfiles can be partitioned into $9$ disjoint sets, each containing $4$ subfiles, such that the subfiles in each set can be delivered simultaneously in an interference-free manner using an appropriate IRS configuration. This is shown in Fig. \ref{example_Kt3_Kr4}.
For example in Fig. \ref{example_Kt3_Kr4}-(a), the transmit signals of transmitters $1$, $2$, and $3$ 
%Tx$_1$, Tx$_2$, and Tx$_3$ 
are given as follows:
\begin{align*}
    	\mathbf{x}^{[1]} = \tilde{W}_{1,1,2}+\tilde{W}_{2,1,1},\quad
        \mathbf{x}^{[2]} = \tilde{W}_{3, 2, 2},\quad
       \mathbf{x}^{[3]}  = \tilde{W}_{4, 3, 2},
\end{align*}
where for any $W_{j, \mathcal{S}, \mathcal{R}}$, $\tilde{W}_{j, \mathcal{S}, \mathcal{R}}$ denotes its coded version. %For simplicity, power constraint at the transmitters is ignored. 
The received signals are written as follows:
\begin{align*}
    	\mathbf{y}^{[1]} &= 
\tilde{\mathbf{H}}^{[11]}\tilde{W}_{1,1,2}+\tilde{\mathbf{H}}^{[11]}\underbrace{\textcolor{black}{\tilde{W}_{2,1,1}}}_{\text{cached}}+\mathbf{z}^{[1]},\\
        \mathbf{y}^{[2]}&= \tilde{\mathbf{H}}^{[21]}\underbrace{\textcolor{black}{\tilde{W}_{1,1,2}}}_{\text{cached}}+
        \tilde{\mathbf{H}}^{[21]}\tilde{W}_{2,1,1}+
        \tilde{\mathbf{H}}^{[22]}\underbrace{\textcolor{black}{\tilde{W}_{3,2,2}}}_{\text{cached}}+
        \tilde{\mathbf{H}}^{[23]}\underbrace{\textcolor{black}{\tilde{W}_{4,3,2}}}_{\text{cached}}+\mathbf{z}^{[2]},\\
       \mathbf{y}^{[3]} &= \tilde{\mathbf{H}}^{[32]}\tilde{W}_{3,2,2}+\mathbf{z}^{[3]},\\
       \mathbf{y}^{[4]} &= \tilde{\mathbf{H}}^{[43]}\tilde{W}_{4,3,2}+\mathbf{z}^{[4]},
\end{align*}

As seen by the received signals, each receiver can cancel the subfiles marked by bracket using its cached content. In addition, several interference terms are eliminated by removing undesired cross-links using the active IRS and therefore do not appear in the received signals. In this example, $Q = L(L+1) = 2 \times 3 = 6$ elements removes $L = 2$ undesired links for each of the $L+1 = 3$ transmitters. The same procedure applies to the remaining communication blocks, resulting in an achievable per-user DoF of $1$.

 % To compare the achieved per-user DoF with existing results in the literature in the absence of IRSs, note that the achievable one-shot per-user DoF in \cite{nader} is $1/2$, whereas the achievable DoF‌ per user in \cite{Tao_Fundamental} is $2/3$  with symbol extension. For a detailed comparison between one-shot and symbol-extension-based schemes, see Remark \ref{one_shot_vs_symbol_extension}.%… in Section …
\color{black}
\end{example}

\color{black}
\textcolor{black}{We now discuss the proof of Theorem~\ref{Theorem_RIS_without_extended_symbol_tau_geq_1}. Since each subfile is available at $\mu_T > 1$ transmitters, cooperation can be exploited to enable zero-forcing (ZF) of the outgoing interference, thereby improving the achievable sum-DoF.} %Alternatively, larger transmitter caches can achieve the same sum-DoF with fewer IRS elements. 
%\color{blue}
            \subsection{Proof of Theorem \ref{Theorem_RIS_without_extended_symbol_tau_geq_1}}
    %\color{black}

	% \begin{definition}[$(M, {\mu_T})$-Subset Partition]\label{partition_M_tau}
	% 	The set $\{1, 2, \ldots, M{\mu_T}\}$ has an $(M, {\mu_T})$-subset partition if there exist ${M{\mu_T} \choose {\mu_T}}/ M = {M{\mu_T}-1 \choose {\mu_T}-1}$ distinct partitions, each composed of $M$‌ subsets of size $\mu_T$ with no two similar subsets from two different partitions.
	% \end{definition}

		% \begin{example}
		% 	For $M = \mu_T = 2$, the set $\{1, 2, 3, 4\}$ has a $(2, 2)$-subset partition. Here is one example:
  %   $$
  %   \biggl\{ \Bigl\{\{1, 2\}, \{3, 4\}\Bigl\}, \Bigl\{\{1, 3\}, \{2, 4\}\Bigl\}, \Bigl\{\{1, 4\}, \left\{2, 3\right\}\Bigl\}\biggl\}.
  %   $$
		% 	% \begin{center}
		% 	% 	\begin{tabular}{ c c c }
		% 	% 		&$\{1, 2\}, \{3, 4\}$
		% 	% 		\\&
		% 	% 		$\{1, 3\}, \{2, 4\}$
		% 	% 		\\&
		% 	% 		$\{1, 4\}, \{2, 3\}$
		% 	% 	\end{tabular}
		% 	% \end{center}
		% \end{example}
		% \begin{example}
		% 	For $M = 3$ and $\mu_T = 2$, the set $\{1, 2, 3, 4, 5, 6\}$ has a $(3, 2)$-subset partition. Here is one example:
  %           \begin{align*}
		% 	\biggl\{&\Bigl\{
		% 	\{1, 2\}, \{3, 4\}, \{5, 6\}\Bigl\},
		% 	\Bigl\{
		% 	\{1, 3\}, \{2, 5\}, \{4, 6\}\Bigl\}, \Bigl\{
		% 	\{1, 4\}, \{2, 6\}, \{3, 5\}\Bigl\}, \\
		% 	&
		% 	\Bigl\{
		% 	\{1, 5\},\{2, 4\}, \{3, 6\}\Bigl\},%\\&
  %           \Bigl\{
		% 	\{1, 6\}, \{2, 3\}, \{4, 5\}\}\Bigl\}\biggl\}.
		% \end{align*}	
		% \end{example}
		%We are now ready to provide the proof of Theorem \ref{Theorem_RIS_without_extended_symbol_tau_geq_1}.
		
		\begin{sproof}[Sketch of proof] %-- Theorem \ref{Theorem_RIS_without_extended_symbol_tau_geq_1}]
			%In the following, an outline of the proof is presented.	
% Similar to the proof of Theorem \ref{Theorem_RIS_without_extended_symbol}, 
We consider two cases: (I) $\mu_R+\mu_T+L \geq K_R$ and (II) 	$\mu_R+\mu_T+L < K_R$. In case (I), we present the achievability of the maximum possible sum-DoF. To this end, we further divide each subfile into smaller subfiles \textcolor{black}{in order to exploit transmitter cooperation}. \textcolor{black}{We first show how a set of $\mu_T+\mu_R$ subfiles of a particular form can be decoded interference-free in a network of size $\mu_T \times (\mu_T+\mu_R)$, as demonstrated in Lemmas \ref{lemma_1RX_without_interference} and \ref{lemma_tplustau_RX_without_interference}. Building upon these lemmas}, we then explain how $K_R$‌ distinct smaller subfiles in each communication block, together with appropriate IRS configurations, can be selected so that each subfile at the intended receiver is decoded interference-free, \textcolor{black}{as detailed in Lemmas \ref{lemma1_tau_geq1} and \ref{lemma2_tau_geq1}}. Since $\mu_T > 1$, the transmitter caches overlap, which enables ZF of interference through transmitter cooperation. To benefit from cooperation, we select $M$ distinct subsets of transmitters in each communication block, each of size $\mu_T$, ensuring that these subsets differ across blocks. \textcolor{black}{The existence of such a partitioning is guaranteed by Lemma \ref{Baranyai's Theorem}}. In case (II), the same procedure is applied, but only to a subset of receivers. Consequently, in each communication block, only the selected receivers achieve a DoF of $1$, while the remaining receivers achieve a DoF of $0$. The complete proof is provided in Appendix B. 
		\end{sproof}
        
        %\color{blue}
        In the following, an example is provided to illustrate the proof of Theorem \ref{Theorem_RIS_without_extended_symbol_tau_geq_1} step-by-step.
        			\begin{example}%[Description of the achievability scheme of Theorem \ref{Theorem_RIS_without_extended_symbol_tau_geq_1} via an example] 
                    \label{Example_th2}
				Consider a cache-aided communication system assisted by an active IRS with $Q = 4$, $K_T = 4$, and $K_R = 4$, where each transmitter and receiver is equipped with cache memories of sizes $M_T = 2$ and $M_R = 1$‌ files, respectively. If the library consists of $N = 4$ files, denoted by 	$\mathcal{W} = \left\{W_1, W_2, W_3, W_4\right\}$, the maximum DoF of $1$ is achievable for each receiver.  
\normalfont

                      Note that ${\mu_T} = \frac{K_T M_T}{N} =2$ and 	${\mu_R} = \frac{K_R M_R}{N} = 1$.  The prefetching and delivery phases are described as follows: 
              
                \textit{Prefetching Phase:} 
                In this phase, each file $W_k, k \in [4]$ is divided into  ${4 \choose 2}{4 \choose 1} = 24$  disjoint subfiles $W_{k, \mathcal{S}, \mathcal{R}}$ where $\mathcal{S} \subseteq [K_T] = [4]$ and $\mathcal{R} \subseteq [K_R] = [4]$, such that $|\mathcal{S}|={2}$ and $|\mathcal{R}|={1}$.
                For instance, file 
				$W_1$ is divided as follows: 
				\begin{align*}
					\{
					&
                    W_{1, 12, 1}, W_{1, 13, 1}, W_{1, 14, 1}, W_{1, 23, 1}, W_{1, 24, 1}, W_{1, 34, 1},\\
                    &
                    W_{1, 12, 2}, W_{1, 13, 2}, W_{1, 14, 2}, W_{1, 23, 2}, W_{1, 24, 2}, W_{1, 34, 2},
                    \\
                    & W_{1, 12, 3}, W_{1, 13, 3}, W_{1, 14, 3}, W_{1, 23, 3}, W_{1, 24, 3}, W_{1, 34, 3},\\
                    &
                    W_{1, 12, 4}, W_{1, 13, 4}, W_{1, 14, 4}, W_{1, 23, 4}, W_{1, 24, 4}, W_{1, 34, 4} 
                    \},
				\end{align*}
				and each of the remaining files is divided into $24$ subfiles in the same manner. Then, $W_{k, \mathcal{S}, \mathcal{R}}$ is stored in the caches of the two transmitters in $\mathcal{S}$ and the single receiver in $\mathcal{R}$.
                
                \textit{Delivery Phase}: Without loss of generality, we assume that receiver $j$‌ requests file $W_j$‌ for all $j \in [4]$. Each receiver has already cached $6$ subfiles of its requested file. Therefore, it suffices for the transmitters to deliver the remaining $18$ subfiles per file, resulting in a total of $72$ subfiles to be transmitted. For example, the subfiles that must be delivered to receiver $1$ are given by
           %Particularly, the following subfiles should be delivered to the corresponding receivers:
				\begin{align*}
					\{
			          &
                    W_{1, 12, 2}, W_{1, 13, 2}, W_{1, 14, 2}, W_{1, 23, 2}, W_{1, 24, 2}, W_{1, 34, 2},
                    \\
                    & W_{1, 12, 3}, W_{1, 13, 3}, W_{1, 14, 3}, W_{1, 23, 3}, W_{1, 24, 3}, W_{1, 34, 3},\\
                    &
                    W_{1, 12, 4}, W_{1, 13, 4}, W_{1, 14, 4}, W_{1, 23, 4}, W_{1, 24, 4}, W_{1, 34, 4} 
                    \}.
				\end{align*}

To exploit transmitter cooperation, each subfile $W_{j, \mathcal{S}, \mathcal{R}}$ is further divided  into ${{K_R-{\mu_R}-1} \choose {{\mu_T}-1}} = {{4-1-1} \choose {2-1}} = 2$ smaller subfiles $W_{j, \mathcal{S}, \mathcal{R}, \mathcal{T}}$, where 
%$\mathcal{T}$ 
%is corresponded with an arbitrary subset of receivers of size $\mu_T - 1 = 1$ from 
$\mathcal{T} \subseteq [K_R]\backslash\{j\}\cup{\mathcal{R}} = [4]\backslash\{j\}\cup{\mathcal{R}}$ and $|\mathcal{T}|=\mu_T-1=1$. 
%so that the subfile is eliminated in that receiver with aid of cooperation. 
This construction enables interference cancellation via transmitter cooperation. For instance, $W_{1, 12, 2}$ 
%which has to be transmitted to receiver $1$, 
is further divided into $W_{1, 12, 2, 3}$ and $W_{1, 12, 2, 4}$. 
\begin{figure}
    \centering
    \includegraphics[width=1\linewidth]{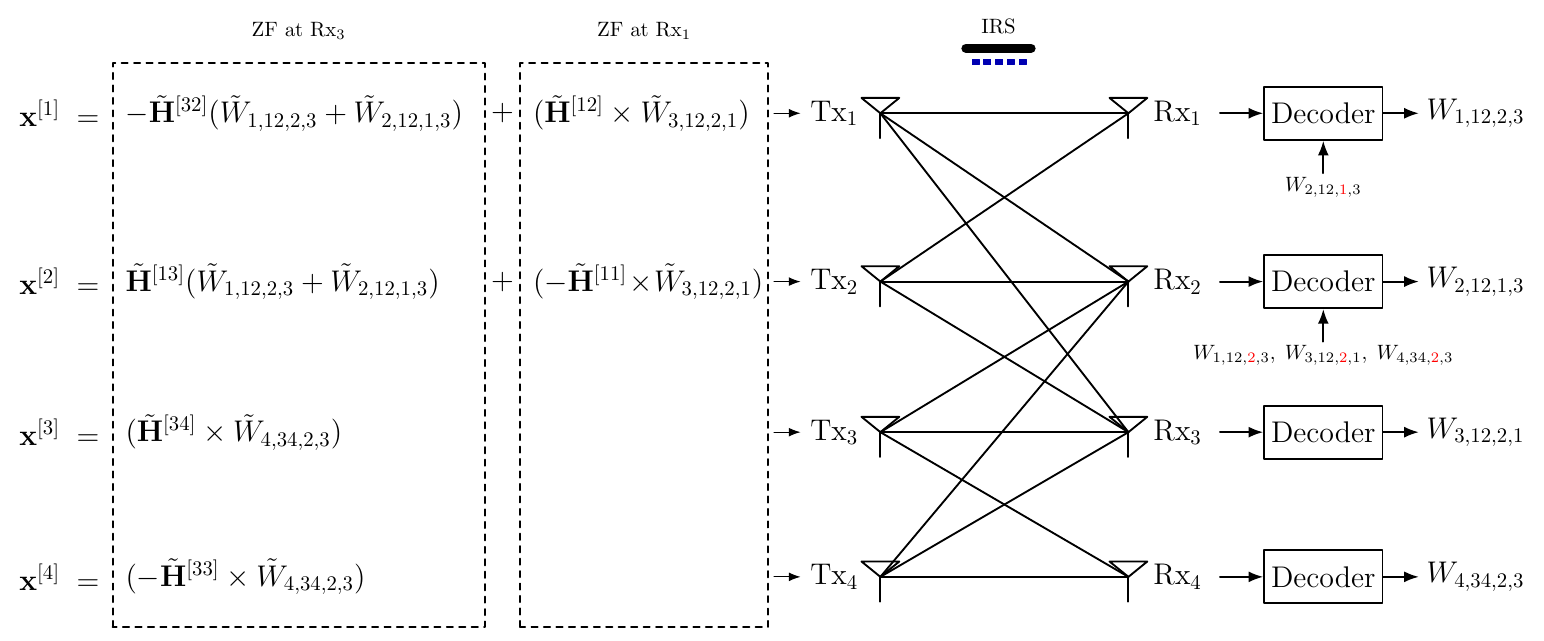}
    \caption{\textcolor{black}{\footnotesize Illustration of the linear encoding/decoding and the partially connected network realized by the $4$-element active IRS in a communication block of Example \ref{Example_th2}.}} \label{Fig_ex_th2}
\end{figure}  
Based on Lemma \ref{lemma1_tau_geq1}, these subfiles can be partitioned into $H = {{M{\mu_T} \choose {{\mu_T}}}{{K_R-1} \choose {{\mu_R}}}{{K_R-{\mu_R}-1} \choose {{\mu_T} - 1}} } = {{4 \choose {{2}}}{{3} \choose {{1}}}{{2} \choose {1}}} = 36$ disjoint sets, each containing $K_R = 4$ subfiles, such that the subfiles in each set can be delivered simultaneously in an interference-free manner using an appropriate IRS configuration. One such set is given by: 
$$
\{
W_{1, 12, 2, 3},
W_{2, 12, 1, 3},
W_{3, 12, 2, 1},
W_{4, 34, 2, 3}
\}.
$$
By Lemma \ref{lemma2_tau_geq1}, there exists a choice of beamforming coefficients together with a suitable IRS-induced network realization that enables interference-free delivery of the subfiles in each set, as illustrated in Fig. \ref{Fig_ex_th2}. The transmit signals are given by
\begin{align*}
    	\mathbf{x}^{[1]} &=
		-\tilde{\mathbf{H}}^{[32]}(\tilde{W}_{1, 12, 2, 3} + \tilde{W}_{2, 12, 1, 3})+(\tilde{\mathbf{H}}^{[12]} \times \tilde{W}_{3, 12, 2, 1}),\\
        \mathbf{x}^{[2]} &= \tilde{\mathbf{H}}^{[13]}(\tilde{W}_{1, 12, 2, 3} + \tilde{W}_{2, 12, 1, 3})+(-\tilde{\mathbf{H}}^{[11]} \times \tilde{W}_{3, 12, 2, 1}),\\
       \mathbf{x}^{[3]} &= (\tilde{\mathbf{H}}^{[34]} \times \tilde{W}_{4, 34, 2, 3}),\\
       \mathbf{x}^{[4]} &=
       (-\tilde{\mathbf{H}}^{[33]} \times \tilde{W}_{4, 34, 2, 3}),
\end{align*}
where for any $W_{j, \mathcal{S}, \mathcal{R}, \mathcal{T}}$, $\tilde{W}_{j, \mathcal{S}, \mathcal{R}, \mathcal{T}}$ denotes its coded version. %For simplicity, power constraint at the transmitters is ignored. 
The received signals are written as follows:
\begin{align*}
    	\mathbf{y}^{[1]} &= 
        (\tilde{\mathbf{H}}^{[13]}\tilde{\mathbf{H}}^{[12]}-\tilde{\mathbf{H}}^{[32]}\tilde{\mathbf{H}}^{[11]})\tilde{W}_{1, 12, 2, 3}+(\tilde{\mathbf{H}}^{[13]}\tilde{\mathbf{H}}^{[12]}-\tilde{\mathbf{H}}^{[32]}\tilde{\mathbf{H}}^{[11]})\underbrace{\textcolor{black}{\tilde{W}_{2, 12, 1, 3}}}_{\text{cached}}+\mathbf{z}^{[1]},\\
        \mathbf{y}^{[2]}&= (\tilde{\mathbf{H}}^{[22]}\tilde{\mathbf{H}}^{[13]}-\tilde{\mathbf{H}}^{[32]}\tilde{\mathbf{H}}^{[21]})\underbrace{\textcolor{black}{\tilde{W}_{1, 12, 2, 3}}}_{\text{cached}}+(\tilde{\mathbf{H}}^{[22]}\tilde{\mathbf{H}}^{[13]}-\tilde{\mathbf{H}}^{[32]}\tilde{\mathbf{H}}^{[21]})\tilde{W}_{2, 12, 1, 3}\\&+(\tilde{\mathbf{H}}^{[12]}\tilde{\mathbf{H}}^{[21]}-\tilde{\mathbf{H}}^{[11]}\tilde{\mathbf{H}}^{[22]})\underbrace{\textcolor{black}{\tilde{W}_{3,12,2,1}}}_{\text{cached}}+(\tilde{\mathbf{H}}^{[34]}\tilde{\mathbf{H}}^{[23]}-\tilde{\mathbf{H}}^{[33]}\tilde{\mathbf{H}}^{[24]})\underbrace{\textcolor{black}{\tilde{W}_{4,34,2,3}}}_{\text{cached}}+\mathbf{z}^{[2]},\\
       \mathbf{y}^{[3]} &= (\tilde{\mathbf{H}}^{[12]}\tilde{\mathbf{H}}^{[31]}-\tilde{\mathbf{H}}^{[11]}\tilde{\mathbf{H}}^{[32]})\tilde{W}_{3,12,2,1}+\mathbf{z}^{[3]},\\
       \mathbf{y}^{[4]} &=(\tilde{\mathbf{H}}^{[43]}\tilde{\mathbf{H}}^{[34]}-\tilde{\mathbf{H}}^{[44]}\tilde{\mathbf{H}}^{[33]})\tilde{W}_{4,34,2,3}+\mathbf{z}^{[4]},
\end{align*}

As shown by the received signals, each receiver can cancel the subfiles marked by bracket due to its cache availability. In addition, several interference terms are eliminated via ZF and by removing undesired cross-links using the active IRS and therefore do not appear in the received signals. In this example, $Q = \mu_T L(L+1) = 2 \times 1 \times 2 = 4$ elements removes $L = 1$ undesired links for each of the $\mu_T(L+1) = 4$ transmitters. The same procedure applies to the remaining communication blocks, resulting in an achievable per-user DoF of $1$. 
        \end{example}
 
        \color{black}   

%\section{Upper Bound}\label{upperbound_section}
		% \vspace{-1cm}
%%%%%%%%%%%%%%%%%%%%%%%%%%%%%%%%%%%%%%%%%%%%%%%%%%%%%%%%%%%%%%%%%%%%%%%%%%%%%%%%%%%%%%%%%%%%%%%%%%%%%%%%%%%%%%%%%%%%%%%%%%%%%%%%%%%%%%%%%%
 \color{black}
\section{Numerical Results}\label{sec4Ssim}
		This section provides numerical results to highlight the achievability scheme obtained in this paper. To this end, we compare our results in Theorems \ref{Theorem_RIS_without_extended_symbol} and \ref{Theorem_RIS_without_extended_symbol_tau_geq_1} with those presented in \cite{nader, Tao_Fundamental}. \textcolor{black}{These works are selected as baselines as they represent two fundamentally different state-of-the-art approaches to cache-aided interference management without IRSs. Specifically, the proposed scheme builds on the one-shot linear caching framework in \cite{nader}, enhancing its achievable performance via an active IRS, whereas \cite{Tao_Fundamental} adopts a symbol-extension-based approach that achieves tighter DoF characterizations.} We summarize the DoF per user of these works as follows:
{
\begin{align}
	 \mathrm{DoF_{}}^{\text{\cite{nader}}} &= \frac{1}{K_R}\operatorname{min}\{{\mu_T}+{\mu_R}, K_R\},\label{maddah}\\
	\mathrm{DoF_{}}^{\text{\cite{Tao_Fundamental}}} &=\begin{cases}
		1,\quad  &{\mu_R}+{\mu_T} \geq K_R, \\
		\frac{{K_R-1 \choose {\mu_R}}{K_T-1 \choose {\mu_T}}{\mu_T}}{{K_R-1 \choose {\mu_R}}{K_T-1 \choose {\mu_T}}{\mu_T}+1},\quad &{\mu_R}+{\mu_T} = K_R-1, \\
		\max\{d_1, \frac{{\mu_R}+{\mu_T}}{K_R}\},\quad &{\mu_R}+{\mu_T} \leq K_R-2,\label{Tao_DoF_equation}
	\end{cases}
\end{align}
}
where in \eqref{Tao_DoF_equation}: 
\begin{align}
%&
d_1 = \max_{{\mu_T}^{\prime} \in [1:{\mu_T}]}%\nonumber\\&\resizebox{\linewidth}{!}{$
\left\{\frac{{K_R-1 \choose {\mu_R}}{K_T-1 \choose {\mu_T}^{\prime}}{K_R-{\mu_R}-1 \choose {{\mu_T}^{\prime}-1}}{{\mu_T}}^{\prime}}{{K_R-1 \choose {\mu_R}}{K_T-1 \choose {\mu_T}^{\prime}}{K_R-{\mu_R}-1 \choose {{\mu_T}^{\prime}-1}}{{\mu_T}}^{\prime}+{K_R-1 \choose {\mu_R}+1}{K_R-{\mu_R}-2 \choose {\mu_T}^{\prime}-1}{K_T \choose {{\mu_T}^{\prime}-1}}}\right\}.%$}.
\end{align}

%Note that the result of \cite{nader} is obtained using a one-shot approach, while \cite{Tao_Fundamental} utilizes symbol extension.
%\color{blue}
 \begin{remark}\label{one_shot_vs_symbol_extension}
 One-shot schemes perform precoding over a single time–frequency slot (i.e., symbol-by-symbol), enabling simple implementations. In contrast, schemes based on symbol extension require signaling over a large number of time–frequency slots, with the block length typically tending to infinity to exploit the gains of asymptotic interference alignment. Such schemes are often impractical due to large delays, high-dimensional precoders/decoders, and requiring stable channel conditions over the entire extended block \cite{elgamal_oneshot_vs_extension}. In this regard, a key contribution of this work is achieving improved sum-DoF using one-shot transmission.
\end{remark}
\color{black}
\begin{figure}%0.9
    \centering
    \includegraphics[width=.7\linewidth]{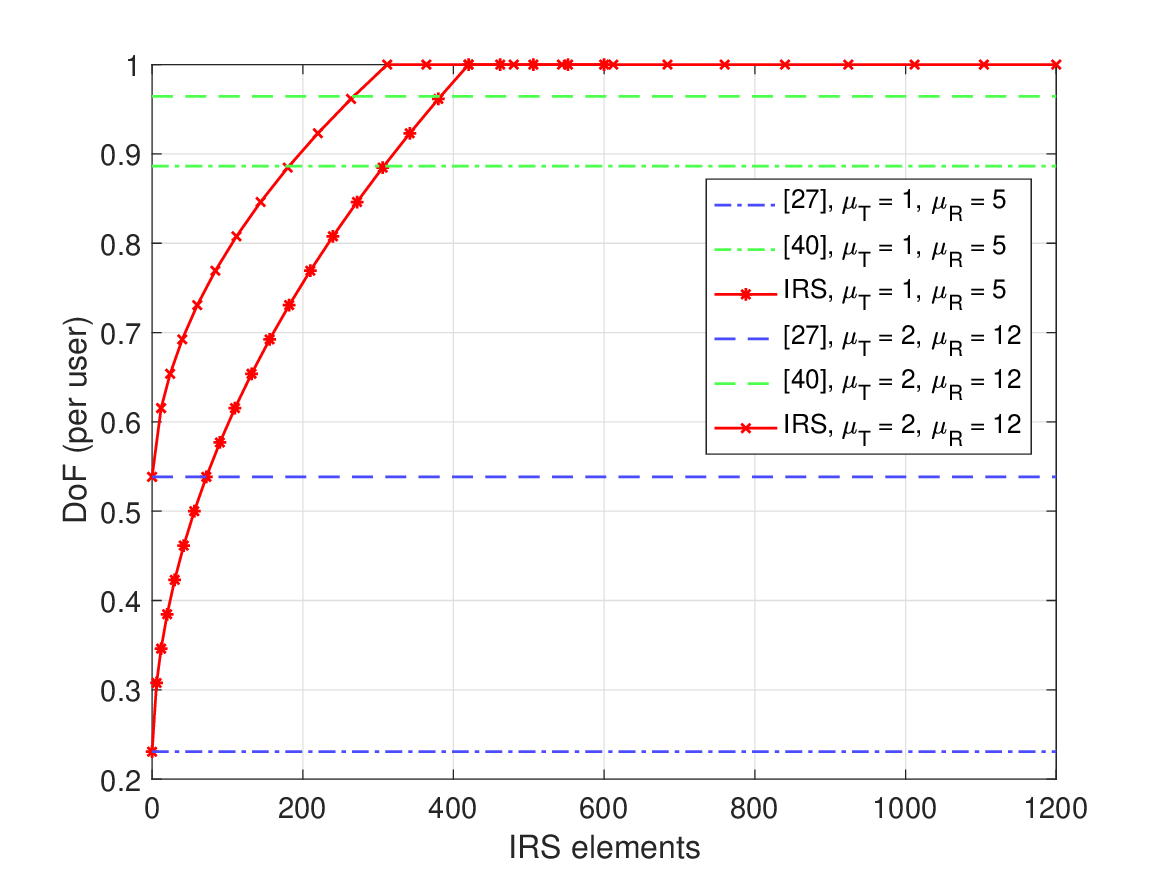}
    \caption{Achievable DoF per user versus the number of IRS elements for a cache-aided interference channel with
			$K_T = K_R = 26$.
  }
    \label{fig_new_1}
\end{figure}

% \begin{figure}
% \centering
% \begin{minipage}{.48\textwidth}
%   \centering
% %   \psfrag{Maximum DoF}[][][0.9]{Minimum DoF}
%   \includegraphics[width=1\linewidth]{figures/fig1.eps}
%   \caption{
%     Achievable DoF per user versus the number of IRS elements for a cache-aided interference channel with
% 			$K_T = K_R = 26$,
% 			${\mu_R} = 5$,
% 		      and
% 			${\mu_T} = 1$.
%   }
% \label{fig:IRS_elemts}
% \end{minipage}\hfill
% \begin{minipage}{.48\textwidth}
%   \centering
%   \includegraphics[width=1\linewidth]{figures/fig2.eps}
%   \caption{
%      Achievable DoF per user versus the number of IRS elements for a cache-aided interference channel with
% 			$K_T = K_R = 26$,
% 			${\mu_R} = 12$, and
% 			${\mu_T} = 2$.
%   }
%    \label{fig:IRS_elemts_taugeq1}
% \end{minipage}
% \end{figure}

In Fig. \ref{fig_new_1}, we plot the achievable DoF per user for a cache-aided interference channel with parameters $K_T = K_R = 26$. It can be observed that our proposed one-shot approach always outperforms \cite{nader}, which is nearly optimal without employing an IRS in a one-shot scenario. Furthermore, our proposed one-shot scheme demonstrates superior performance compared to \cite{Tao_Fundamental}, particularly when an IRS with a sufficient number of elements is available. %with lower complexity.
%Fig. \ref{fig:IRS_elemts_taugeq1} illustrates the achievable DoF per user of our proposed schemes alongside the DoF per user of our benchmarks, plotted against the number of IRS elements, for a network with $K_T = K_R = 26$, $\mu_T = 2$, and $\mu_R = 12$. As anticipated, the gap between our results and those of  \cite{nader, Tao_Fundamental} is smaller compared to Fig. \ref{fig:IRS_elemts}. This proximity is attributed to the 
Moreover, note that higher values of $\mu_T$ and $\mu_R$ result in a higher DoF per user. Additionally, this leads to a lower requirement for the number of IRS elements to achieve a specific DoF.

\begin{figure}
    \centering
    \includegraphics[width=.7\linewidth]{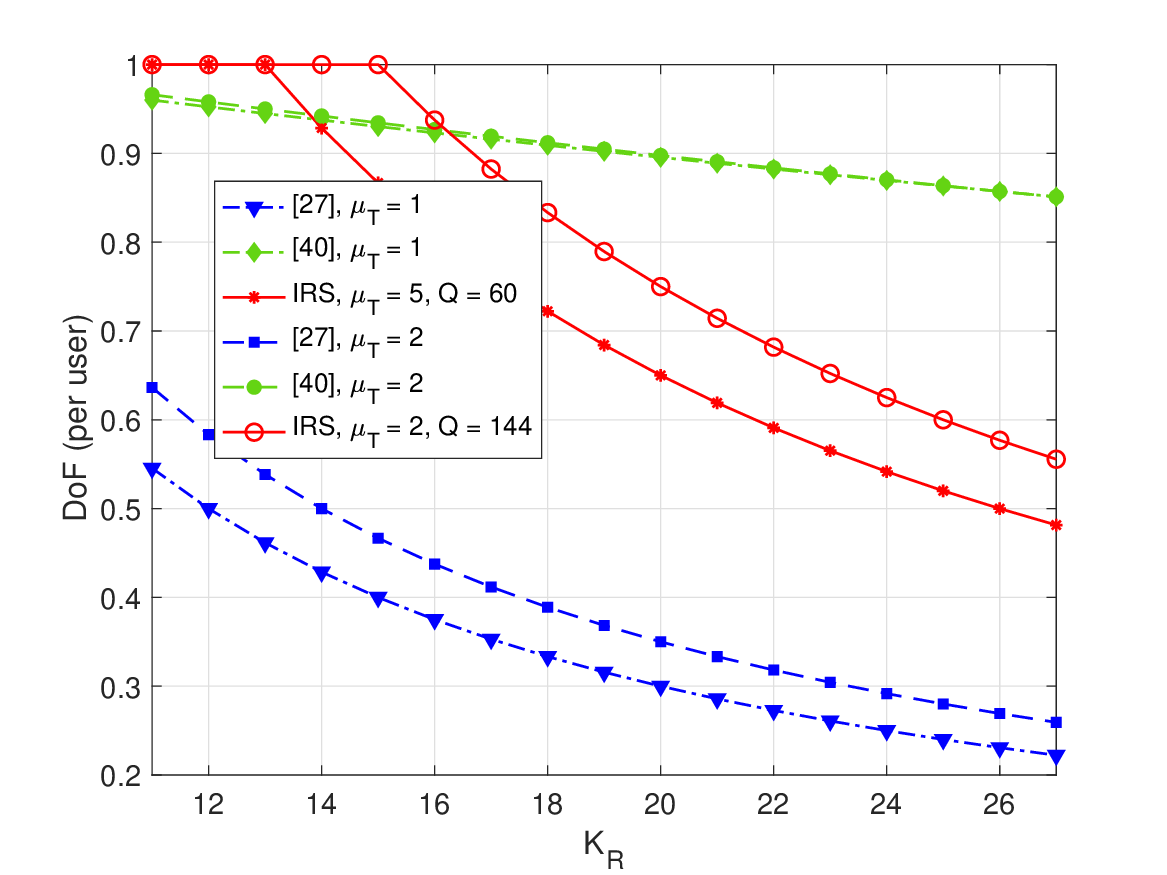}
    \caption{Achievable DoF per user versus $K_R$ for a cache-aided interference channel with $K_T = 20$ and
			${\mu_R} = 5$.}
    \label{new_fig_5}
\end{figure}
% \begin{figure}
% \centering
% \begin{minipage}{.48\textwidth}
%   \centering
%   \includegraphics[width=1\linewidth]{figures/fig3.eps}
%   \caption{
%    Achievable DoF per user versus $K_R$ for a cache-aided interference channel with $K_T = 20$,
% 			${\mu_T} = 1$, and
% 			${\mu_R} = 5$.
%   }
% \label{fig:DoF_vs_Kr_tau1}
% \end{minipage}\hfill
% \begin{minipage}{.48\textwidth}
%   \centering
%   \includegraphics[width=1\linewidth]{figures/fig4.eps}
%   \caption{
%    Achievable DoF per user versus $K_R$ for a cache-aided interference channel with $K_T = 20$, ${\mu_T} = 2$,
% 	${\mu_R} = 5$.
%   }
%    \label{fig:DoF_vs_Kr_taugeq1}
% \end{minipage}
% \end{figure}

Fig. \ref{new_fig_5} depicts the DoF per user versus the number of receivers for a cache-aided interference network with $K_T = 20$ transmitters and $\mu_R = 5$. As expected, an increase in the number of receivers leads to a decrease in the DoF per user, a trend evident in all curves plotted in the figure. Notably, due to the symmetric nature of file breaking in \cite{nader} and the linearity of \eqref{maddah}, the corresponding curve experiences a more noticeable decrease compared to the asymmetric and optimal file breaking method in \cite{Tao_Fundamental}. However, an IRS can potentially support a DoF per user of $1$ up to a certain increase in the number of receivers, depending on the number of elements. The observed decrease in the curve is attributed to the insufficient number of elements relative to the number of users, which results in inadequate elimination of undesired cross-links.
%In Fig. \ref{fig:DoF_vs_Kr_taugeq1}, The DoF per user is presented for a network with parameters $K_T = 20$, $\mu_T = 2$, and $\mu_R = 5$. Similarly, as the number of receivers increases, the DoF per user decreases. However, 
%Note that the same result holds for the scenario with IRS with a lower number of elements when $\mu_T$ is greater. Moreover, a higher value of $\mu_T$ can delay the decline of the curve with the same number of IRS elements.

\begin{figure}
    \centering
    \includegraphics[width=.7\linewidth]{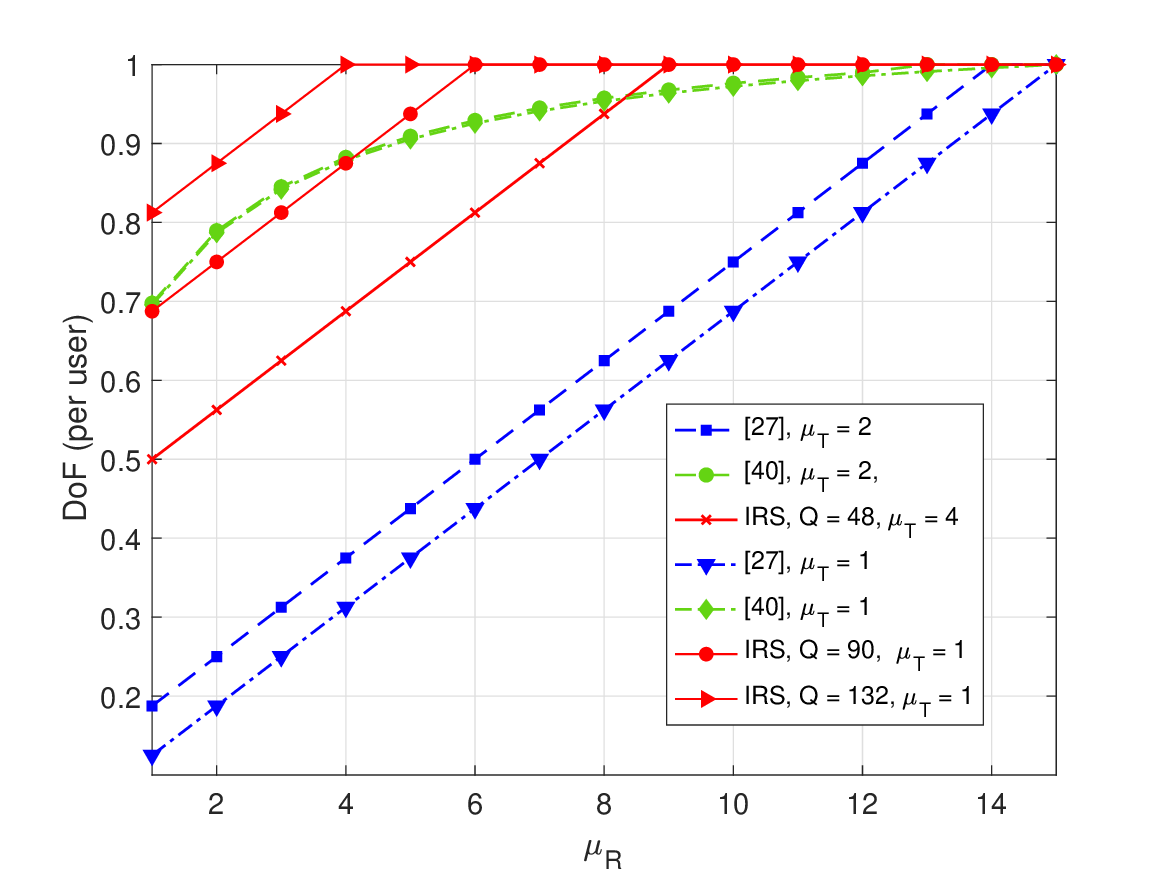}
    \caption{Achievable DoF per user versus ${\mu_R} \in                [15]$ for a cache-aided interference channel              with $K_T = K_R = 16$.}
    \label{new_fig3}
\end{figure}
% \begin{figure}
% \centering
% \begin{minipage}{.48\textwidth}
%   \centering
%   \includegraphics[width=1\linewidth]{figures/fig5.eps}
%   \caption{
%    Achievable DoF per user versus ${\mu_R} \in [1:15]$ for a cache-aided interference channel with $K_T = K_R = 16$ and
% 			${\mu_T} = 1$.
%   }
% \label{fig:DoF_vs_t}
% \end{minipage}\hfill
% \begin{minipage}{.48\textwidth}
%   \centering
%   \includegraphics[width=1\linewidth]{figures/fig6.eps}
%   \caption{Achievable DoF per user versus ${\mu_R} \in                [1:15]$ for a cache-aided interference channel              with $K_T = K_R = 16$ and ${\mu_T} = 2$.
%   }
%    \label{fig:DoF_vs_t_taugeq1}
% \end{minipage}
% \end{figure}

Fig. \ref{new_fig3} depicts the DoF per user versus $\mu_R$ for a network with $K_T = 16$ transmitters and $K_R = 16$ receivers. 
%In Fig. \ref{fig:DoF_vs_t}, $\mu_T = 1$, and in Fig. \ref{fig:DoF_vs_t_taugeq1}, $\mu_T = 2$. 
As expected, the DoF increases with an increase in $\mu_R$. Additionally, an increase in the number of IRS elements results in a higher DoF per user, keeping the other parameters unchanged. Moreover, it is noteworthy that an IRS equipped with a sufficient number of elements provides a higher DoF compared to existing works in the literature.

 		%\vspace{-.41cm}%%%%%%%%%%%%%%%%%%%%%%%%%%%%%%%%%%%%%%%%%%%%%%%%%%%%%%%%%%%%%%%%%%%%%%%%%%%%%%%%%%%%%%%%%%%%%%%%%%%%%%%%%%%%%%%%%%%%%%%%%%%%%%%%%%%%%%%%%%
		\section{Conclusion}\label{sec5con}
		In this paper, we studied the sum-DoF of a general cache-aided interference channel assisted by an active IRS. In particular, we presented a one-shot achievability scheme leveraging the transmitters’ cooperation, receivers' cache contents, interference alignment, and IRS capabilities to design the prefetching and delivery phases together with the IRS coefficients. Our results highlight the synergistic gains of jointly deploying active IRSs and coded caching, particularly in regimes where the transmitters’ cache sizes are limited. Furthermore, we have established that an active IRS with a sufficient number of elements can enable the achievement of the maximum possible sum-DoF. %The future research directions include%: 1) characterizing sum-DoF in the presence of a passive IRS, 2) finding upper bounds for the proposed achievable schemes to analyze their optimality, 3) analyzing more physically-motivated models for IRSs, and 4) considering imperfect CSI for DoF analysis.

%%%%%%%%%%%%%%%%%%%%%%%%%%%%%%%%%%
\color{black}
\section{Acknowledgment}
The authors would like to thank the anonymous reviewers for their valuable feedback, in particular for suggesting Baranyai’s theorem on hypergraph decomposition, which led to a shorter proof of Theorem 2.
\color{black}
		%\vspace{-.41 cm}

		\section*{Appendix A: Proof of Theorem \ref{Theorem_RIS_without_extended_symbol}}\label{App:Theorem_1}
		%\vspace{-.41cm}
%		\begin{proof}
			As mentioned earlier, we consider two cases: (I) $\mu_R+L+1 \geq K_R$ and (II) $\mu_R+L+1<K_R$, and prove the theorem for each case.
			\begin{enumerate}[wide, labelwidth=!, labelindent=0pt, label=(\Roman*)]
				\item $\mu_R+L+1 \geq K_R$:\label{first_case_tau1_oneshot}
				In this case, we prove that the sum-DoF of $K_R$ is achievable. To this end, we set $L = K_R-\mu_R-1$. Based on the prefetching phase introduced in \eqref{send_general_tau1}, each requested file by receiver $j$ can be represented as 
				\begin{equation}\label{Set_requested_files_tau_1}
					{{W}}_{d_j}=\left\{W_{d_j, i, \mathcal{R}}: i \in [K_T], \mathcal{R} \subseteq\left[K_{R}\right]\backslash \{j\},|\mathcal{R}|={\mu_R}\right\}.
				\end{equation}
				
				As it can be seen, $K_T{{K_R-1} \choose {\mu_R}}$ subfiles of each requested file should be transmitted, with each transmitter having access to ${K_R-1} \choose {\mu_R}$ distinct subfiles of it. In the following, we assign each subfile ${W}_{d_j, i,  \mathcal{R}}$ in \eqref{Set_requested_files_tau_1} a symbol, denoted as  	$\tilde{\mathbf{x}}_{i,  \mathcal{R}}^{[j]} \in \mathbb{C}^{\tilde{B}}$. Next, we introduce a binary beamforming coefficient $\tilde{{{\mathbf{V}}}}_{\mathcal{R}}^{[ji]}(m)$ in each communication block $m \in [H]$ to enable a linear combination of the symbols at each transmitter. Consequently, the data stream at each transmitter $i$ at block $m \in [H]$ can be expressed as:  
				\begin{equation}
					{\mathbf{x}}^{[i]}(m) =  \sum_{j = 1}^{K_R}\sum\limits_{\substack{\mathcal{R} \subseteq [K_R]:\\|\mathcal{R}| = {\mu_R}, j \notin \mathcal{R}}}^{} {\tilde{{{\mathbf{V}}}}_{\mathcal{R}}^{[ji]}}(m){\tilde{\mathbf{x}}_{i, \mathcal{R}}^{[j]}}.
				\end{equation}
				
				Then, we set $H = K_T{{K_R-1} \choose {{\mu_R}}}$, intending to configure IRS so that each transmitted subfile is desired at one receiver and can be eliminated at the remaining receivers. Specifically, we utilize cached packets as side information and IRS‌ to eliminate the incoming interference at $\mu_R$ and $L$ receivers, respectively. As a result, after $H$ communication blocks, all the requested subfiles are received interference-free, and a DoF of $1$ is achievable at each block. We show how this can be done utilizing lemmas \ref{main_lemma_theorem_mut1_oneshot} and \ref{main_lemma_theorem_mut1_oneshot_2}. 
				
				Before presenting the lemmas, for ease of reference, we term every $K_R-1 \choose {\mu_R}$ blocks a super-block. Consequently, we have $K_T$ super-blocks, each containing $K_R-1 \choose {\mu_R}$ blocks. Moreover, for the integers $i, j,$ and $m$, we denote $i \oplus_{m} j$ as:
				\begin{equation}
					i \oplus_m j= 1 + (i+j-1 \quad \bmod m).
				\end{equation}  
				\begin{lemma}\label{main_lemma_theorem_mut1_oneshot}
					Given the prefetching phase for this case, for any receivers’ demand vector $\mathbf{d}$, the set of subfiles that need to be delivered can be partitioned into disjoint subsets of size $K_R$ as
					\begin{equation}\label{scheduling_tau1_one_shot}
						\bigcup_{\substack{
								%k_1 \in \left[1:{K_R-1 \choose {\mu_R}}\right]\\
								k_2 \in \left[1:K_T\right]\\ \mathcal{R} \subseteq [K_R]\backslash\{1\}:|\mathcal{R}| = {\mu_R} }}\left\{
						W_{d_1, S_{k_1, k_2, l_1}, \mathcal{R}},
						\bigcup_{\substack{j \in \mathcal{R}}} W_{d_j, S_{k_1, k_2, l_1}, \{1\}\cup\mathcal{R}\backslash\{j\}},
						\bigcup_{\substack{j \in [K_R]\backslash\{1\}\cup\mathcal{R}}} W_{d_j, S_{k_1, k_2, l_j}, \mathcal{R}}\right\},
					\end{equation}
					where $\left\{S_{k_1, 1, l_1}, S_{k_1, 1, l_2}, \ldots, S_{k_1, 1, l_{L+1}}\right\}$ denotes arbitrary distinct indices from the set $[K_T]$ for each  $k_1 \in \left[1:{K_R-1 \choose {\mu_R}}\right]$, and we have:
					$S_{k_1, k_2, l_j} = S_{k_1, 1, l_j} \oplus_{K_T} (k_2-1)$ for $k_1 \in \left[1:{K_R-1 \choose {\mu_R}}\right], k_2 \in [2:K_T],$ and $j \in [1:L+1]$. 
					%	\mathbb{d} 
				\end{lemma}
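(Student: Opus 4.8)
The plan is to prove that \eqref{scheduling_tau1_one_shot} is a genuine partition of the delivery requirements by a counting-plus-bijection argument, identifying the inner index $k_1$ with the set $\mathcal{R}$ through a fixed bijection between $[1:\binom{K_R-1}{\mu_R}]$ and the size-$\mu_R$ subsets of $[K_R]\setminus\{1\}$. First I would count: the subfiles that must be delivered are indexed by triples $(j,i,\mathcal{R}')$ with $j\in[K_R]$, $i\in[K_T]$, $\mathcal{R}'\subseteq[K_R]\setminus\{j\}$, $|\mathcal{R}'|=\mu_R$, of which there are exactly $K_R K_T\binom{K_R-1}{\mu_R}=K_R H$. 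Since the union in \eqref{scheduling_tau1_one_shot} ranges over the $H=K_T\binom{K_R-1}{\mu_R}$ pairs $(k_2,\mathcal{R})$ and each listed set has $K_R$ entries, it suffices to show every deliverable subfile occurs in at least one block; disjointness and the exact size-$K_R$ claim then follow from the matching cardinalities.

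Next I would check that each block is well formed. For a fixed pair $(k_2,\mathcal{R})$ the three groups address, respectively, receiver $1$ (validly, since $1\notin\mathcal{R}$), each $j\in\mathcal{R}$ with cache set $\{1\}\cup\mathcal{R}\setminus\{j\}$ (validly, since $j$ is excluded), and each $j\in[K_R]\setminus(\{1\}\cup\mathcal{R})$ with cache set $\mathcal{R}$ (validly, since $j\notin\mathcal{R}$). As $\{1\}$, $\mathcal{R}$, and $[K_R]\setminus(\{1\}\cup\mathcal{R})$ partition $[K_R]$ into $1+\mu_R+L=K_R$ receivers, each block serves every receiver exactly once; using that the $L+1$ seed transmitters $S_{k_1,1,l_1},\dots,S_{k_1,1,l_{L+1}}$ are distinct, the $K_R$ entries are themselves distinct even when two receivers request the same file.

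The core step is the surjectivity-with-uniqueness claim, which I would establish by fixing an arbitrary deliverable triple $(j,i,\mathcal{R}')$ and reconstructing the unique block containing it, in three exhaustive and mutually exclusive cases. If $j=1$, the subfile can only be of the first type, so $\mathcal{R}=\mathcal{R}'$. If $j\neq 1$ and $1\in\mathcal{R}'$, it can only be of the second type, forcing $\mathcal{R}=(\mathcal{R}'\setminus\{1\})\cup\{j\}$, which I would verify is a size-$\mu_R$ subset of $[K_R]\setminus\{1\}$ containing $j$. If $j\neq 1$ and $1\notin\mathcal{R}'$, it can only be of the third type, so $\mathcal{R}=\mathcal{R}'$ and the active seed is the index $l_{k_j}$ assigned to $j$. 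In each case $\mathcal{R}$, hence $k_1$ and the relevant seed transmitter $S_{k_1,1,\cdot}$, is uniquely determined by $(j,\mathcal{R}')$.

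Finally, and this is the step I expect to require the most care, I would pin down $k_2$ from $i$ using the cyclic structure $S_{k_1,k_2,l}=S_{k_1,1,l}\oplus_{K_T}(k_2-1)$: as $k_2$ runs over $[1:K_T]$ the map $k_2\mapsto S_{k_1,1,l}\oplus_{K_T}(k_2-1)$ is a bijection of $[K_T]$ onto itself, so exactly one $k_2$ yields the required transmitter $i$. Combining the three cases shows each deliverable subfile lies in precisely one block, which with the cardinality count establishes the partition. The main obstacle is the bookkeeping in the reconstruction: verifying that the three cases are genuinely exhaustive and disjoint (they are, being indexed by $j=1$, by $j\neq1$ with $1\in\mathcal{R}'$, and by $j\neq1$ with $1\notin\mathcal{R}'$) and that the reconstructed $\mathcal{R}$ always meets the constraints $\mathcal{R}\subseteq[K_R]\setminus\{1\}$, $|\mathcal{R}|=\mu_R$ that make it a legitimate block label.
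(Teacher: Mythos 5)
Your proposal is correct and takes essentially the same route as the paper's proof: match the total cardinality $K_R K_T\binom{K_R-1}{\mu_R}$ against the $K_R H$ slots and then verify coverage, where your three-way case split ($j=1$; $j\neq 1$ with $1\in\mathcal{R}'$; $j\neq 1$ with $1\notin\mathcal{R}'$) is exactly the decomposition the paper expresses through Pascal's rule $\binom{K_R-1}{\mu_R}=\binom{K_R-2}{\mu_R-1}+\binom{K_R-2}{\mu_R}$, and your observation that $k_2\mapsto S_{k_1,1,l}\oplus_{K_T}(k_2-1)$ is a bijection of $[K_T]$ is the paper's cyclic-shift argument. Your explicit reconstruction of the unique block containing each deliverable subfile is a slightly more rigorous rendering (it yields disjointness directly rather than inferring it from matching cardinalities), but the underlying argument is the same.
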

				%\begin{remark}
				%	Note that at each $	W_{d_1, S_{k_1, k_2, l_1}$ is intended for receiver $j$, 
					%\end{remark}
					\begin{proof}
						Given that the set \eqref{scheduling_tau1_one_shot} has a size of 
						$K_R K_T{{K_R-1} \choose {{\mu_R}}}$, representing the number of subfiles requiring delivery, it is sufficient to show that each receiver's $K_T{{K_R-1} \choose {{\mu_R}}}$ requested subfiles are included. Since $\mathcal{R} \subseteq [K_R]\backslash\{1\} \text{ s.t }|\mathcal{R}| = {\mu_R}$, the set $	\bigcup_{\substack{
								%k_1 \in \left[1:{K_R-1 \choose {\mu_R}}\right]\\
								%		k_2 \in \left[1:K_T\right]\\
								\mathcal{R} \subseteq [K_R]\backslash\{1\}:|\mathcal{R}| = {\mu_R} }}\left\{
						W_{d_1, S_{k_1, 1, l_1}, \mathcal{R}}
						%,
						%\bigcup_{\substack{j \in \mathcal{R}}} W_{d_j, S_{k_1, k_2, l_1}, \{1\}\cup\mathcal{R}\backslash\{j\}},
						%\bigcup_{\substack{j \in [K_R]\backslash\{1\}\cup\mathcal{R}}} W_{d_j, S_{k_1, k_2, l_j}, \mathcal{R}}
						\right\}$ 
						consists ${K_R - 1} \choose \mu_R$ distinct subfiles requested by receiver $1$, and as
						$S_{k_1, k_2, l_1} = S_{k_1, 1, l_1} \oplus_{K_T} (k_2-1)$ for $k_2 \in [2:K_T]$, all the $K_{T} {{K_R - 1} \choose \mu_R}$ subfiles are included. 
						On the other hand, for each receiver $j \neq 1$, the set 
						$$
						\bigcup_{\substack{
								%k_1 \in \left[1:{K_R-1 \choose {\mu_R}}\right]\\
								% 		k_2 \in \left[1:K_T\right]\\
								\mathcal{R} \subseteq [K_R]\backslash\{1\}:|\mathcal{R}| = {\mu_R} }}\left\{
						% W_{d_1, S_{k_1, k_2, l_1}, \mathcal{R}},
						\bigcup_{\substack{j \in \mathcal{R}}} W_{d_j, S_{k_1, 1, l_1}, \{1\}\cup\mathcal{R}\backslash\{j\}}
						% ,
						% \bigcup_{\substack{j \in [K_R]\backslash\{1\}\cup\mathcal{R}}} W_{d_j, S_{k_1, k_2, l_j}, \mathcal{R}}
						\right\}
						$$ 
						consists ${K_R - 2} \choose {\mu_R-1}$
						and the set 
						$$
						\bigcup_{\substack{
								%k_1 \in \left[1:{K_R-1 \choose {\mu_R}}\right]\\
								% 		k_2 \in \left[1:K_T\right]\\
								\mathcal{R} \subseteq [K_R]\backslash\{1\}:|\mathcal{R}| = {\mu_R} }}\left\{
						% W_{d_1, S_{k_1, k_2, l_1}, \mathcal{R}},
						% \bigcup_{\substack{j \in \mathcal{R}}} W_{d_j, S_{k_1, 1, l_1}, \{1\}\cup\mathcal{R}\backslash\{j\}}
						% ,
						\bigcup_{\substack{j \in [K_R]\backslash\{1\}\cup\mathcal{R}}} W_{d_j, S_{k_1, 1, l_j}, \mathcal{R}}
						\right\}
						$$
						consists ${K_R - 2} \choose {\mu_R}$
						distinct subfiles of the file requested by receiver $j$, and as $S_{k_1, k_2, l_j} = S_{k_1, 1, l_j} \oplus_{K_T} (k_2-1)$ for $k_2 \in [2:K_T]$ and $j \in [L+1]$, all the $K_{T} {{K_R - 1} \choose \mu_R}$ subfiles are included. This is due to using Pascal's rule as:
						$$
						{{K_R-1} \choose {\mu_R}} = {{K_R-2} \choose {{\mu_R}-1}} + {{K_R-2} \choose {{\mu_R}}},
						$$
						and, as a result, all of the requested subfiles are partitioned in \eqref{scheduling_tau1_one_shot}.
					\end{proof}
					
					\begin{lemma}\label{main_lemma_theorem_mut1_oneshot_2}
						At any communication block $k_1 \in \left[1:{K_R-1 \choose {\mu_R}}\right]$ from any super-block $k_2 \in [K_T]$, 
						there exists a choice of beamforming coefficients, along with a network realization by IRS, such that the set of $K_R$ subfiles in 
						\begin{equation}
							%		\bigcup_{\substack{
									%k_1 \in \left[1:{K_R-1 \choose {\mu_R}}\right]\\
									%				k_2 \in \left[1:K_T\right]\\ \mathcal{R} \subseteq [K_R]\backslash\{1\}:|\mathcal{R}| = {\mu_R} }}
							\left\{
							W_{d_1, S_{k_1, k_2, l_1}, \mathcal{R}},
							\bigcup_{\substack{j \in \mathcal{R}}} W_{d_j, S_{k_1, k_2, l_1}, \{1\}\cup\mathcal{R}\backslash\{j\}},
							\bigcup_{\substack{j \in [K_R]\backslash\{1\}\cup\mathcal{R}}} W_{d_j, S_{k_1, k_2, l_j}, \mathcal{R}}\right\},
						\end{equation}
						can be delivered interference-free by the transmitters $\left\{S_{k_1, k_2, l_1}, \ldots, S_{k_1, k_2, l_{L+1}}\right\}$ where at each communication block, $\mathcal{R}$ is a distinct subset of $[K_R]\backslash \{1\}$ of size $\mu_R$. 	
					\end{lemma}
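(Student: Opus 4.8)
The plan is to check, within a fixed block $(k_1,k_2)$ whose $K_R$ subfiles form a single cell of the partition in Lemma~\ref{main_lemma_theorem_mut1_oneshot}, that after each receiver cancels the interference it can reconstruct \emph{from its own cache}, the number of residual cross-links that must be annihilated by the IRS is exactly $Q=(L+1)L$, and that annihilating them is generically feasible. Since $\mu_T=1$, every subfile resides at a single transmitter, so the only ``beamforming'' choice is which subfiles each active transmitter superimposes; I would let every active transmitter send the coefficient-one sum of the block's subfiles assigned to it. Thus $S_{k_1,k_2,l_1}$ carries the $1+\mu_R$ subfiles $W_{d_1,S_{k_1,k_2,l_1},\mathcal{R}}$ and $W_{d_j,S_{k_1,k_2,l_1},\{1\}\cup\mathcal{R}\backslash\{j\}}$ (for $j\in\mathcal{R}$), while each of the other $L$ transmitters $S_{k_1,k_2,l_j}$ carries the single subfile $W_{d_j,S_{k_1,k_2,l_j},\mathcal{R}}$ for a distinct $j\in[K_R]\backslash(\{1\}\cup\mathcal{R})$.

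The core step is an interference audit at each receiver, decided solely by whether the receiver's index lies in the cache set $\mathcal{R}'$ of a subfile $W_{\cdot,\cdot,\mathcal{R}'}$. Receiver $1$ has every subfile of $S_{k_1,k_2,l_1}$ except $W_{d_1,S_{k_1,k_2,l_1},\mathcal{R}}$ cached (each set $\{1\}\cup\mathcal{R}\backslash\{j\}$ contains $1$) and cancels them, leaving only the $L$ subfiles $W_{d_j,S_{k_1,k_2,l_j},\mathcal{R}}$, which arrive from the $L$ distinct transmitters $S_{k_1,k_2,l_2},\ldots,S_{k_1,k_2,l_{L+1}}$ and so demand $L$ nulled links. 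A receiver $j\in\mathcal{R}$ finds $W_{d_1,\cdot,\mathcal{R}}$, every foreign subfile of $S_{k_1,k_2,l_1}$, and every complement subfile cached (all their cache sets contain $j$), so it cancels all non-desired streams and needs \emph{no} IRS nulling. A receiver $j\in[K_R]\backslash(\{1\}\cup\mathcal{R})$ caches none of the foreign subfiles, but the whole contribution of $S_{k_1,k_2,l_1}$ collapses onto the single cross-link $(S_{k_1,k_2,l_1},j)$, so nulling that link together with the $L-1$ links from the remaining complement transmitters suffices, i.e. $L$ links. Summing over receivers gives $L\cdot 1+0+L\cdot L=(L+1)L=Q$; I would then verify that these links are pairwise distinct and that no desired link appears among them, so every intended stream survives.

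It remains to realize the required network. By \eqref{indicator_network_matrix}, nulling a cross-link $(i,j)$ means imposing $\tilde{H}^{[ji]}=H^{[ji]}+\sum_{u=1}^{Q}H_{\mathrm{IR}}^{[ju]}q^{[u]}H_{\mathrm{TI}}^{[ui]}=0$, one complex linear equation in the $Q$ unknowns $q^{[1]},\ldots,q^{[Q]}$. The $Q$ nulling conditions thus form a $Q\times Q$ linear system whose coefficient matrix has the product-structured entries $H_{\mathrm{IR}}^{[ju]}H_{\mathrm{TI}}^{[ui]}$. Since all channel gains are drawn from continuous distributions, this matrix is almost surely invertible and, simultaneously, the surviving equivalent gains on the desired links are almost surely nonzero, by the same genericity argument as in \cite{bafghi_TCom}. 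Solving for the $q^{[u]}$ creates the desired network, after which every receiver, following its cache cancellations, sees only its intended subfile scaled by a nonzero coefficient plus noise, i.e. the point-to-point channel \eqref{channel_created}, so one DoF per receiver and all $K_R$ subfiles are delivered in the block.

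The hard part is this final step: showing that the $Q\times Q$ matrix with the \emph{constrained} entries $H_{\mathrm{IR}}^{[ju]}H_{\mathrm{TI}}^{[ui]}$ is generically full rank while keeping all desired gains nonzero. This is not a free generic-matrix statement, since the entries are products tied across rows and columns, so it calls for an explicit nonvanishing-determinant (algebraic-independence) argument in the spirit of \cite{bafghi_TCom}. By contrast, the combinatorial audit, though requiring care, is routine once the cache sets are read off the subfile indices.
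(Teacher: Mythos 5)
Your proposal is correct and takes essentially the same route as the paper's proof: the same $0/1$ beamforming choice, the same set of $(L+1)L$ nulled cross-links (you audit them per receiver while the paper specifies the network matrix per transmitter, but the realization is identical), and the same reduction to a $Q \times Q$ linear system in the IRS coefficients $q^{[u]}$ whose almost-sure solvability is delegated to Lemma 1 of \cite{bafghi_TCom}. The genericity of the product-structured coefficient matrix, which you rightly flag as the only nontrivial analytic step, is exactly what the paper also resolves by citation rather than by a new argument.
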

					\begin{remark}
						Note that the transmitters that are not involved in each communication block remain inactive and do not send any packets.
					\end{remark}
					\begin{proof}
						To this end, it is necessary to set the beamforming coefficient 	${\tilde{{{\mathbf{V}}}}_{\mathcal{R}}^{[ji]}}(m)$ of each selected subfiles in the transmitting set of each communication block as an identity matrix, while for the remaining subfiles, it should be set as a zero matrix. Then, we need an active IRS ‌with $Q = L(L+1)$ elements such that at each communication block 	$k_1 \in \left[1:{K_R-1 \choose {\mu_R}}\right]$ from the super-block 	$k_2 \in \left[1:K_T\right]$, the network matrix is realized in the following way:
						\begin{itemize}
							\item From the transmitter $S_{k_1, k_2, l_1}$ perspective:
							\begin{equation}
								n_{S_{k_1, k_2, l_1}, j} = 
								\left\{\begin{array}{l}
									1, \quad \forall j \in {\mathcal{R}\cup\{1\}}\\
									0, \quad \text{Otherwise}
								\end{array}\right.,
							\end{equation}
							\item From the transmitter $S_{k_1, k_2, l_j}$ perspective for  $j \in [K_R]\backslash\{1\}\cup\mathcal{R}$:
							\begin{equation}
								n_{S_{k_1, k_2, l_j}, k} = 
								\left\{\begin{array}{l}
									1, \quad \forall k \in {\mathcal{R}\cup\{j\}}\\
									0, \quad \text{Otherwise}
								\end{array}\right.,
							\end{equation}
							\item And from the 	$i \notin 	\left\{\bigcup_{\substack{j \in [1:L+1]}} S_{k_1, k_2, l_j}\right\}$ perspective:
							\begin{equation}
								n_{i, k} = 1, \quad \forall
								k \in [K_R].
							\end{equation}
						\end{itemize}
						
						To have such a realization using an active IRS, we first denote 	$\tilde{\mathcal{N}}$ as 
						\begin{equation}
							\tilde{\mathcal{N}} = \left\{(i, j) \mid i \in [K_T], j \in [K_R], n_{i, j} = 0\right\}.
						\end{equation}
						
						Now, we need to configure each element of the IRS‌ in a way that at each communication block $t$, the following equations hold:
						\begin{equation}
							\sum_{u \in\{1, \ldots, Q\}} X^{[i]}(t) H_{\mathrm{TI}}^{[u i]}(t) H_{\mathrm{IR}}^{[j u]}(t) q^{[u]}(t)=-X^{[i]}(t) H^{[j i]}(t),\quad(i, j) \in \tilde{\mathcal{N}},
						\end{equation}
						where by omitting 	$X^{[i]}(t)$, they can be written as 
						\begin{equation}\label{RIS_crosslink_removal_1_2_3}
							\sum_{u \in\{1, \ldots, Q\}}  H_{\mathrm{TI}}^{[u i]}(t) H_{\mathrm{IR}}^{[j u]}(t) q^{[u]}(t)= -H^{[j i]}(t),\quad(i, j) \in \tilde{\mathcal{N}}.
						\end{equation}
						
						This equation denotes the removal of the cross-link between transmitter $i$ and receiver $j$ and has a solution almost surely \cite[Lemma 1]{bafghi_TCom}. Hence, $L = K_R - \mu_R - 1$ unintended subfiles are not received at each receiver due to the undesired cross-links removal. Additionally, each receiver can cancel out $\mu_R$  subfiles due to its cached contents. These enable interference-free delivery of $K_R$ subfiles, each intended for a distinct receiver, at each communication block. Consequently, a DoF of $1$ is achievable at each block.
						%Utilizing the lemmas \ref{main_lemma_theorem_mut1_oneshot} and \ref{main_lemma_theorem_mut1_oneshot_2}, 
					\end{proof}

					%To illustrate the scheduling of packets in each communication block of each super-block, we first consider all the sets in the format $\{\{1\}, \mathcal{R}\}$, where $\mathcal{R} \subseteq [K_R]\backslash\{1\}$ and $|\mathcal{R}| = {\mu_R}$. It is worth mentioning that there are $K_R-1 \choose {\mu_R}$ of these sets, each corresponding to the third indices of the packets intended for receiver $1$. Then, for each communication block of the first super-block, we arbitrarily select one of these sets distinctively as the third index of the packet intended for receiver $1$. In the following, we assume all the sets in the format  
					\item $\mu_R+L+1 < K_R$: 
					In this case, we break each subfile $W_{d_j, i, \mathcal{R}}$ into ${{K_R-{\mu_R}-1} \choose L}$ smaller subfiles $W_{d_j, i, \mathcal{R}, \mathcal{L}}$, each being corresponded with a unique subset of receivers $\mathcal{R}\cup\{j\}\cup\mathcal{L}$, where $\mathcal{L} = \{\mathcal{R}_1, \ldots, \mathcal{R}_L\}$ is an arbitrary subset of size $L$ from 
					%$K_R-{\mu_R}-1$ 
					receivers ${K_R}\backslash\{j\}\cup\mathcal{R}$. 
					We assume the number of communication blocks to be $H =  K_T {{K_R} \choose {{\mu_R}+L+1}} {{{\mu_R}+L} \choose {\mu_R}}$, and, for the ease of reference, we term every ${{\mu_R}+L} \choose {\mu_R}$ communication blocks a super-block, and every $K_T$ super-blocks a hyper-block. Therefore, we have $K_R \choose {{\mu_R}+L+1}$ hyper-blocks.
					In the following, we consider a subset $\mathcal{C} \subseteq [K_R]$ of size ${\mu_R}+L+1$. Then, for the sub-network of $[K_T] \times \mathcal{C}$, we configure the IRS‌ and schedule the delivery of the subfiles as explained in the case \ref{first_case_tau1_oneshot} It is worth mentioning that the sub-network of $[K_T] \times [K_R]\backslash\mathcal{C}$ is supposed to be fully-connected, and, as a result, there is no need for further IRS configuration. Therefore, the number of IRS‌ elements needs to be $Q = L(L+1)$. 
					
					Now, it is easy to see that the DoF of $1$ is achievable for each of the receivers in the subset of $\mathcal{C}$ for ${K_T} {{{\mu_R}+L} \choose {\mu_R}}$ communication blocks of a hyper-block, and the DoF of all the receivers $[K_R]\backslash\mathcal{C}$ is equal to zero. Subsequently, it is sufficient to repeat this procedure for other subsets of size ${\mu_R}+L+1$ from the receivers. To calculate the DoF of the system, note that each receiver can decode its desired subfile interference-free in ${K_T}{{K_R-1} \choose {{\mu_R}+L}}{{{\mu_R}+L} \choose {{\mu_R}}}$ communication blocks and the number of all communication blocks is $K_T {{K_R} \choose {L+{\mu_R}+1}}{{{\mu_R}+L} \choose {{\mu_R}}}$. Consequently, the DoF of
					$$\mathrm{DoF_{user}} = \frac{{K_T}{{K_R-1} \choose {L+{\mu_R}}}{{{\mu_R}+L} \choose {{\mu_R}}}}{K_T {{K_R} \choose {L+{\mu_R}+1}}{{{\mu_R}+L} \choose {{\mu_R}}}} = \frac{L+{\mu_R}+1}{K_R},$$
					is achievable for each receiver, and the sum-DoF of $L+{\mu_R}+1$ can be achieved in this case, which completes the proof of Theorem \ref{Theorem_RIS_without_extended_symbol}.
				\end{enumerate}
%			\end{proof}
			
			%\vspace{-.4cm}
			\section*{Appendix B: Proof of Theorem \ref{Theorem_RIS_without_extended_symbol_tau_geq_1}}\label{App:Theorem_2}
			%\vspace{-.41cm}
			%Before proving the proof the Theorem \ref{Theorem_RIS_without_extended_symbol_tau_geq_1}, we present two lemmas which are the component 
			As mentioned earlier, we consider two cases: (I) $\mu_T+\mu_R+L \geq K_R$ and (II) $\mu_T+\mu_R+L <K_R$, and prove the theorem for each case, separately.
			\begin{enumerate}[wide, labelwidth=!, labelindent=0pt, label=(\Roman*)]
				\item $\mu_R+\mu_T+L \geq K_R$:
				\label{first_case_taugeq1_one_shot}
				In this case, we prove that the sum-DoF of $K_R$ is achievable. To this end, we set $L = K_R - (\mu_R+\mu_T)$. Based on the prefetching phase introduced in \eqref{send_general_tau1}, 	${K_T \choose {\mu_T}}{{K_R-1} \choose {\mu_R}}$ subfiles of each requested file should be transmitted, where each ${K_R-1} \choose {\mu_R}$ subfiles of these subfiles are available at $\mu_T$ transmitters. Now, we break each subfile 		$W_{d_j, \mathcal{S}, \mathcal{R}}$  into 		${K_R-{\mu_R}-1} \choose {{\mu_T}-1}$ smaller subfiles 		$W_{d_j, \mathcal{S}, \mathcal{R}, \mathcal{T}}$, each being corresponded with a unique subset of receivers 		$\mathcal{R}\cup\{j\}\cup\mathcal{T}$, where 		$\mathcal{T} = \{\mathcal{R}_1, \ldots, \mathcal{R}_{{\mu_T}-1}\}$ is an arbitrary subset of size $\mu_T-1$ from the receivers $[K_R]\backslash\{j\}\cup\mathcal{R}$. This allows each subfile $W_{d_j, \mathcal{S}, \mathcal{R}, \mathcal{T}}$ to be eliminated not only in the receivers $\mathcal{R}$ due to side information but also in the receivers $\mathcal{T}$ due to the possibility of cooperation between the transmitters, as there exist overlaps between the transmitters' cache memories. 
				
				Now, we assign each subfile ${W}_{d_j, \mathcal{S},  \mathcal{R}, \mathcal{T}}$ a symbol, denoted by 		$\tilde{\mathbf{x}}_{\mathcal{S},  \mathcal{R}, \mathcal{T}}^{[j]} \in \mathbb{C}^{\tilde{B}}$.  
				Then, for each of these symbols, we introduce a complex beamforming coefficient 	$\tilde{{{\mathbf{V}}}}_{\mathcal{S}, \mathcal{R}, \mathcal{T}}^{[ji]}(m) \in \mathbb{C}^{\tilde{B} \times \tilde{B}}$ at communication block $m \in [H]$ to enable the possibility of transmitting a linear combination of the symbols at each transmitter $i \in [M\mu_T]$. Consequently, the data stream at each transmitter $i$ at block $m \in [H]$ can be expressed as:
				\begin{equation}\label{TX_i_tau_geq_1}
					{\mathbf{x}}^{[i]}(m) =  \sum_{j = 1}^{K_R}\sum\limits_{\substack{\mathcal{R} \subseteq [K_R]:\\|\mathcal{R}| = {\mu_R}, j \notin \mathcal{R}}}^{} 
					\sum\limits_{\substack{\mathcal{S} \subseteq [K_T]:\\|\mathcal{S}| = {\mu_T}, i \in \mathcal{S}}}^{}
					\sum\limits_{\substack{\mathcal{T} \subseteq [K_R]\backslash\{j\}\cup\mathcal{R}:\\|\mathcal{T}| = {\mu_T}-1}}^{}
					{\tilde{{{\mathbf{V}}}}_{\mathcal{S}, \mathcal{R}, \mathcal{T}}^{[ji]}}(m){\tilde{\mathbf{x}}_{\mathcal{S}, \mathcal{R}, \mathcal{T}}^{[j]}}.
				\end{equation}		
				
				In the following, we first present Lemmas \ref{lemma_1RX_without_interference} and \ref{lemma_tplustau_RX_without_interference}, introducing two forms of subfiles which can be received interference-free at the intended receivers. 
				
				\begin{lemma}
					\label{lemma_1RX_without_interference}	
					In a cache-aided network composed of transmitters 			$\mathcal{S} = \{S_i\}_{i = 1}^{{\mu_T}}$ and receivers 		$\mathcal{R} = \{R_j\}_{j = 1}^{{\mu_R}+{\mu_T}}$, it is possible to decode the subfile 		$W_{d_j, \mathcal{S}, \mathcal{R}_j, \mathcal{T}_j}$ interference-free at one communication block in the receiver ${R}_j$ for arbitrary 			$\mathcal{R}_j, \mathcal{T}_j \subseteq \mathcal{R}\backslash\{R_j\}$, where 	$|\mathcal{R}_j| = {\mu_R}$,
					$|\mathcal{T}_j| = {\mu_T}-1$, and
					$\mathcal{R}_j \cup \mathcal{T}_j = \mathcal{R}\backslash\{R_j\}$.
				\end{lemma}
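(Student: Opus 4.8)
The plan is to realize the subfile $W_{d_j,\mathcal{S},\mathcal{R}_j,\mathcal{T}_j}$ as a \emph{single} cooperatively beamformed stream that the transmitters zero-force at the receivers of $\mathcal{T}_j$ and that the receivers of $\mathcal{R}_j$ cancel using their cached side information, leaving $R_j$ as the only receiver at which it appears. The arithmetic that makes this feasible without invoking the IRS is the following: in this sub-network there are exactly $\mu_T$ transmitters, all holding the subfile, so there are $\mu_T$ beamforming coefficients to choose, and we must null precisely $|\mathcal{T}_j|=\mu_T-1$ receivers; this leaves one residual degree of freedom with which to guarantee delivery to $R_j$. Note that $\mathcal{R}_j$ and $\mathcal{T}_j$ partition $\mathcal{R}\setminus\{R_j\}$ (their sizes sum to $\mu_R+\mu_T-1=|\mathcal{R}|-1$), so covering these two sets and $R_j$ accounts for every receiver.

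First I would have every transmitter $S_i\in\mathcal{S}$ send only the symbol $\tilde{\mathbf{x}}_{\mathcal{S},\mathcal{R}_j,\mathcal{T}_j}^{[j]}$ in the block, setting all other beamforming coefficients in \eqref{TX_i_tau_geq_1} to zero, and I would denote by $v_{S_i}$ the per-symbol scalar beamforming coefficient it applies (working symbol by symbol, since the channel and beamforming matrices are diagonal). Writing the received signal at a generic $R_k\in\mathcal{R}$, the zero-forcing requirement at the receivers of $\mathcal{T}_j$ is the homogeneous linear system
\[
\sum_{i=1}^{\mu_T}\tilde{H}^{[R_k S_i]}\,v_{S_i}=0,\qquad \forall\, R_k\in\mathcal{T}_j .
\]
This is a system of $\mu_T-1$ equations in $\mu_T$ unknowns whose $(\mu_T-1)\times\mu_T$ coefficient matrix has full row rank almost surely, because the channel coefficients are independent draws from a continuous distribution; hence its null space is one-dimensional, and I would pick any nonzero vector $(v_{S_1},\dots,v_{S_{\mu_T}})$ from it.

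Next I would establish interference-free decoding at $R_j$ itself. Each receiver in $\mathcal{R}_j$ holds $W_{d_j,\mathcal{S},\mathcal{R}_j,\mathcal{T}_j}$ by the content-placement rule (its index lies in the side-information set), so it subtracts the subfile and sees no residual interference, while each receiver in $\mathcal{T}_j$ receives it with coefficient zero by construction. It then remains to check that the net coefficient $\sum_{i=1}^{\mu_T}\tilde{H}^{[R_j S_i]}v_{S_i}$ at $R_j$ is nonzero, which turns \eqref{channel_created} into a clean point-to-point link carrying one DoF.

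The main obstacle is precisely this last almost-sure nondegeneracy step: the beamforming vector is fixed entirely by the channels from $\mathcal{S}$ to $\mathcal{T}_j$, so I must argue it is not accidentally orthogonal to the channel vector $(\tilde{H}^{[R_j S_1]},\dots,\tilde{H}^{[R_j S_{\mu_T}]})$ from $\mathcal{S}$ to $R_j$. Since the latter channels are independent of those that determine the beamforming vector and are absolutely continuous, the event that the coefficient vanishes is confined to a proper algebraic subvariety and therefore has measure zero, so delivery to $R_j$ succeeds with probability one; this is the same genericity mechanism underlying \cite[Lemma 1]{bafghi_TCom}. The remaining verification is routine bookkeeping.
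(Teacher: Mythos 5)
Your proposal is correct and follows essentially the same route as the paper's proof: all $\mu_T$ transmitters cooperatively beamform the single symbol, the $\mu_T-1$ receivers in $\mathcal{T}_j$ are handled by zero-forcing, the $\mu_R$ receivers in $\mathcal{R}_j$ cancel the subfile from their caches, and $R_j$ decodes it interference-free. The only difference is organizational: the paper folds the delivery constraint (coefficient $=1$ at $R_j$) and the nulling constraints into one square $\mu_T\times\mu_T$ linear system and asserts solvability by counting equations and unknowns, whereas you first solve the homogeneous nulling system (one-dimensional null space almost surely) and then separately argue almost-sure non-orthogonality to the channel vector of $R_j$ --- which makes explicit the genericity argument that the paper leaves implicit.
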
	
				\begin{proof}
					The data stream at transmitter $S_i \in \mathcal{S}$ in a communication block can be written as:
					\begin{equation}
						{\mathbf{x}}^{[i]} =  
						%				\sum_{j = 1}^{{\mu_R}+{\mu_T}}
						%			\sum\limits_{\substack{\mathcal{R}_j \subseteq \mathcal{R}:\\|\mathcal{R}_j| = t, j \notin \mathcal{R}_j}}^{} 
						%			\sum\limits_{\substack{\mathcal{T}_j \subseteq \mathcal{R}\backslash\{j\}\cup\mathcal{R}_j:\\|\mathcal{T}| = {\mu_T}-1}}^{}
						{\tilde{{\mathbf{V}}}_{\mathcal{S}, \mathcal{R}_j, \mathcal{T}_j}^{[ji]}}{\tilde{\mathbf{x}}_{\mathcal{S}, \mathcal{R}_j, \mathcal{T}_j}^{[j]}},
					\end{equation}		
					where 	${\tilde{{\mathbf{V}}}_{\mathcal{S}, \mathcal{R}_j, \mathcal{T}_j}^{[ji]}} \in \mathbb{C}^{\tilde{B} \times \tilde{B}}$ and ${\tilde{\mathbf{x}}_{\mathcal{S}, \mathcal{R}_j, \mathcal{T}_j}^{[j]}} \in \mathbb{C}^{\tilde{B} \times 1}$ are the diagonal beamforming matrix and the corresponding symbol at transmitter $i$ for the subfile $W_{d_j, \mathcal{S}, \mathcal{R}_j, \mathcal{T}_j}$. In the following, we demonstrate how the beamforming matrix should be determined to prove the lemma.
					
					The received signal at the receiver $R_j$ is:
					{\begin{align}
							{\mathbf{y}}^{[j]} & = \sum_{i = 1}^{{\mu_T}} \tilde{{\mathbf{H}}}^{[ji]} {{\mathbf{x}}}^{[i]} = \sum_{i = 1}^{{\mu_T}} \tilde{{\mathbf{H}}}^{[ji]} {\tilde{{{\mathbf{V}}}}_{\mathcal{S}, \mathcal{R}_j, \mathcal{T}_j}^{[ji]}}{\tilde{\mathbf{x}}_{\mathcal{S}, \mathcal{R}_j, \mathcal{T}_j}^{[j]}},
							%			\label{receiver_terms_cooperation_1}
					\end{align}}
					and at receiver 
					$R_k$ for $k \in [{\mu_R}+{\mu_T}]\backslash\{j\}$, we have:
					{\begin{align}
							{\mathbf{y}}^{[k]} & = \sum_{i = 1}^{{\mu_T}} \tilde{{\mathbf{H}}}^{[ki]} {{\mathbf{x}}}^{[i]} = \sum_{i = 1}^{{\mu_T}} \tilde{{\mathbf{H}}}^{[ki]} {\tilde{{\mathbf{V}}}_{\mathcal{S}, \mathcal{R}_j, \mathcal{T}_j}^{[ji]}}{\tilde{\mathbf{x}}_{\mathcal{S}, \mathcal{R}_j, \mathcal{T}_j}^{[j]}}.
							%			\label{receiver_terms_cooperation_1}
					\end{align}}
					
					Note that at the receiver $R_k \in \mathcal{R}_j$, the incoming interference caused by the symbol ${\tilde{\mathbf{x}}_{\mathcal{S}, \mathcal{R}_j, \mathcal{T}_j}^{[j]}}$ can be eliminated because of its cache contents. Now, to decode this symbol at $R_j$ without causing interference for the receivers $R_k \in \mathcal{R}\backslash\mathcal{R}_j$, the following equations should be satisfied:
					\begin{align}
						&\sum_{i = 1}^{{\mu_T}} \tilde{{\mathbf{H}}}^{[ji]} {\tilde{{\mathbf{V}}}_{\mathcal{S}, \mathcal{R}_j, \mathcal{T}_j}^{[ji]}}{\tilde{\mathbf{x}}_{\mathcal{S}, \mathcal{R}_j, \mathcal{T}_j}^{[j]}} = 1, \label{1st_lemma_decode_constraint}
						\\
						&\sum_{i = 1}^{{\mu_T}} \tilde{{\mathbf{H}}}^{[ki]} {\tilde{{\mathbf{V}}}_{\mathcal{S}, \mathcal{R}_j, \mathcal{T}_j}^{[ji]}}{\tilde{\mathbf{x}}_{\mathcal{S}, \mathcal{R}_j, \mathcal{T}_j}^{[j]}} = 0,\quad \forall k \in \mathcal{T}_j. \label{1st_lemma_decode_2nd_constraint}
					\end{align}
					
					As it can be observed, the equations	\eqref{1st_lemma_decode_constraint}-\eqref{1st_lemma_decode_2nd_constraint} form a system of 				$\tilde{B}{\mu_T}$ linear equations. On the other hand, the number of  variables $\{\tilde{{v}}_{\mathcal{S}, \mathcal{R}_j, \mathcal{T}_j}^{[ji]}\}$
					are equal to
					$\tilde{B}{\mu_T}$. This implies that there is always a set of coefficients such that the equations \eqref{1st_lemma_decode_constraint}-\eqref{1st_lemma_decode_2nd_constraint} are satisfied.
				\end{proof}
				\begin{lemma}
					\label{lemma_tplustau_RX_without_interference}
					Consider a cache-aided network composed of transmitters		$\mathcal{S} = \{S_i\}_{i = 1}^{{\mu_T}}$ and receivers 	$\mathcal{R} = \{R_j\}_{j = 1}^{{\mu_R}+{\mu_T}}$. Then, denote $R_{\mathcal{I}} \triangleq  \{R_{i_1}, R_{i_2}, \ldots, R_{i_I}\}$ for 	$\mathcal{I} = \{i_1, i_2, \ldots, i_I\}$. By suitable design of the beamforming coefficients at the transmitters, it is possible to decode the $\mu_T+\mu_R$ subfiles
					{\small \begin{align}
							\biggl\{	
							\bigcup_{\substack{j \in [1:{\mu_R}+1]}}  \left\{ W_{d_j, \mathcal{S}, R_{[1:{\mu_R}+1]\backslash\{j\}}, R_{[{\mu_R}+2:{\mu_R}+{\mu_T}]}}\right\},
							\bigcup_{\substack{j \in [{\mu_R}+2:{\mu_R}+{\mu_T}]}}  \left\{ W_{d_j, \mathcal{S}, R_{[2:{\mu_R}+1]}, R_{\{1\}\cup[{\mu_R}+2:{\mu_R}+{\mu_T}]\backslash\{j\}}}\right\}
							\biggr\},
					\end{align}}
					interference-free at the corresponding receivers in one communication block.
				\end{lemma}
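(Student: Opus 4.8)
The plan is to reduce the simultaneous transmission of all $\mu_T+\mu_R$ subfiles to $\mu_T+\mu_R$ \emph{independent} beamforming designs, each an instance of Lemma~\ref{lemma_1RX_without_interference}. The crucial observation is that in \eqref{TX_i_tau_geq_1} the coefficients $\tilde{\mathbf{V}}^{[ji]}_{\mathcal{S},\mathcal{R},\mathcal{T}}$ carried by distinct subfiles are \emph{distinct variables}, so the contribution of each subfile to the received signal can be shaped separately and then superposed by linearity of the channel. Since all $\tilde{\mathbf{V}}$ are diagonal, the argument decouples across the $\tilde{B}$ symbols, and it suffices to treat a single scalar symbol: writing $v^{[\ell i]}$ for the scalar diagonal entry of the beamformer that transmitter $S_i$ applies to the subfile destined for $R_\ell$, the effective channel seen by that subfile at receiver $R_k$ is $g^{[k\ell]}=\sum_{i=1}^{\mu_T} H^{[ki]} v^{[\ell i]}$.

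First I would verify the combinatorial design, which is the heart of the statement. For each of the $\mu_R+\mu_T$ listed subfiles, indexed by its destination $R_\ell$, write $\mathcal{R}_\ell$ and $\mathcal{T}_\ell$ for its side-information set and its cooperation (zero-forcing) set. A direct inspection of the two index groups, $j\in[1:\mu_R+1]$ and $j\in[\mu_R+2:\mu_R+\mu_T]$, shows that in both cases $\{R_\ell\}$, $\mathcal{R}_\ell$, and $\mathcal{T}_\ell$ are pairwise disjoint with $|\mathcal{R}_\ell|=\mu_R$, $|\mathcal{T}_\ell|=\mu_T-1$, and $\{R_\ell\}\cup\mathcal{R}_\ell\cup\mathcal{T}_\ell=\mathcal{R}$, i.e.\ these three sets partition the whole receiver set $\mathcal{R}$. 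This is exactly the coverage property needed: for any receiver $R_j$ and any other subfile $\ell\neq j$, the receiver $R_j$ lies either in $\mathcal{R}_\ell$ or in $\mathcal{T}_\ell$.

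Next, for each subfile $\ell$ I would impose the $\mu_T$ constraints $g^{[\ell\ell]}=1$ (decoding at $R_\ell$) and $g^{[k\ell]}=0$ for all $R_k\in\mathcal{T}_\ell$ (zero-forcing at the cooperation set). These are $\mu_T$ linear equations in the $\mu_T$ unknowns $\{v^{[\ell i]}\}_{i=1}^{\mu_T}$, with coefficient matrix $[H^{[ki]}]$ whose rows are indexed by $\{R_\ell\}\cup\mathcal{T}_\ell$; since the channel entries are drawn independently from a continuous distribution, this matrix is nonsingular almost surely, exactly as in Lemma~\ref{lemma_1RX_without_interference}, so the system has a unique solution. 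Because different subfiles use disjoint variable sets, solving these $\mu_R+\mu_T$ systems independently is consistent (equivalently, the joint system is block diagonal and hence full rank almost surely). Finally, at each receiver $R_j$ I would superpose the contributions: the desired subfile arrives with unit gain, every subfile with $R_j\in\mathcal{T}_\ell$ is zero-forced to $0$, and every subfile with $R_j\in\mathcal{R}_\ell$ is cached and thus subtracted as side information; by the partition property these three cases exhaust all subfiles, leaving $\tilde{\mathbf{x}}^{[j]}$ plus noise and yielding interference-free decoding.

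The linear algebra is a routine extension of Lemma~\ref{lemma_1RX_without_interference}, so I expect the main obstacle to be purely combinatorial: carefully checking that the two differently structured index groups together assign to each receiver, for every subfile it does not request, exactly one role — cache cancellation or zero-forcing — so that no interference term is left unaccounted for, while respecting that each subfile has only $\mu_T-1$ zero-forcing slots available. Once this bookkeeping is confirmed, the decoupling and the almost-sure nonsingularity of each $\mu_T\times\mu_T$ channel block close the argument.
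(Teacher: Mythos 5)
Your proposal is correct: the constraint set you impose (unit gain at the destination, zero-forcing at the cooperation set $\mathcal{T}_\ell$, cache cancellation at the side-information set $\mathcal{R}_\ell$) is exactly the constraint set the paper imposes, and your combinatorial check that $\{R_\ell\}$, $\mathcal{R}_\ell$, $\mathcal{T}_\ell$ partition $\mathcal{R}$ for every one of the $\mu_R+\mu_T$ subfiles is precisely the coverage property that makes the scheme work. The difference is organizational, and it is worth noting. The paper proceeds receiver by receiver: it expands the received signal at $R_1$, at $R_j$ for $j\in[2:\mu_R+1]$, and at $R_j$ for $j\in[\mu_R+2:\mu_R+\mu_T]$, classifies the terms into desired, cache-cancellable, and to-be-zero-forced (equations \eqref{desired_R1}--\eqref{undesired_2_Rtplus2_Rtplusttau}), and then asserts solvability of the resulting monolithic system of $\tilde{B}\mu_T(\mu_R+\mu_T)$ equations by counting that the number of unknowns is the same. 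You instead decompose subfile by subfile, observing that each subfile's constraints involve only its own $\mu_T$ beamforming variables, so the joint system is block diagonal with $\mu_R+\mu_T$ blocks, each an instance of Lemma \ref{lemma_1RX_without_interference}. Your route buys two things: modularity (Lemma \ref{lemma_1RX_without_interference} is genuinely reused rather than re-derived) and a more rigorous solvability argument, since ``number of equations equals number of variables'' does not by itself guarantee a solution, whereas almost-sure nonsingularity of each $\mu_T\times\mu_T$ block (a nonzero polynomial in independent continuous channel coefficients) does. What the paper's per-receiver organization buys is that the bookkeeping you relegate to the combinatorial paragraph --- which interference terms at a given receiver are cached and which must be nulled --- is displayed explicitly in the three receiver classes, which is arguably easier to audit term by term. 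Both proofs end with the identical set of linear constraints; yours is the cleaner and logically tighter packaging.
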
	
				\begin{proof}
					The data stream at the transmitter $i$‌ can be written as:
					\begin{align}
						{\mathbf{x}}^{[i]} & =  \sum_{j = 1}^{{\mu_R}+1}
						{\tilde{{\mathbf{V}}}_{\mathcal{S}, R_{[1:{\mu_R}+1]\backslash\{j\}}, R_{[{\mu_R}+2:{\mu_R}+{\mu_T}]}}^{[ji]}}{\tilde{\mathbf{x}}_{\mathcal{S}, R_{[1:{\mu_R}+1]\backslash\{j\}}, R_{[{\mu_R}+2:{\mu_R}+{\mu_T}]}}^{[j]}} \nonumber \\
						& +
						\sum_{j = {\mu_R}+2}^{{\mu_R}+{\mu_T}}
						{\tilde{{\mathbf{V}}}_{\mathcal{S}, R_{[2:{\mu_R}+1]}, R_{\{1\}\cup[{\mu_R}+2:{\mu_R}+{\mu_T}]\backslash\{j\}}}^{[ji]}}{\tilde{\mathbf{x}}_{\mathcal{S}, R_{[2:{\mu_R}+1]}, R_{\{1\}\cup[{\mu_R}+2:{\mu_R}+{\mu_T}]\backslash\{j\}}}^{[j]}}, 
					\end{align}		
					%Should I add description for the beamforming matrix and the symbol?
					where, in the following, we demonstrate how the beamforming matrix should be determined to prove the lemma. 
					
					The received signal at the receiver 				$R_j \in \mathcal{R}$ can be written as:
					\begin{align}\label{general_term_rx_tau_geq1}
						{\mathbf{y}}^{[j]} & = \sum_{i = 1}^{{\mu_T}} \tilde{{\mathbf{H}}}^{[ji]} {{\mathbf{x}}}^{[i]} = \sum_{i = 1}^{{\mu_T}} \tilde{{\mathbf{H}}}^{[ji]}  \biggl(\sum_{k = 1}^{{\mu_R}+1}
						{\tilde{{\mathbf{V}}}_{\mathcal{S}, R_{[1:{\mu_R}+1]\backslash\{k\}}, R_{[{\mu_R}+2:{\mu_R}+{\mu_T}]}}^{[ki]}}{\tilde{\mathbf{x}}_{\mathcal{S}, R_{[1:{\mu_R}+1]\backslash\{k\}}, R_{[{\mu_R}+2:{\mu_R}+{\mu_T}]}}^{[k]}} \nonumber \\
						& +
						\sum_{k = {\mu_R}+2}^{{\mu_R}+{\mu_T}}
						{\tilde{{\mathbf{V}}}_{\mathcal{S}, R_{[2:{\mu_R}+1]}, R_{\{1\}\cup[{\mu_R}+2:{\mu_R}+{\mu_T}]\backslash\{k\}}}^{[ki]}}{\tilde{\mathbf{x}}_{\mathcal{S}, R_{[2:{\mu_R}+1]}, R_{\{1\}\cup[{\mu_R}+2:{\mu_R}+{\mu_T}]\backslash\{k\}}}^{[k]}}\biggr),
					\end{align}
					where the involving terms can be divided into two categories including desired and undesired, based on the index $j$:
					\begin{itemize}[wide, labelwidth=!, labelindent=0pt]
						\item $j = 1$: 
						In this case, the equation 						\eqref{general_term_rx_tau_geq1} for the receiver 
						$R_1$ can be rewritten as:
						\begin{align}
							{\mathbf{y}}^{[1]} & = \underbrace{\sum_{i = 1}^{{\mu_T}} \tilde{{\mathbf{H}}}^{[1i]} 
								{\tilde{{\mathbf{V}}}_{\mathcal{S}, R_{[1:{\mu_R}+1]\backslash\{1\}}, R_{[{\mu_R}+2:{\mu_R}+{\mu_T}]}}^{[1i]}}{\tilde{\mathbf{x}}_{\mathcal{S}, R_{[1:{\mu_R}+1]\backslash\{1\}}, R_{[{\mu_R}+2:{\mu_R}+{\mu_T}]}}^{[1]}}}_{\text{desired term}} \nonumber \\
							& + \underbrace{\sum_{i = 1}^{{\mu_T}} \tilde{{\mathbf{H}}}^{[1i]} \sum_{k = 2}^{{\mu_R}+1} 
								{\tilde{{\mathbf{V}}}_{\mathcal{S}, R_{[1:{\mu_R}+1]\backslash\{k\}}, R_{[{\mu_R}+2:{\mu_R}+{\mu_T}]}}^{[ki]}}{\tilde{\mathbf{x}}_{\mathcal{S}, R_{[1:{\mu_R}+1]\backslash\{k\}}, R_{[{\mu_R}+2:{\mu_R}+{\mu_T}]}}^{[k]}}}_{\text{undesired terms $1$}} \nonumber \\
							& + \underbrace{\sum_{i = 1}^{{\mu_T}} \tilde{{\mathbf{H}}}^{[1i]} 
								\sum_{k = {\mu_R}+2}^{{\mu_R}+{\mu_T}}
								{\tilde{{\mathbf{V}}}_{\mathcal{S}, R_{[2:{\mu_R}+1]}, R_{\{1\}\cup[{\mu_R}+2:{\mu_R}+{\mu_T}]\backslash\{k\}}}^{[ki]}}{\tilde{\mathbf{x}}_{\mathcal{S}, R_{[2:{\mu_R}+1]}, R_{\{1\}\cup[{\mu_R}+2:{\mu_R}+{\mu_T}]\backslash\{k\}}}^{[k]}}}_{\text{undesired terms $2$}}. \label{linear_eq_Rx1}
						\end{align}
						
						As it can be seen, the equation	\eqref{linear_eq_Rx1} is composed of three terms: a desired term that indicates the desired subfile by this receiver and undesired terms $1$ and $2$ that include the subfiles intended at other receivers and should be eliminated at this receiver. Note that the undesired terms $1$ can easily be eliminated using the cache contents at this receiver. To decode the intended subfile of this receiver, as well as to eliminate the undesired terms $2$, the following equations should be satisfied:
						{\begin{align}
								& \sum_{i = 1}^{{\mu_T}} \tilde{{\mathbf{H}}}^{[1i]} 
								{\tilde{{\mathbf{V}}}_{\mathcal{S}, R_{[1:{\mu_R}+1]\backslash\{1\}}, R_{[{\mu_R}+2:{\mu_R}+{\mu_T}]}}^{[1i]}}{\tilde{\mathbf{x}}_{\mathcal{S}, R_{[1:{\mu_R}+1]\backslash\{1\}}, R_{[{\mu_R}+2:{\mu_R}+{\mu_T}]}}^{[1]}} = 1, \label{desired_R1}\\
								& \sum_{i = 1}^{{\mu_T}} \tilde{{\mathbf{H}}}^{[1i]} 
								{\tilde{{\mathbf{V}}}_{\mathcal{S}, R_{[2:{\mu_R}+1]}, R_{\{1\}\cup[{\mu_R}+2:{\mu_R}+{\mu_T}]\backslash\{k\}}}^{[ki]}}{\tilde{\mathbf{x}}_{\mathcal{S}, R_{[2:{\mu_R}+1]}, R_{\{1\}\cup[{\mu_R}+2:{\mu_R}+{\mu_T}]\backslash\{k\}}}^{[k]}} = 0, \nonumber \\
								%\quad 
								& \forall k \in [{\mu_R}+2:{\mu_R}+{\mu_T}], \label{undesired_R1}
						\end{align}}
						where, in total, includes 	${\tilde{B}}{\mu_T}$ linear equations.
						\item $j \in [2: \mu_R+1]$: 
						Similarly, the equation 	\eqref{general_term_rx_tau_geq1} for this case be rewritten as:
						\begin{align}
							{\mathbf{y}}^{[j]} & = \underbrace{\sum_{i = 1}^{{\mu_T}} \tilde{{\mathbf{H}}}^{[ji]} 
								{\tilde{\mathbf{V}}_{\mathcal{S}, R_{[1:{\mu_R}+1]\backslash\{j\}}, R_{[{\mu_R}+2:{\mu_R}+{\mu_T}]}}^{[ji]}}{\tilde{\mathbf{x}}_{\mathcal{S}, R_{[1:{\mu_R}+1]\backslash\{1\}}, R_{[{\mu_R}+2:{\mu_R}+{\mu_T}]}}^{[j]}}}_{\text{desired term}} \nonumber \\
							& + \underbrace{\sum_{i = 1}^{{\mu_T}} \tilde{\mathbf{H}}^{[ji]} \sum_{\substack{k = 1, \\ k \neq j}}^{{\mu_R}+1} 
								{\tilde{\mathbf{V}}_{\mathcal{S}, R_{[1:{\mu_R}+1]\backslash\{k\}}, R_{[{\mu_R}+2:{\mu_R}+{\mu_T}]}}^{[ki]}}{\tilde{\mathbf{x}}_{\mathcal{S}, R_{[1:{\mu_R}+1]\backslash\{k\}}, R_{[{\mu_R}+2:{\mu_R}+{\mu_T}]}}^{[k]}}}_{\text{undesired terms $1$}} \nonumber \\
							& + \underbrace{\sum_{i = 1}^{{\mu_T}} \tilde{\mathbf{H}}^{[ji]} 
								\sum_{k = {\mu_R}+2}^{{\mu_R}+{\mu_T}}
								{\tilde{\mathbf{V}}_{\mathcal{S}, R_{[2:{\mu_R}+1]}, R_{\{1\}\cup[{\mu_R}+2:{\mu_R}+{\mu_T}]\backslash\{k\}}}^{[ki]}}{\tilde{\mathbf{x}}_{\mathcal{S}, R_{[2:{\mu_R}+1]}, R_{\{1\}\cup[{\mu_R}+2:{\mu_R}+{\mu_T}]\backslash\{k\}}}^{[k]}}}_{\text{undesired terms $2$}}.\label{linear_eq_Rx_2_tpluas1}
						\end{align}
						
						As it can be seen, the equation	\eqref{linear_eq_Rx_2_tpluas1} is composed of three terms: a desired term that indicates the desired subfile by this receiver and undesired terms $1$ and $2$ that includes the subfiles intended at other receivers and should be eliminated at this receiver. Note that all of the subfiles corresponding to the undesired terms $1$ and $2$ have been already cached in the prefetching phase in this receiver and, therefore, can be easily eliminated. Hence, to decode the intended subfile of this receiver interference-free, the following equations should be satisfied:
						{\begin{align}
								\sum_{i = 1}^{{\mu_T}} \tilde{\mathbf{H}}^{[ji]} 
								{\tilde{\mathbf{V}}_{\mathcal{S}, R_{[1:{\mu_R}+1]\backslash\{j\}}, R_{[{\mu_R}+2:{\mu_R}+{\mu_T}]}}^{[ji]}}{\tilde{\mathbf{x}}_{\mathcal{S}, R_{[1:{\mu_R}+1]\backslash\{j\}}, R_{[{\mu_R}+2:{\mu_R}+{\mu_T}]}}^{[1]}} = 1,\quad j \in [2:{\mu_R}+1],\label{desired_R2_Rtplus1}
						\end{align}}
						where, in total, includes 	${\tilde{B}}{\mu_R}$ linear equations.
						\item $j \in [{\mu_R}+2:{\mu_R}+{\mu_T}]$: The equation 			\eqref{general_term_rx_tau_geq1} at the receiver $R_j$‌ can be rewritten as:‌
						\begin{align}
							{\mathbf{y}}^{[j]} & = \underbrace{\sum_{i = 1}^{{\mu_T}} \tilde{\mathbf{H}}^{[ji]} 
								{\tilde{\mathbf{V}}_{\mathcal{S}, R_{[2:{\mu_R}+1]}, R_{\{1\}\cup[{\mu_R}+2:{\mu_R}+{\mu_T}]\backslash\{j\}}}^{[ji]}}{\tilde{\mathbf{x}}_{\mathcal{S}, R_{[2:{\mu_R}1]}, R_{\{1\}\cup[{\mu_R}+2:{\mu_R}+{\mu_T}]\backslash\{j\}}}^{[j]}}}_{\text{desired term}} \nonumber \\
							& + \underbrace{\sum_{i = 1}^{{\mu_T}} \tilde{\mathbf{H}}^{[ji]} \sum_{\substack{k = 1}}^{{\mu_R}+1} 
								{\tilde{\mathbf{V}}_{\mathcal{S}, R_{[1:{\mu_R}+1]\backslash\{k\}}, R_{[{\mu_R}+2:{\mu_R}+{\mu_T}]}}^{[ki]}}{\tilde{\mathbf{x}}_{\mathcal{S}, R_{[1:{\mu_R}+1]\backslash\{k\}}, R_{[{\mu_R}+2:{\mu_R}+{\mu_T}]}}^{[k]}}}_{\text{undesired terms $1$}} \nonumber \\
							& + \underbrace{\sum_{i = 1}^{{\mu_T}} \tilde{\mathbf{H}}^{[ji]} 
								\sum_{\substack{k = {\mu_R}+2, \\ k \neq j}}^{{\mu_R}+{\mu_T}}
								{\tilde{\mathbf{V}}_{\mathcal{S}, R_{[2:{\mu_R}+1]}, R_{\{1\}\cup[{\mu_R}+2:{\mu_R}+{\mu_T}]\backslash\{k\}}}^{[ki]}}{\tilde{\mathbf{x}}_{\mathcal{S}, R_{[2:{\mu_R}+1]}, R_{\{1\}\cup[{\mu_R}+2:{\mu_R}+{\mu_T}]\backslash\{k\}}}^{[k]}}}_{\text{undesired terms $2$}}.\label{linear_eq_Rx_tplus2_tpluastau}
						\end{align}
						As it can be observed, the equation							\eqref{linear_eq_Rx_tplus2_tpluastau} is composed of three terms: a desired term that indicates the desired subfile by this receiver and undesired terms $1$ and $2$ that include the subfiles intended at other receivers and should be eliminated at this receiver. Note that none of the subfiles corresponding to undesired terms $1$ and $2$ have been cached in the prefetching phase in this receiver. Hence, to decode the corresponding subfile of this receiver interference-free, the following equations should be satisfied:
						{\begin{align}
								& \sum_{i = 1}^{{\mu_T}} \tilde{\mathbf{H}}^{[ji]} 
								{\tilde{\mathbf{V}}_{\mathcal{S}, R_{[2:{\mu_R}+1]}, R_{\{1\}\cup[{\mu_R}+2:{\mu_R}+{\mu_T}]\backslash\{j\}}}^{[ji]}}{\tilde{\mathbf{x}}_{\mathcal{S}, R_{[2:{\mu_R}+1]}, R_{\{1\}\cup[{\mu_R}+2:{\mu_R}+{\mu_T}]\backslash\{j\}}}^{[j]}} = 1,\nonumber\\
								%\quad 
								& \forall j \in [{\mu_R}+2:{\mu_R}+{\mu_T}], \label{desired_Rtplus2_Rtplusttau} \\
								& \sum_{i = 1}^{{\mu_T}} \tilde{\mathbf{H}}^{[ji]}
								{\tilde{\mathbf{V}}_{\mathcal{S}, R_{[1:{\mu_R}+1]\backslash\{k\}}, R_{[{\mu_R}+2:{\mu_R}+{\mu_T}]}}^{[ki]}}{\tilde{\mathbf{x}}_{\mathcal{S}, R_{[1:t+1]\backslash\{k\}}, R_{[{\mu_R}+2:{\mu_R}+{\mu_T}]}}^{[k]}} = 0, \nonumber \\
								%\quad 
								& \forall k \in [1:{\mu_R}+1], \quad \forall j \in [{\mu_R}+2:{\mu_R}+{\mu_T}],\label{undesired_1_Rtplus2_Rtplusttau}\\
								& \sum_{i = 1}^{{\mu_T}} \tilde{\mathbf{H}}^{[ji]} 
								{\tilde{\mathbf{V}}_{\mathcal{S}, R_{[2:{\mu_R}+1]}, R_{\{1\}\cup[{\mu_R}+2:{\mu_R}+{\mu_T}]\backslash\{k\}}}^{[ki]}}{\tilde{\mathbf{x}}_{\mathcal{S}, R_{[2:{\mu_R}+1]}, R_{\{1\}\cup[{\mu_R}+2:{\mu_R}+{\mu_T}]\backslash\{k\}}}^{[k]}} = 0, \nonumber \\
								%\quad 
								& \forall j, k \in [{\mu_R}+2:{\mu_R}+{\mu_T}], \quad k \neq j,\label{undesired_2_Rtplus2_Rtplusttau}
						\end{align}}
						where, in total, include	$\tilde{B}({\mu_T}-1)({\mu_R}+{\mu_T})$ linear equations.
						
						As it can be seen, the equations 
						\eqref{desired_R1},
						\eqref{undesired_R1},
						\eqref{desired_R2_Rtplus1},
						\eqref{desired_Rtplus2_Rtplusttau},
						\eqref{undesired_1_Rtplus2_Rtplusttau},
						and
						\eqref{undesired_2_Rtplus2_Rtplusttau} introduce a system of 	$\tilde{B}{\mu_T}({\mu_R}+{\mu_T})$ linear equations. On the other hand, the number of variables is equal to 	$\tilde{B}{\mu_T}({\mu_R}+{\mu_T})$.  This implies that there is always a set of coefficients such that the aforementioned equations \eqref{1st_lemma_decode_constraint}-\eqref{1st_lemma_decode_2nd_constraint} are satisfied.
					\end{itemize}
				\end{proof}
We now present two proofs: one that uses Baranyai's theorem and another that does not.
				\renewcommand{\labelenumii}{\theenumii}
				\renewcommand{\theenumii}{\theenumi\Alph{enumii}.}
				\begin{enumerate}
					[wide, labelwidth=!, labelindent=0pt]
					\item Proof using Baranyai's theorem:
					In this case, we assume the number of communication blocks as 	$H = {M{\mu_T} \choose {{\mu_T}}}{{K_R-1} \choose {{\mu_R}}}{{K_R-{\mu_R}-1} \choose {{\mu_T} - 1}}$. Then, we configure the network matrix such that each transmitted subfile can be successfully decoded at the intended receiver, eliminated in $\mu_R$ receivers due to the receivers' cache contents, in $\mu_T-1$ receivers due to utilizing the ZF technique, and in L receivers due to the IRS‌ configuration. Hence, after $H$ communication blocks, all the requested subfiles are decoded successfully at the intended receivers, and the DoF of $1$ can be achievable for each receiver, shown in Lemmas \ref{lemma1_tau_geq1}-\ref{lemma2_tau_geq1}. 
					
					Here, for ease of reference, we term every 	${K_R-1 \choose {\mu_R}}{K_R-{\mu_R}-1 \choose {\mu_T}-1}$ blocks a super-block and every $M$ super-blocks a hyper-block. Therefore, we have 
					${M{\mu_T}-1 \choose {\mu_T}-1}$ hyper-blocks.
					%	, each containing $M$ super-blocks where each super-block is composed of 		${K_R-1 \choose {\mu_R}}{K_R-{\mu_R}-1 \choose {\mu_T}-1}$ communication blocks.
					Given the existence of an $(M, \mu_T)$-subset partition by Baranyai's theorem, we number each of the composing subsets from $1$ to ${M\mu_T} \choose \mu_T$. This is done in such a way that each of the  sets of numbers	$[1:M]$,
					$[M+1:2M]$,
					$\ldots$,
					$\left[{M{\mu_T} \choose {{\mu_T}}}-M+1:{M{\mu_T} \choose {{\mu_T}}}\right]$
					is assigned to the composing subsets of each of the distinct partitions. The number assigned to the set $S$ is denoted by $\kappa(S)$.
					
					% Now, we are ready to present the lemmas:
					\begin{lemma}
						\label{lemma1_tau_geq1}
						Given the prefetching phase for this case, for any receivers’ demand vector $\mathbf{d}$, the set of subfiles that need to be delivered can be partitioned into disjoint subsets of size $K_R$ as
						\begin{align}\label{scheduling_tau_geq1}
							\bigcup_{\substack{
									%k_1 \in \left[1:{K_R-1 \choose {\mu_R}}\right]\\
									k_2 \in \left[1:M\right]\\ 
									k_3 \in \left[1:{1:{{M\mu_T-1} \choose {\mu_T-1}}}\right]\\
									\mathcal{R} \subseteq [K_R]\backslash\{1\}:|\mathcal{R}| = {\mu_R}\\
									\mathcal{T} \subseteq [K_R]\backslash \mathcal{R}\cup\{1\}: |\mathcal{T}| = {\mu_T} -1	
							}}
							\Biggl\{&
							W_{d_1, S_{k_1, k_2, k_3, l_1}, \mathcal{R}, \mathcal{T}},
							\bigcup_{\substack{j \in \mathcal{R}}} W_{d_j, S_{k_1, k_2, k_3, l_1}, \{1\}\cup\mathcal{R}\backslash\{j\}, \mathcal{T}}, \nonumber \\
							&
							\bigcup_{\substack{j \in \mathcal{T}}} W_{d_j, S_{k_1, k_2, k_3, l_1}, \mathcal{R}, \{1\}\cup\mathcal{T}\backslash\{j\}},  \bigcup_{\substack{j \in [K_R]\backslash\{1\}\cup\mathcal{R}\cup\mathcal{T}}} W_{d_j, S_{k_1, k_2, k_3, l_j}, \mathcal{R}, \mathcal{T}}
							\Biggr\},
						\end{align}
						where $\left\{S_{k_1, 1, 1, l_1}, S_{k_1, 1, 1, l_2}, \ldots, S_{k_1, 1, 1, l_{L+1}}\right\}$ denotes arbitrary distinct indices from an arbitrary set in the form 
						\begin{equation}
							[kM-M+1:kM], \quad 
							k \in \left[1:{M{\mu_T}-1 \choose {{\mu_T}-1}}\right],
						\end{equation}
						for each  $k_1 \in \left[1:{K_R-1 \choose {\mu_R}}{K_R-\mu_R-1 \choose {\mu_T-1}}\right]$, and we have:
						\begin{equation}
							\kappa(S_{k_1, k_2, k_3, l_j}) = \kappa(S_{k_1, 1, 1, l_j}) \oplus_{M} (k_2-1) + (k_3-1)M, \quad (k_2, k_3) \neq (1, 1),
						\end{equation}
						for
						$k_1 \in \left[1:{K_R-1 \choose {\mu_R}}{K_R-{\mu_R}-1 \choose {{\mu_T}-1}}\right]$, 
						$k_2 \in [1:M]$,
						$k_3 \in [1:{M{\mu_T}-1 \choose {{\mu_T}-1}}]$, and 
						$j \in [L+1]$.
					\end{lemma}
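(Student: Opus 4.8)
The plan is to follow the counting template of Lemma~\ref{main_lemma_theorem_mut1_oneshot}, enlarged to accommodate the two extra combinatorial dimensions that appear for $\mu_T \geq 2$: the zero-forcing index set $\mathcal{T}$ and the overlapping transmitter subsets $\mathcal{S}$. First I would record that each block on the right-hand side of \eqref{scheduling_tau_geq1} contains exactly $K_R$ subfiles: the leading term contributes one subfile (for receiver $1$), the union over $j \in \mathcal{R}$ contributes $\mu_R$, the union over $j \in \mathcal{T}$ contributes $\mu_T - 1$, and the union over $j \in [K_R]\backslash(\{1\}\cup\mathcal{R}\cup\mathcal{T})$ contributes $K_R - 1 - \mu_R - (\mu_T - 1) = L$, for a total of $1 + \mu_R + (\mu_T - 1) + L = \mu_R + \mu_T + L = K_R$, using $L = K_R - (\mu_R + \mu_T)$.

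Next I would match cardinalities. Each requested file is split into $\binom{M\mu_T}{\mu_T}\binom{K_R-1}{\mu_R}\binom{K_R-\mu_R-1}{\mu_T-1}$ (sub-)subfiles, so the total number to deliver is $K_R$ times this. The block index $(\mathcal{R},\mathcal{T})$, which is identified with $k_1$, together with $(k_2,k_3) \in [1:M]\times[1:\binom{M\mu_T-1}{\mu_T-1}]$ yields $M\binom{M\mu_T-1}{\mu_T-1}\binom{K_R-1}{\mu_R}\binom{K_R-\mu_R-1}{\mu_T-1}$ blocks of size $K_R$. Equality of the two totals reduces exactly to the identity $M\binom{M\mu_T-1}{\mu_T-1} = \binom{M\mu_T}{\mu_T}$, which is precisely the subset count guaranteed by the $(M,\mu_T)$-subset partition of Definition~\ref{partition_M_tau}. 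With the cardinalities equal, it then suffices to prove that every requested subfile appears at least once; disjointness of \eqref{scheduling_tau_geq1} follows for free.

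For the coverage step I would argue receiver by receiver. For receiver $1$ the only contributing term is $W_{d_1, S_{k_1,k_2,k_3,l_1}, \mathcal{R}, \mathcal{T}}$; for each fixed $(\mathcal{R},\mathcal{T})$ the shift rule $\kappa(S_{k_1,k_2,k_3,l_1}) = \kappa(S_{k_1,1,1,l_1})\oplus_M (k_2-1) + (k_3-1)M$ lets $\kappa$ sweep all of $[1:\binom{M\mu_T}{\mu_T}]$ as $(k_2,k_3)$ vary, so every transmitter subset $\mathcal{S}$ is realized and all of receiver $1$'s requests are hit. For a receiver $j \neq 1$, I would classify its requested subfiles $W_{d_j,\mathcal{S},\mathcal{R}',\mathcal{T}'}$ by the location of the distinguished index $1$: either $1 \in \mathcal{R}'$, or $1 \in \mathcal{T}'$, or $1 \notin \mathcal{R}'\cup\mathcal{T}'$. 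These three mutually exclusive cases correspond precisely to the second, third and fourth unions of \eqref{scheduling_tau_geq1} (through $\mathcal{R}' = \{1\}\cup\mathcal{R}\backslash\{j\}$, through $\mathcal{T}' = \{1\}\cup\mathcal{T}\backslash\{j\}$, and through $(\mathcal{R}',\mathcal{T}') = (\mathcal{R},\mathcal{T})$, respectively), and each defines a bijection recovering the generating pair $(\mathcal{R},\mathcal{T})$. This three-way split is the natural generalization of the Pascal identity $\binom{K_R-1}{\mu_R} = \binom{K_R-2}{\mu_R-1} + \binom{K_R-2}{\mu_R}$ used in Lemma~\ref{main_lemma_theorem_mut1_oneshot}. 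Applying the same $\oplus_M$ sweep to $S_{k_1,k_2,k_3,l_j}$ then realizes every $\mathcal{S}$, so receiver $j$ also receives its full request.

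I expect the main obstacle to be the bookkeeping in this last step: one must verify that the three location-of-$1$ cases are genuinely exhaustive and disjoint (they are, since $\mathcal{R}'$ and $\mathcal{T}'$ are disjoint subsets of $[K_R]\backslash\{j\}$, so the element $1 \neq j$ lands in exactly one of the three regions), that each case is a clean bijection in both the receiver-set and the transmitter-set variables, and, most delicately, that the cyclic relabelling $\kappa(\cdot)\oplus_M(k_2-1)+(k_3-1)M$ never repeats a transmitter subset within a hyper-block. The disjointness of the $M$ subsets inside each partition, guaranteed by Definition~\ref{partition_M_tau}, is exactly what turns this relabelling into a sweep over all $\binom{M\mu_T}{\mu_T}$ subsets rather than an overcount; the cardinality match from the second step is the safety net that upgrades \emph{covers everything} to \emph{partitions everything}.
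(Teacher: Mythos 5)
Your proposal is correct and follows essentially the same route as the paper's proof: both rest on the cardinality match $M\binom{M\mu_T-1}{\mu_T-1}=\binom{M\mu_T}{\mu_T}$ together with a receiver-by-receiver coverage argument, where receiver $1$ is handled by the $\oplus_M$ sweep over transmitter subsets and each receiver $j\neq 1$ by splitting its requested subfiles across the second, third, and fourth unions. Your classification by the location of the index $1$ (in $\mathcal{R}'$, in $\mathcal{T}'$, or in neither) is precisely the bijective form of the generalized Pascal identity
$\binom{K_R-1}{\mu_R}\binom{K_R-\mu_R-1}{\mu_T-1}=\binom{K_R-2}{\mu_R-1}\binom{K_R-\mu_R-1}{\mu_T-1}+\binom{K_R-2}{\mu_R}\binom{K_R-\mu_R-2}{\mu_T-2}+\binom{K_R-2}{\mu_R}\binom{K_R-\mu_R-2}{\mu_T-1}$
that the paper invokes to count the three contributions.
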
	
					\begin{remark}
						Note that here, each of the $	S_{k_1, k_2, k_3, l_j}$s is a subset of transmitters of size $\mu_T > 1$. 
					\end{remark}
					\begin{proof}
						Given that the set \eqref{scheduling_tau_geq1} has a size of 
						$K_R {M{\mu_T} \choose {{\mu_T}}}{{K_R-1} \choose {{\mu_R}}}{{K_R-{\mu_R}-1} \choose {{\mu_T} - 1}}$, representing the number of subfiles requiring delivery, it is sufficient to show that each receiver's ${M{\mu_T} \choose {{\mu_T}}}{{K_R-1} \choose {{\mu_R}}}{{K_R-{\mu_R}-1} \choose {{\mu_T} - 1}}$ requested subfiles are included. Since $\mathcal{R} \subseteq [K_R]\backslash\{1\} \text{ s.t }|\mathcal{R}| = {\mu_R}$, $\mathcal{T} \subseteq [K_R]\backslash \mathcal{R}\cup\{1\} \text{ s.t } |\mathcal{T}| = {\mu_T} -1$, and the set 
						$$
						\bigcup_{\substack{
								%k_1 \in \left[1:{K_R-1 \choose {\mu_R}}\right]\\
								%			k_2 \in \left[1:M\right]\\ 
								%			k_3 \in \left[1:{1:{{M\mu_T-1} \choose {\mu_T-1}}}\right]\\
								\mathcal{R} \subseteq [K_R]\backslash\{1\}:|\mathcal{R}| = {\mu_R}\\
								\mathcal{T} \subseteq [K_R]\backslash \mathcal{R}\cup\{1\}: |\mathcal{T}| = {\mu_T} -1	
						}}
						\Biggl\{
						W_{d_1, S_{k_1, 1, 1, l_1}, \mathcal{R}, \mathcal{T}} \Biggr\},
						$$
						consists ${{K_R - 1} \choose \mu_R}{{K_R -\mu_R - 1} \choose \mu_T-1}$ distinct subfiles requested by receiver $1$, and as
						$	S_{k_1, k_2, k_3, l_1} = S_{k_1, 1, 1, l_1} \oplus_{M} (k_2-1) + (k_3-1)M$ for $k_2 \in [2:M]$ and
						$k_3 \in [2:{M{\mu_T}-1 \choose {{\mu_T}-1}}]$, all the ${M{\mu_T} \choose {{\mu_T}}}{{K_R-1} \choose {{\mu_R}}}{{K_R-{\mu_R}-1} \choose {{\mu_T} - 1}}$ subfiles are included. 
						
						On the other hand, for each receiver $j \neq 1$, the set 
						$$
						\bigcup_{\substack{
								%k_1 \in \left[1:{K_R-1 \choose {\mu_R}}\right]\\
								%			k_2 \in \left[1:M\right]\\ 
								%			k_3 \in \left[1:{1:{{M\mu_T-1} \choose {\mu_T-1}}}\right]\\
								\mathcal{R} \subseteq [K_R]\backslash\{1\}:|\mathcal{R}| = {\mu_R}\\
								\mathcal{T} \subseteq [K_R]\backslash \mathcal{R}\cup\{1\}: |\mathcal{T}| = {\mu_T} -1	
						}}
						\Biggl\{\bigcup_{\substack{j \in \mathcal{R}}} W_{d_j, S_{k_1, 1, 1, l_1}, \{1\}\cup\mathcal{R}\backslash\{j\}, \mathcal{T}}
						\Biggr\},
						$$
						consists ${{K_R - 2} \choose {\mu_R-1}}{{K_R - \mu_R - 1} \choose {\mu_R-1}}$, the set 
						$$
						\bigcup_{\substack{
								%k_1 \in \left[1:{K_R-1 \choose {\mu_R}}\right]\\
								%k_2 \in \left[1:M\right]\\ 
								%k_3 \in \left[1:{1:{{M\mu_T-1} \choose {\mu_T-1}}}\right]\\
								\mathcal{R} \subseteq [K_R]\backslash\{1\}:|\mathcal{R}| = {\mu_R}\\
								\mathcal{T} \subseteq [K_R]\backslash \mathcal{R}\cup\{1\}: |\mathcal{T}| = {\mu_T} -1	
						}}
						\Biggl\{
						\bigcup_{\substack{j \in \mathcal{T}}} W_{d_j, S_{k_1, 1, 1, l_1}, \mathcal{R}, \{1\}\cup\mathcal{T}\backslash\{j\}}
						\Biggr\}
						$$
						consists ${{K_R - 2} \choose {\mu_R}}{{K_R - \mu_R-2} \choose {\mu_T-2}}$, and the set 
						$$
						\bigcup_{\substack{
								%k_1 \in \left[1:{K_R-1 \choose {\mu_R}}\right]\\
								%	k_2 \in \left[1:M\right]\\ 
								%	k_3 \in \left[1:{1:{{M\mu_T-1} \choose {\mu_T-1}}}\right]\\
								\mathcal{R} \subseteq [K_R]\backslash\{1\}:|\mathcal{R}| = {\mu_R}\\
								\mathcal{T} \subseteq [K_R]\backslash \mathcal{R}\cup\{1\}: |\mathcal{T}| = {\mu_T} -1	
						}}
						\Biggl\{ \bigcup_{\substack{j \in [K_R]\backslash\{1\}\cup\mathcal{R}\cup\mathcal{T}}} W_{d_j, S_{k_1,1, 1, l_j}, \mathcal{R}, \mathcal{T}}
						\Biggr\}
						$$
						consists  ${{K_R - 2} \choose {\mu_R}}{{K_R - \mu_R - 2} \choose {\mu_T-1}}$ distinct subfiles of the file requested by receiver $j$, and as 	$	S_{k_1, k_2, k_3, l_1} = S_{k_1, 1, 1, l_1} \oplus_{M} (k_2-1) + (k_3-1)M$ for $k_2 \in [2:M]$,
						$k_3 \in [2:{M{\mu_T}-1 \choose {{\mu_T}-1}}]$, and $j \in [L+1]$, all the ${M{\mu_T} \choose {{\mu_T}}}{{K_R-1} \choose {{\mu_R}}}{{K_R-{\mu_R}-1} \choose {{\mu_T} - 1}}$ subfiles are included. This is due to using Pascal's rule as:
						{ \begin{align*}
								{K_R-1 \choose \mu_R}{K_R-\mu_R-1 \choose \mu_T-1} & = {{K_R-2} \choose {\mu_R}}{{K_R-{\mu_R}-2} \choose {{\mu_T}-1}} + {{K_R-2} \choose {\mu_R}-1}	{{K_R-{\mu_R}-1} \choose {{\mu_T}-1}}\\ & +	{{K_R-2} \choose {\mu_R}}	{{K_R-{\mu_R}-2} \choose {{\mu_T}-2}},
						\end{align*}}
						and, as a result, all of the requested subfiles are partitioned in \eqref{scheduling_tau_geq1}.
					\end{proof}
					
					%In the following, based on the format of the disjoint subsets of the partitioning introduced in \eqref{scheduling_tau_geq1}, we prove Lemmas \ref{lemma_1RX_without_interference} and \ref{lemma_tplustau_RX_without_interference}.
					
					\begin{lemma}
						\label{lemma2_tau_geq1}
						At any communication block $k_1 \in [{K_R-1 \choose {\mu_R}}{K_R-{\mu_R}-1 \choose {\mu_T}-1}]$ from any super-block $k_2 \in [1:M]$ from any hyper-block $k_3 \in [1:{M\mu_T-1 \choose {\mu_T-1}}]$, 
						there exists a choice of beamforming coefficients, along with a network realization by IRS, such that the set of $K_R$ subfiles in 
						\begin{align}
							%	\label{scheduling_tau_geq1}
							%	\bigcup_{\substack{
									%k_1 \in \left[1:{K_R-1 \choose {\mu_R}}\right]\\
									%			k_2 \in \left[1:M\right]\\ 
									%			k_3 \in \left[1:{1:{{M\mu_T-1} \choose {\mu_T-1}}}\right]\\
									%			\mathcal{R} \subseteq [K_R]\backslash\{1\}:|\mathcal{R}| = {\mu_R}\\
									%			\mathcal{T} \subseteq [K_R]\backslash \mathcal{R}\cup\{1\}: |\mathcal{T}| = {\mu_T} -1	
									%	}}
							\Biggl\{&
							W_{d_1, S_{k_1, k_2, k_3, l_1}, \mathcal{R}, \mathcal{T}},
							\bigcup_{\substack{j \in \mathcal{R}}} W_{d_j, S_{k_1, k_2, k_3, l_1}, \{1\}\cup\mathcal{R}\backslash\{j\}, \mathcal{T}}, \nonumber \\
							&
							\bigcup_{\substack{j \in \mathcal{T}}} W_{d_j, S_{k_1, k_2, k_3, l_1}, \mathcal{R}, \{1\}\cup\mathcal{T}\backslash\{j\}},  \bigcup_{\substack{j \in [K_R]\backslash\{1\}\cup\mathcal{R}\cup\mathcal{T}}} W_{d_j, S_{k_1, k_2, k_3, l_j}, \mathcal{R}, \mathcal{T}}
							\Biggr\},
						\end{align}
						can be delivered interference-free 
						%by the transmitters $\left\{S_{k_1, k_2, k_3, l_1}, \ldots, S_{k_1, k_2, k_3, l_{L+1}}\right\}$ 
						where at each communication block, $\mathcal{R}$ is a distinct subset of $[K_R]\backslash \{1\}$ of size $\mu_R$ and $\mathcal{T}$ is a distinct subset of $[K_R]\backslash \mathcal{R}\cup\{1\}$ of size $\mu_T-1$. 	
					\end{lemma}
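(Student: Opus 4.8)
The plan is to reproduce, at the level of size-$\mu_T$ \emph{virtual} transmitters, the same topology-plus-zero-forcing argument used for the $\mu_T=1$ case in Lemma~\ref{main_lemma_theorem_mut1_oneshot_2}. First I would split the $K_R$ scheduled subfiles of this block into two groups. The \emph{main group} consists of the first $\mu_R+\mu_T$ subfiles, all carried by the single set $S_{k_1,k_2,k_3,l_1}$ and intended for the $\mu_R+\mu_T$ receivers $\{1\}\cup\mathcal{R}\cup\mathcal{T}$; the \emph{extra group} consists of the remaining $L$ subfiles, each carried by a distinct set $S_{k_1,k_2,k_3,l_j}$ and intended for a distinct receiver in $[K_R]\backslash(\{1\}\cup\mathcal{R}\cup\mathcal{T})$. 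The disjointness of these $L+1$ transmitter sets, which is what makes the cooperative zero-forcing inside different sets independent, is exactly the property guaranteed by the $(M,\mu_T)$-subset partition of Definition~\ref{partition_M_tau} together with the indexing fixed in Lemma~\ref{lemma1_tau_geq1}.

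Next I would specify the beamforming and the IRS jointly. For every subfile I set to zero the beamforming coefficients of all transmitters outside its carrying set, so that only the $\mu_T$ intended transmitters are active for it. For the main group I invoke Lemma~\ref{lemma_tplustau_RX_without_interference} on the sub-network $\bigl(S_{k_1,k_2,k_3,l_1},\,\{1\}\cup\mathcal{R}\cup\mathcal{T}\bigr)$: this yields beamforming coefficients that deliver each of the $\mu_R+\mu_T$ subfiles to its receiver (gain one), let the $\mu_R$ cache-holding receivers cancel it as side information, and zero-force it at the $\mu_T-1$ receivers of the corresponding $\mathcal{T}$. For each extra subfile I invoke Lemma~\ref{lemma_1RX_without_interference} on its own set to deliver it and zero-force it at its $\mu_T-1$ cooperating receivers while its $\mu_R$ cache receivers cancel it. The only interference left unaccounted for is, for each active set, the signal leaking onto the $L$ receivers it neither serves nor covers by cache/ZF; these are removed by the IRS. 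Concretely I collect those cross-links into $\tilde{\mathcal{N}}$ and solve the cross-link-removal system \eqref{RIS_crosslink_removal_1_2_3}, which has a solution almost surely by \cite[Lemma~1]{bafghi_TCom} whenever the number of conditions does not exceed $Q$. A final bookkeeping step then checks that at every receiver the desired subfile arrives with gain one while all $K_R-1$ interferers vanish, split as $\mu_R$ (cache) $+\,(\mu_T-1)$ (ZF) $+\,L$ (IRS), so each of the $K_R$ receivers attains a DoF of one in the block.

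The hard part will be the IRS accounting: showing that $Q=(L+1)L$ elements genuinely suffice even though each virtual set now contains $\mu_T>1$ physical transmitters. A naive disconnection of every physical transmitter of each active set from each of its $L$ far receivers would cost $\mu_T L(L+1)$ cross-links, which overshoots the budget; the crux is to argue that the cooperative beamforming collapses each active size-$\mu_T$ set into a single \emph{effective} stream toward each far receiver, so that only one effective cross-link per (active set, far receiver) pair must be annihilated, i.e.\ exactly $L(L+1)$ of them, matching the $\mu_T=1$ topology of Lemma~\ref{main_lemma_theorem_mut1_oneshot_2}. Making this precise requires handling the coupling between the IRS conditions and the (exactly determined) zero-forcing system of Lemma~\ref{lemma_tplustau_RX_without_interference}, and then verifying that the resulting combined system of IRS and beamforming equations is still solvable almost surely under the assumed continuous channel distributions. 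I expect the interference-cancellation bookkeeping to be routine once the decomposition is fixed, and this feasibility-with-$L(L+1)$-elements argument to be where the real work lies.
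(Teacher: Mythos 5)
Your outline tracks the paper's proof step for step: the same split of the $K_R$ scheduled subfiles into a main group carried by $S_{k_1,k_2,k_3,l_1}$ (handled by Lemma~\ref{lemma_tplustau_RX_without_interference}) and $L$ singleton groups carried by the disjoint sets $S_{k_1,k_2,k_3,l_j}$ (handled by Lemma~\ref{lemma_1RX_without_interference}), zero beamforming outside the carrying sets, and IRS nulling of the leftover cross-links via \eqref{RIS_crosslink_removal_1_2_3} and \cite[Lemma 1]{bafghi_TCom}. The gap is exactly the step you defer as ``the real work'': showing that $Q=L(L+1)$ elements suffice. Your proposed fix --- that cooperative beamforming collapses each active set into a single effective stream toward each far receiver, leaving only $L(L+1)$ effective cross-links to annihilate --- fails for the main set. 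That set transmits $\mu_R+\mu_T$ independent symbols, and a far receiver $j'$ is interference-free only if $\sum_{i\in S}\tilde{H}^{[j'i]}\,\tilde{v}^{[ki]}=0$ for every one of those symbols $k$, where $S=S_{k_1,k_2,k_3,l_1}$. But the beamforming rows are pinned down by the square gain-one/zero-forcing system of Lemma~\ref{lemma_tplustau_RX_without_interference}, and the $\mu_T$ rows serving the receivers $\{1\}\cup\mathcal{T}$ satisfy an identity relation against the (invertible) channel matrix of those receivers, so they span $\mathbb{C}^{\mu_T}$. Orthogonality of the vector $\bigl(\tilde{H}^{[j'i]}\bigr)_{i\in S}$ to a full-rank family forces $\tilde{H}^{[j'i]}=0$ for every $i\in S$; that is, all $\mu_T$ physical links from the main set to each far receiver must be nulled. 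The collapse idea is valid only for the singleton sets, which carry one stream each.

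Consequently, even the cheapest version of your route needs $\mu_T L$ equations for the main set (physical links) plus $L^2$ for the singleton sets (effective links), i.e.\ $L(\mu_T+L)$ in total --- and the singleton-set equations are additionally coupled to the beamforming, since those coefficients are themselves functions of the post-IRS channels, a situation not covered by \cite[Lemma 1]{bafghi_TCom}. Since $L(\mu_T+L)>L(L+1)=Q$ whenever $\mu_T\geq 2$, and \cite[Lemma 1]{bafghi_TCom} yields almost-sure solvability only when the number of nulling equations does not exceed $Q$, your plan cannot be completed within the stated IRS budget. For completeness: the paper's own proof does not close this hole either; it disconnects every physical transmitter of every active set from its $L$ far receivers, so its set $\tilde{\mathcal{N}}$ has $\mu_T L(L+1)$ elements, yet it invokes \cite[Lemma 1]{bafghi_TCom} with only $L(L+1)$ unknowns --- a count that is consistent only in the $\mu_T=1$ setting of Theorem~\ref{Theorem_RIS_without_extended_symbol}. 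So the difficulty you isolated is genuine, but neither your effective-stream argument nor the paper's literal argument resolves it; a correct proof would require either more IRS elements (at least $L(\mu_T+L)$ for a joint design, and $\mu_T L(L+1)$ for the pure topology-realization argument) or a genuinely different delivery and nulling design.
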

					%\begin{remark}
					%Note that the transmitters which are not involved in each communication block remain inactive and do not sent any packet.
					%\end{remark}
					\begin{proof}
						Note that the selected subfiles in each communication block are either in the format introduced in Lemma \ref{lemma_1RX_without_interference} or \ref{lemma_tplustau_RX_without_interference}. Consequently, it is sufficient to set the beamforming coefficient $	{\tilde{{{\mathbf{V}}}}_{\mathcal{S}, \mathcal{R}, \mathcal{T}}^{[ji]}}(m)$ for each selected subfile accordingly. For the unselected subfiles in each communication block, the beamforming matrix should be set as a zero matrix. Then, we need an active IRS ‌with $Q = \mu_T L(L+1)$ elements such that at each communication block 		$k_1 \in \left[1:{K_R-1 \choose {\mu_R}}{K_R-{\mu_R}-1 \choose {{\mu_T}-1}}\right]$ from the super-block 		$k_2 \in [1:M]$ from the hyper-block 	$k_3 \in [1:{M{\mu_T}-1 \choose {{\mu_T}-1}}]$, the network matrix 	$\mathcal{N} =\left[n_{i,j}\right] \in {\{0, 1\}}^{\left\{M{\mu_T} \times K_R\right\}}$ is realized in the following way:
						\begin{itemize}
							\item From the transmitters 	$i \in S_{k_1, k_2, k_3, l_1}$ perspectives:
							\begin{equation}
								n_{i, j} = 
								\left\{\begin{array}{l}
									1, \quad \forall j \in {\mathcal{R}\cup\mathcal{T}\cup\{1\}}\\
									0, \quad \text{Otherwise}
								\end{array}\right.,
							\end{equation}
							\item From the transmitters 	$i \in S_{k_1, k_2, k_3, l_j}$ perspectives for 	$j \in [K_R]\backslash\{1\}\cup\mathcal{R}\cup\mathcal{T}$:
							\begin{equation}
								n_{i, k} = 
								\left\{\begin{array}{l}
									1, \quad \forall k \in {\mathcal{R}\cup\mathcal{T}\cup\{j\}}\\
									0, \quad \text{Otherwise}
								\end{array}\right.,
							\end{equation}
							\item And from the 	transmitters	$i \notin 	\left\{\bigcup_{\substack{j \in [1:L+1]}} S_{k_1, k_2, k_3, l_j}\right\}$ perspectives:
							\begin{equation}
								n_{i, k} = 1, \quad \forall
								k \in [K_R].
							\end{equation}
						\end{itemize}
						
						To have such a realization using an active IRS, we first denote 	$\tilde{\mathcal{N}}$ as 
						\begin{equation}
							\tilde{\mathcal{N}} = \left\{(i, j) \mid i \in [K_T], j \in [K_R], n_{i, j} = 0\right\}.
						\end{equation}
						
						Now, we need to configure each element of the IRS‌ in a way that at each communication block $t$, the following equations hold:
						\begin{equation}
							\sum_{u \in\{1, \ldots, Q\}} X^{[i]}(t) H_{\mathrm{TI}}^{[u i]}(t) H_{\mathrm{IR}}^{[j u]}(t) q^{[u]}(t)=-X^{[i]}(t) H^{[j i]}(t),\quad(i, j) \in \tilde{\mathcal{N}},
						\end{equation}
						where by omitting 	$X^{[i]}(t)$, they can be written as 
						\begin{equation}\label{RIS_crosslink_removal_1_2_3}
							\sum_{u \in\{1, \ldots, Q\}}  H_{\mathrm{TI}}^{[u i]}(t) H_{\mathrm{IR}}^{[j u]}(t) q^{[u]}(t)= -H^{[j i]}(t),\quad(i, j) \in \tilde{\mathcal{N}}.
						\end{equation}
						
						This equation denotes the removal of the cross-link between transmitter $i$ and receiver $j$ and has a solution almost surely \cite[Lemma 1]{bafghi_TCom}. Hence, $L = K_R - \mu_R - \mu_T$ unintended subfiles are not received at each receiver due to the undesired cross-links removal. Additionally, each receiver can cancel out $\mu_R$  subfiles due to its cached contents. Moreover, $\mu_T-1$ of the subfiles counted as outgoing interference for each receiver can be zero-forced due to transmitters' cooperation. These enable interference-free delivery of $K_R$ subfiles, each intended for a distinct receiver, at each communication block. Consequently, a DoF of $1$ is achievable at each block.
						%Utilizing the lemmas \ref{main_lemma_theorem_mut1_oneshot} and \ref{main_lemma_theorem_mut1_oneshot_2}, 
					\end{proof}
					\item Proof without Baranyai's theorem:
					If we do not use an existing $(M, \mu_T)$-subset partition given by Baranyai's theorem, we need to modify the prefetching phase slightly. Before explaining the necessary modification, let us state a lemma and define a symbol.
					\begin{lemma}
						The number of ordered partitions for the set $\{1, 2, \ldots, M\mu_T\}$ into $M$ subsets 	$A_1, A_2, \ldots, A_M$, each of size $\mu_T$, is 	
						$$
						{M{\mu_T} \choose {\mu_T}}
						{M{\mu_T}-{\mu_T} \choose {\mu_T}}
						\ldots
						{{\mu_T} \choose {\mu_T}}
						=
						\frac{(M{\mu_T})!}{({\mu_T}!)^{M}},
						$$
						where, for the ease of reference, we denote this number with
						$
						{M{\mu_T} \choose \underbrace{{\mu_T}, \ldots, {\mu_T}}_{\text{$M$}}}
						$. Moreover, denote such an ordered partition with 			$(A_1, A_2, \ldots, A_M) \vdash [M{\mu_T}]$.
						%\begin{remark}
						Note that $(A_1, A_2, A_3, \ldots, A_M) \vdash [M{\mu_T}]$
						and
						$(A_2, A_1, A_3, \ldots, A_M) \vdash [M{\mu_T}]$ are different.
						%\end{remark}
						\begin{proof}
							To this end, the number of ways to choose the first subset of size $\mu_T$ is 	${M{\mu_T} \choose {\mu_T}}$. Then, the number of ways to choose the second subset of size $\mu_T$ is 			${M{\mu_T}-{\mu_T} \choose {\mu_T}}$ and so on. Eventually, the total number, using the rule of product, is attained. 
						\end{proof}
					\end{lemma}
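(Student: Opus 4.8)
The plan is to count the ordered partitions by a direct sequential-selection argument and then simplify the resulting product of binomial coefficients. First I would build an arbitrary ordered partition $(A_1, A_2, \ldots, A_M) \vdash [M\mu_T]$ one block at a time: choose the $\mu_T$ elements of $A_1$ from the full ground set, which can be done in ${M\mu_T \choose \mu_T}$ ways; then choose $A_2$ from the $M\mu_T - \mu_T$ remaining elements in ${M\mu_T - \mu_T \choose \mu_T}$ ways; and in general, after $A_1, \ldots, A_{k-1}$ have been fixed, choose $A_k$ from the $M\mu_T - (k-1)\mu_T$ as-yet unused elements in ${M\mu_T - (k-1)\mu_T \choose \mu_T}$ ways. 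The last block $A_M$ is then forced, contributing the factor ${\mu_T \choose \mu_T} = 1$.

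The point that justifies invoking the rule of product is that this construction is a bijection: every ordered partition arises from exactly one such sequence of choices (the $k$-th choice is simply $A_k$), and conversely each valid sequence of pairwise disjoint $\mu_T$-subsets that exhausts the ground set yields a distinct ordered partition. Because the blocks are labelled by their position in the tuple, the orderings $(A_1, A_2, \ldots, A_M)$ and any permutation thereof are correctly distinguished and counted separately, which is precisely the notion of ordered partition emphasized in the remark of the statement. Multiplying the counts at each stage therefore gives the total
\[
\prod_{k=1}^{M} {M\mu_T - (k-1)\mu_T \choose \mu_T}.
\]

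Finally I would reduce this product to the claimed closed form. Writing each binomial coefficient as a ratio of factorials,
\[
\prod_{k=1}^{M} \frac{\bigl(M\mu_T - (k-1)\mu_T\bigr)!}{\mu_T!\,\bigl(M\mu_T - k\mu_T\bigr)!},
\]
the factor $\bigl(M\mu_T - k\mu_T\bigr)!$ in the denominator of the $k$-th term cancels against the numerator $\bigl(M\mu_T - k\mu_T\bigr)!$ of the $(k+1)$-st term, so the entire product telescopes down to $\frac{(M\mu_T)!}{(\mu_T!)^{M}\,0!}$, which equals $\frac{(M\mu_T)!}{(\mu_T!)^{M}}$ as asserted. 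I do not expect any genuine obstacle in this argument, since it is an elementary counting identity; the only step requiring care is the bijection, namely verifying that labelling blocks by selection order neither over- nor under-counts the ordered partitions, and that the residual ground set being empty after the $M$-th selection forces $A_M$ uniquely so that the final factor is indeed $1$.
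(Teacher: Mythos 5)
Your proof is correct and follows essentially the same route as the paper's: sequentially choose $A_1, A_2, \ldots, A_M$ and apply the rule of product, with the product of binomial coefficients ${M\mu_T \choose \mu_T}{M\mu_T - \mu_T \choose \mu_T}\cdots{\mu_T \choose \mu_T}$ telescoping to $\frac{(M\mu_T)!}{(\mu_T!)^M}$. The extra care you take in justifying the bijection and carrying out the factorial cancellation explicitly only adds rigor to the paper's brief argument; there is no substantive difference.
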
 
					Now, instead of breaking each file 		$W_k$
					into
					${M{\mu_T} \choose {\mu_T}}{K_R \choose {\mu_R}}$ distinct subfiles, we break it into 		$	{M{\mu_T} \choose \underbrace{{\mu_T}, \ldots, {\mu_T}}_{\text{$M$}}}{K_R \choose {\mu_R}}$ distinct subfiles. Consequently, the set of subfiles can be described as:
					\begin{align}\label{All_subfiles_format2}
						\mathcal{W} = 
						%\left
						\big\{& W_{k, (A_1, \ldots, A_M), \mathcal{R}}: (A_1, \ldots, A_M) \vdash [M{\mu_T}], k \in [N], \nonumber\\ & |A_i| = {\mu_T}, i \in [M], \mathcal{R} \subseteq\left[K_{R}\right],|\mathcal{R}|={\mu_R} \big\}
						%		\right\}.
					\end{align}
					
					Based on the above partitioning, each subfile 		$W_{k, (A_1, \ldots, A_M), \mathcal{R}}$ is cached at transmitter $i$, if 	$i \in A_1$. Therefore, each transmitter stores 		$N {M{\mu_T} -1 \choose {\mu_T}-1}{M{\mu_T} - {\mu_T} \choose {\mu_T}}\ldots{{\mu_T} \choose {\mu_T}}{K_R \choose {\mu_R}}$ subfiles in its cache memory, which is consistent with the size constraint as:
					$$
					{N}{M{\mu_T} -1 \choose {\mu_T}-1}{M{\mu_T} - {\mu_T} \choose {\mu_T}}\ldots{{\mu_T} \choose {\mu_T}}{K_R \choose {\mu_R}}\frac{F}{{M{\mu_T} \choose \underbrace{{\mu_T}, \ldots, {\mu_T}}_{\text{$M$}}}{K_R \choose {\mu_R}}} = M_TF \text{   { packets.}}
					$$
					
					Similarly, each subfile 		$W_{k, (A_1, \ldots, A_M), \mathcal{R}}$ is cached at receiver $j$, if 	$j \in \mathcal{R}$. Hence, each receiver stores $N {M{\mu_T} \choose \underbrace{{\mu_T}, \ldots, {\mu_T}}_{\text{$M$}}} {K_R-1 \choose {\mu_R}-1}$ subfiles in its cache memory, and its consistency with the size constraint can be easily verified as follows:
					$$
					{N}{{M{\mu_T}} \choose \underbrace{{\mu_T}, \ldots, {\mu_T}}_{\text{$M$}}}{{K_R-1} \choose {{\mu_R}-1}}\frac{F}{{M{\mu_T} \choose \underbrace{{\mu_T}, \ldots, {\mu_T}}_{\text{$M$}} }{K_R \choose {\mu_R}}} = M_RF \text{   { packets.}}
					$$		 
					
					In the delivery phase, based on the receiver's demand 	$\mathbf{d} = \left(W_{d_1}, W_{d_2}, \ldots, W_{d_{K_R}}\right)$, the transmitters should send the following 
					$
					K_{R} {M{\mu_T} \choose \underbrace{{\mu_T}, \ldots, {\mu_T}}_{\text{$M$}}} {K_{R}-1 \choose {\mu_R}}
					$
					subfiles:
					{
						\begin{align}\label{send_general_tau_new_break}
							%	\left
							\big\{&W_{d_{j}, (A_1, \ldots, A_M), \mathcal{R}}: j \in\left[K_{R}\right], (A_1, \ldots, A_M) \vdash [M{\mu_T}], \nonumber \\ & |A_i| = {\mu_T}, i \in [M], \mathcal{R} \subseteq\left[K_{R}\right]\backslash \{j\}, |\mathcal{R}|= {\mu_R}\big \}.
							%\right\}.
					\end{align}}
					In the following, we break each subfile 		$W_{d_j, (A_1, \ldots, A_M), \mathcal{R}}$ into 		${K_R-{\mu_R}-1} \choose {{\mu_T}-1}$ smaller subfiles, each being corresponded with a unique subset of receivers 		$\mathcal{R}\cup\{j\}\cup\mathcal{T}$, where 	$\mathcal{T} 
					%= \{\mathcal{R}_1, \ldots, \mathcal{R}_{{\mu_T}-1}\}
					$ represents the indices of an arbitrary set of size 		${\mu_T}-1$ from the 		$K_R-{\mu_R}-1$  receivers 		$[K_R]\backslash\mathcal{R}\cup\{j\}$. This offers the opportunity to eliminate each subfile not only in the receivers 		$\mathcal{R}$ due to prefetching, but also in the receivers 		$\mathcal{T}$ due to the possibility of cooperation between the transmitters.
					
					We assume the number of communication blocks as $H = {M{\mu_T} \choose \underbrace{{\mu_T}, \ldots, {\mu_T}}_{\text{$M$}}} {K_{R}-1 \choose {\mu_R}} {{K_R-{\mu_R}-1} \choose {{\mu_T} -1}}$. Here, for ease of reference, we term every $ {K_{R}-1 \choose {\mu_R}} {{K_R-{\mu_R}-1} \choose {{\mu_T} -1}}$ communication blocks a super-block, every $M$ super-blocks a hyper-block, and every $(M-1)!$ hyper-blocks a mega-block. Moreover, we assign numbers to 
					each of the ordered partitions 	$(A_1, \ldots, A_M) \vdash [M{\mu_T}]$, where 	$|A_i| = {\mu_T}$ for $i \in [M]$. 
					This is done in such a way that all the $M!$ partitions with the same composing subsets (different permutations) should be entirely and distinctly numbered with one of the sets 
					$[1:M!]$,
					$[M!+1:2 \times M!]$,
					$\ldots$,
					$\big[{M{\mu_T} \choose \underbrace{{\mu_T}, \ldots, {\mu_T}}_{\text{$M$}}}-M!+1:{M{\mu_T} \choose \underbrace{{\mu_T}, \ldots, {\mu_T}}_{\text{$M$}}} \big]$. Similar to the previous case, the number assigned to the ordered partition $S$ is denoted by $\kappa(S)$.
					
					Now, Lemmas \ref{absense_l1} and \ref{absense_l2} provide us with the achievable scheme in this case.
					
					\begin{lemma}\label{absense_l1}
						Given the prefetching phase for this case, for any receivers’ demand vector $\mathbf{d}$, the set of subfiles that need to be delivered can be partitioned into disjoint subsets of size $K_R$ as
						\begin{align}
							\bigcup_{\substack{
									%k_1 \in \left[1:{K_R-1 \choose {\mu_R}}\right]\\
									k_2 \in \left[1:M\right]\\ 
									k_3 \in [1:(M-1)!]\\
									k_4 \in \left[1:\frac{1}{M!}{M{\mu_T} \choose {\underbrace{{\mu_T}, \ldots, {\mu_T}}_{\text{$M$}}}}\right]
									\\
									\mathcal{R} \subseteq [K_R]\backslash\{1\}:|\mathcal{R}| = {\mu_R}\\
									\mathcal{T} \subseteq [K_R]\backslash \mathcal{R}\cup\{1\}: |\mathcal{T}| = {\mu_T} -1	
							}}
							\Biggl\{&
							W_{d_1, S_{k_1, k_2, k_3, k_4, l_1}, \mathcal{R}, \mathcal{T}},
							\bigcup_{\substack{j \in \mathcal{R}}} W_{d_j, S_{k_1, k_2, k_3, k_4, l_1}, \{1\}\cup\mathcal{R}\backslash\{j\}, \mathcal{T}}, \nonumber \\
							&
							\bigcup_{\substack{j \in \mathcal{T}}} W_{d_j, S_{k_1, k_2, k_3, k_4, l_1}, \mathcal{R}, \{1\}\cup\mathcal{T}\backslash\{j\}},  \bigcup_{\substack{j \in [K_R]\backslash\{1\}\cup\mathcal{R}\cup\mathcal{T}}} W_{d_j, S_{k_1, k_2, k_3, k_4, l_j}, \mathcal{R}, \mathcal{T}}
							\Biggr\},
						\end{align}
						where $\left\{S_{k_1, 1, 1, 1, l_1}, S_{k_1, 1, 1, 1, l_2}, \ldots, S_{k_1, 1, 1, 1, l_{L+1}}\right\}$ denotes arbitrary distinct indices from an arbitrary set in the form 
						\begin{equation}\label{desired_format_set_factorial}
							\left[k\times(M-1)!+1:(k+1)\times(M-1)!\right],\quad k \in [0:M-1],
						\end{equation}
						for each  $k_1 \in \left[1:{K_R-1 \choose {\mu_R}}{K_R-\mu_R-1 \choose {\mu_T-1}}\right]$, and we have:
						{\begin{align}
								& \kappa(S_{k_1, k_2, k_3, k_4, l}) = \kappa(S_{k_1, 1, 1, 1, l}) \oplus_{M!} (k_2-1)(M-1)!+(k_3-1)M+(k_4-1)M!, \nonumber\\ & (k_2, k_3, k_4) \neq (1, 1, 1),
						\end{align}}
						for
						$k_1 \in \left[1:{K_R-1 \choose {\mu_R}}{K_R-{\mu_R}-1 \choose {{\mu_T}-1}}\right]$, 
						$k_2 \in [1:M]$,
						$k_3 \in [1:(M-1)!]$,
						$k_4 \in \big[1:\frac{1}{M!}{M{\mu_T} \choose {\underbrace{{\mu_T}, \ldots, {\mu_T}}_{\text{$M$}}}}\big]$, and $j \in [L+1]$.
					\end{lemma}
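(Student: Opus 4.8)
The plan is to follow the same counting-and-covering strategy as in the proof of Lemma~\ref{lemma1_tau_geq1}, adapted to the enlarged file-splitting based on ordered partitions. First I would verify that the displayed union has the correct total cardinality: each listed group contains exactly $1+\mu_R+(\mu_T-1)+\big(K_R-1-\mu_R-(\mu_T-1)\big)=K_R$ subfiles, and, identifying $k_1$ with the pair $(\mathcal{R},\mathcal{T})$, the number of index tuples $(k_2,k_3,k_4,\mathcal{R},\mathcal{T})$ equals $M\cdot(M-1)!\cdot\frac{1}{M!}\frac{(M\mu_T)!}{(\mu_T!)^M}\cdot\binom{K_R-1}{\mu_R}\binom{K_R-\mu_R-1}{\mu_T-1}=\frac{(M\mu_T)!}{(\mu_T!)^M}\binom{K_R-1}{\mu_R}\binom{K_R-\mu_R-1}{\mu_T-1}=H$. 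Hence the union is a multiset of $K_R H$ entries, which already matches the number of subfiles requiring delivery. By the standard argument it then suffices to show every requested subfile appears at least once; equality of totals upgrades ``at least once'' to ``exactly once'' and simultaneously yields disjointness.

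The core of the argument is to show the three-level cyclic shift enumerates every ordered partition exactly once. I would prove that, for fixed $k_1$ and a fixed active slot $l$, the map $(k_2,k_3,k_4)\mapsto \kappa(S_{k_1,1,1,1,l})\oplus_{M!}(k_2-1)(M-1)!+(k_3-1)M+(k_4-1)M!$ is a bijection onto $\{1,\ldots,\tfrac{(M\mu_T)!}{(\mu_T!)^M}\}$. The key identity is $M!=M\cdot(M-1)!$: within a single mega-block of $M!$ ordered partitions (the $M!$ permutations of one unordered partition, which the numbering convention assigns consecutive labels), the term $(k_2-1)(M-1)!$ taken $\bmod\,M!$ runs over the $M$ distinct residue classes that are multiples of $(M-1)!$, while $(k_3-1)M$ fills in the $(M-1)!$ offsets that are multiples of $M$; together they exhaust all $M!$ labels of that mega-block, and $(k_4-1)M!$ then sweeps across the $\tfrac{1}{M!}\tfrac{(M\mu_T)!}{(\mu_T!)^M}$ distinct mega-blocks. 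Since the same shift is applied to each of the $L+1$ starting points drawn from a block of the form \eqref{desired_format_set_factorial}, every active slot independently enumerates all ordered partitions.

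With the enumeration in hand, coverage is verified receiver-by-receiver exactly as in Lemma~\ref{lemma1_tau_geq1}. For receiver~$1$, the term $W_{d_1,S_{k_1,k_2,k_3,k_4,l_1},\mathcal{R},\mathcal{T}}$ produces, as $(k_2,k_3,k_4)$ range, the subfile for every ordered partition, and varying $(\mathcal{R},\mathcal{T})$ through $k_1$ covers all $\binom{K_R-1}{\mu_R}\binom{K_R-\mu_R-1}{\mu_T-1}$ admissible side-information/zero-forcing patterns, yielding all of its requested subfiles. For a receiver $j\neq1$, the three remaining families in the group---the $\mathcal{R}$-family ($j\in\mathcal{R}$), the $\mathcal{T}$-family ($j\in\mathcal{T}$), and the complement family ($j\notin\{1\}\cup\mathcal{R}\cup\mathcal{T}$, using slots $l_2,\ldots,l_{L+1}$)---contribute, per ordered partition, $\binom{K_R-2}{\mu_R-1}\binom{K_R-\mu_R-1}{\mu_T-1}$, $\binom{K_R-2}{\mu_R}\binom{K_R-\mu_R-2}{\mu_T-2}$, and $\binom{K_R-2}{\mu_R}\binom{K_R-\mu_R-2}{\mu_T-1}$ subfiles respectively; by the three-term Pascal identity
$$\binom{K_R-1}{\mu_R}\binom{K_R-\mu_R-1}{\mu_T-1}=\binom{K_R-2}{\mu_R}\binom{K_R-\mu_R-2}{\mu_T-1}+\binom{K_R-2}{\mu_R-1}\binom{K_R-\mu_R-1}{\mu_T-1}+\binom{K_R-2}{\mu_R}\binom{K_R-\mu_R-2}{\mu_T-2},$$
these sum to the full per-partition count for receiver $j$, and the enumeration over ordered partitions completes its coverage.

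The step I expect to be most delicate is the bijection claim for the cyclic shift: confirming that the interleaving of the super-block ($k_2$), hyper-block ($k_3$), and mega-block ($k_4$) indices under $\oplus_{M!}$ is genuinely a complete, non-overlapping enumeration of all ordered partitions, and that it respects the labelling convention grouping the $M!$ permutations of each unordered partition into a consecutive block. This is precisely the role played by the existence of an $(M,\mu_T)$-subset partition in the previous case (Definition~\ref{partition_M_tau}); here it must instead hold unconditionally, which is exactly why the finer splitting into $\tfrac{(M\mu_T)!}{(\mu_T!)^M}$ ordered subfiles was introduced. Everything else---the cardinality bookkeeping and the Pascal identity---is routine, and the interference-free delivery of each size-$K_R$ group is deferred to Lemma~\ref{absense_l2} via the two delivery primitives of Lemmas~\ref{lemma_1RX_without_interference} and \ref{lemma_tplustau_RX_without_interference}.
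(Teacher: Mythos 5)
Your overall route is the same as the paper's: the paper proves this lemma simply by deferring to the proof of Lemma~\ref{lemma1_tau_geq1}, and your counting step, the receiver-by-receiver coverage, and the three-term Pascal identity are exactly that argument transplanted to the ordered-partition setting. Those parts are correct; your per-family counts $\binom{K_R-2}{\mu_R-1}\binom{K_R-\mu_R-1}{\mu_T-1}$, $\binom{K_R-2}{\mu_R}\binom{K_R-\mu_R-2}{\mu_T-2}$, $\binom{K_R-2}{\mu_R}\binom{K_R-\mu_R-2}{\mu_T-1}$ are right (and in fact repair a typo in the paper's own count for the $\mathcal{R}$-family), and your identification of $k_1$ with the pair $(\mathcal{R},\mathcal{T})$ is the intended reading.

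However, the step you yourself flag as the crux --- bijectivity of the three-level cyclic shift --- is where your argument fails, and the identity $M!=M\cdot(M-1)!$ does not rescue it. You claim that the shifts $(k_2-1)(M-1)!+(k_3-1)M$, for $k_2\in[1:M]$ and $k_3\in[1:(M-1)!]$, exhaust all residues modulo $M!$. This is false whenever $M\geq 4$ is composite. Take $M=4$: the shifts are $6(k_2-1)+4(k_3-1)\bmod 24$, which are all even, so at most half of the $24$ labels in a mega-block are ever reached; concretely, $(k_2,k_3)=(3,1)$ and $(k_2,k_3)=(1,4)$ both produce the shift $12$. The underlying point is that ``multiples of $(M-1)!$ plus multiples of $M$'' is not a mixed-radix representation: the sum is a direct-sum decomposition of $\mathbb{Z}_{M!}$ only when the cyclic subgroups $\langle (M-1)!\rangle$ and $\langle M\rangle$ intersect trivially, i.e., only when $\gcd\bigl(M,(M-1)!\bigr)=1$, which holds exactly when $M$ is prime. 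A correct enumeration needs shifts of the form $(k_2-1)+(k_3-1)M$ or $(k_2-1)(M-1)!+(k_3-1)$, which do biject onto $[0:M!-1]$. To be fair, the defective shift formula is the one printed in the lemma statement, and the paper (whose entire proof is ``very similar to Lemma~\ref{lemma1_tau_geq1}'') never verifies it either; but since your proof rests precisely on the arithmetic claim that this shift is a complete enumeration, it has a genuine gap: for composite $M$ your scheme schedules some ordered partitions twice and others never, so the union is a proper multiset rather than a partition, and the ``at least once, hence exactly once'' upgrade in your first paragraph never gets off the ground. (Separately, it is worth knowing that by Baranyai's theorem an $(M,\mu_T)$-subset partition of $[M\mu_T]$ always exists, so the ``absence'' case this lemma serves is vacuous --- which suggests the cleanest fix is to dispense with this case entirely rather than to repair the shift formula.)
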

					\begin{remark}
						Note that $S_{k_1, k_2, k_3, k_4, l_j}$ is an ordered partition where each of its composing subsets has a size of $\mu_T$.
					\end{remark}
					\begin{proof}
						The proof is very similar to the proof of Lemma \ref{lemma1_tau_geq1}.
					\end{proof}
					\begin{lemma}\label{absense_l2}
						At any communication block $k_1 \in [{K_R-1 \choose {\mu_R}}{K_R-{\mu_R}-1 \choose {\mu_T}-1}]$ from any super-block $k_2 \in [1:M]$ from any hyper-block $k_3 \in [1:(M-1)!]$ from any mega-block $k_4 \in \big[1:\frac{1}{M!}{M{\mu_T} \choose {\underbrace{{\mu_T}, \ldots, {\mu_T}}_{\text{$M$}}}}\big]$, and $j \in [L+1]$, 
						there exists a choice of beamforming coefficients, along with a network realization by IRS, such that the set of $K_R$ subfiles in 
						\begin{align}
							%	\label{scheduling_tau_geq1}
							%	\bigcup_{\substack{
									%k_1 \in \left[1:{K_R-1 \choose {\mu_R}}\right]\\
									%			k_2 \in \left[1:M\right]\\ 
									%			k_3 \in \left[1:{1:{{M\mu_T-1} \choose {\mu_T-1}}}\right]\\
									%			\mathcal{R} \subseteq [K_R]\backslash\{1\}:|\mathcal{R}| = {\mu_R}\\
									%			\mathcal{T} \subseteq [K_R]\backslash \mathcal{R}\cup\{1\}: |\mathcal{T}| = {\mu_T} -1	
									%	}}
							\Biggl\{&
							W_{d_1, S_{k_1, k_2, k_3, k_4, l_1}, \mathcal{R}, \mathcal{T}},
							\bigcup_{\substack{j \in \mathcal{R}}} W_{d_j, S_{k_1, k_2, k_3, k_4, l_1}, \{1\}\cup\mathcal{R}\backslash\{j\}, \mathcal{T}}, \nonumber \\
							&
							\bigcup_{\substack{j \in \mathcal{T}}} W_{d_j, S_{k_1, k_2, k_3, k_4, l_1}, \mathcal{R}, \{1\}\cup\mathcal{T}\backslash\{j\}},  \bigcup_{\substack{j \in [K_R]\backslash\{1\}\cup\mathcal{R}\cup\mathcal{T}}} W_{d_j, S_{k_1, k_2, k_3, k_4, l_j}, \mathcal{R}, \mathcal{T}}
							\Biggr\},
						\end{align}
						can be delivered interference-free 
						%	by the transmitters $\left\{S_{k_1, k_2, k_3, k_4, l_1}, \ldots, S_{k_1, k_2, k_3, k_4, l_{L+1}}\right\}$ 
						where at each communication block, $\mathcal{R}$ is a distinct subset of $[K_R]\backslash \{1\}$ of size $\mu_R$ and $\mathcal{T}$ is a distinct subset of $[K_R]\backslash \mathcal{R}\cup\{1\}$ of size $\mu_T-1$. 	
					\end{lemma}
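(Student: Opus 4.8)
The plan is to follow the proof of Lemma~\ref{lemma2_tau_geq1} essentially verbatim, since the only structural change introduced in the absence of an $(M,\mu_T)$-subset partition is the bookkeeping of transmitter groups through ordered partitions rather than through a single unordered partition. First I would observe that the $K_R$ subfiles scheduled in the block split into two groups according to which transmitter index $l_j$ they carry. The $\mu_R+\mu_T$ subfiles that share the active group $S_{k_1,k_2,k_3,k_4,l_1}$ --- namely the subfile for receiver~$1$, the $\mu_R$ subfiles for the receivers in $\mathcal{R}$, and the $\mu_T-1$ subfiles for the receivers in $\mathcal{T}$ --- are exactly of the form handled by Lemma~\ref{lemma_tplustau_RX_without_interference}; each of the remaining $L$ subfiles, associated with a receiver $j \in [K_R]\backslash\{1\}\cup\mathcal{R}\cup\mathcal{T}$ and carrying its own group $S_{k_1,k_2,k_3,k_4,l_j}$, is of the form handled by Lemma~\ref{lemma_1RX_without_interference}. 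I would therefore set the beamforming matrix $\tilde{\mathbf{V}}^{[ji]}_{\mathcal{S},\mathcal{R},\mathcal{T}}(m)$ of each selected subfile to the solution guaranteed by the relevant lemma, and set it to the zero matrix for every unselected subfile.

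Next I would specify the target network matrix exactly as in Lemma~\ref{lemma2_tau_geq1}: the transmitters of the group $S_{k_1,k_2,k_3,k_4,l_1}$ connect only to the receivers in $\mathcal{R}\cup\mathcal{T}\cup\{1\}$, the transmitters of each group $S_{k_1,k_2,k_3,k_4,l_j}$ connect only to $\mathcal{R}\cup\mathcal{T}\cup\{j\}$, and every transmitter outside the active groups (holding all files) remains fully connected. Collecting the zeroed links into $\tilde{\mathcal{N}}=\{(i,j): n_{i,j}=0\}$, I would invoke \cite[Lemma~1]{bafghi_TCom} to obtain, almost surely, amplitude-and-phase coefficients $q^{[u]}(t)$ for an active IRS with $Q=L(L+1)$ elements that satisfy the nulling equations $\sum_{u} H_{\mathrm{TI}}^{[ui]}(t)H_{\mathrm{IR}}^{[ju]}(t)q^{[u]}(t)=-H^{[ji]}(t)$ for all $(i,j)\in\tilde{\mathcal{N}}$, realizing the desired topology. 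This part is identical word-for-word to the presence case, because the IRS nulling is insensitive to how the active groups were indexed.

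With the topology in place, the interference accounting closes exactly as before: at each receiver, $L$ of the undesired subfiles never arrive because their cross-links were removed by the IRS, $\mu_R$ are cancelled using the cached side information, and $\mu_T-1$ are zero-forced through the cooperation of the $\mu_T$ transmitters in the active group (which is the content of the linear-system solvability in Lemmas~\ref{lemma_1RX_without_interference} and~\ref{lemma_tplustau_RX_without_interference}). Since $L+\mu_R+(\mu_T-1)=K_R-1$, every receiver is left with precisely its own desired subfile, so a DoF of $1$ per block is achieved.

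The main obstacle I anticipate is not in the physical layer but in confirming that the ordered-partition scheduling still produces, in each block, $L+1$ \emph{mutually disjoint} size-$\mu_T$ transmitter groups (so that the cooperation and zero-forcing arguments of Lemma~\ref{lemma_tplustau_RX_without_interference} apply unchanged) and that, aggregated over a mega-block, every demanded subfile in \eqref{send_general_tau_new_break} is scheduled exactly once. This is precisely what the partition of Lemma~\ref{absense_l1} is designed to guarantee: the $\kappa$-numbering confines the active groups to the composing subsets of a single ordered partition, which are disjoint by construction, while the cyclic-shift rule $\kappa(S_{k_1,k_2,k_3,k_4,l})=\kappa(S_{k_1,1,1,1,l})\oplus_{M!}(k_2-1)(M-1)!+(k_3-1)M+(k_4-1)M!$ cycles these groups through all admissible choices without repetition. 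Hence the bulk of the work is verifying this combinatorial bookkeeping; once it is in hand, the beamforming, IRS configuration, and decoding steps are inherited verbatim from Lemma~\ref{lemma2_tau_geq1}.
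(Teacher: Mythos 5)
Your proposal is correct and matches the paper's approach: the paper's own proof of this lemma is literally a one-line deferral to the proof of Lemma~\ref{lemma2_tau_geq1}, and your argument reproduces exactly that proof (beamforming via Lemmas~\ref{lemma_1RX_without_interference} and~\ref{lemma_tplustau_RX_without_interference}, IRS nulling of the links in $\tilde{\mathcal{N}}$ via \cite[Lemma~1]{bafghi_TCom}, and the count $L+\mu_R+(\mu_T-1)=K_R-1$), with the only change being the ordered-partition bookkeeping, whose disjointness you correctly delegate to Lemma~\ref{absense_l1}. Your write-up is in fact more explicit than the paper's; the only blemish is the immaterial parenthetical claiming the inactive transmitters hold all files, which is unnecessary since those transmitters simply stay silent.
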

					\begin{proof}
						The proof is very similar to the proof of Lemma \ref{lemma2_tau_geq1}.
					\end{proof}
				\end{enumerate}
				\item $\mu_R+\mu_T+L < K_R$:
				In this setting, similar to case
				(I), we consider the following two cases depending on whether Baranyai's theorem is used or not:
				\begin{enumerate}
					[wide, labelwidth=!, labelindent=0pt]
					\item Using Baranyai's theorem:
					
                    To this end, we break each subfile 				$W_{d_j, \mathcal{S}, \mathcal{R}, \mathcal{T}}$ into 		${{K_R-({\mu_R}+{\mu_T})} \choose L}$ smaller subfiles 		$W_{d_j, \mathcal{S}, \mathcal{R}, \mathcal{T}, \mathcal{L}}$,
					each being corresponding with a unique subset of receivers 
					$\mathcal{R}\cup\{j\}\cup\mathcal{T}\cup\mathcal{L}$ where 		$\mathcal{L} = \{\mathcal{R}_1, \ldots, \mathcal{R}_L\}$ is an arbitrary subset of size $L$‌ from the receivers $[K_R]\backslash\mathcal{R}\cup\mathcal{T}\cup\{j\}$. We, then, consider the number of communication blocks to be $H = {M{\mu_T} \choose {\mu_T}}{{K_R} \choose {{\mu_R}+{\mu_T}+L}}{{{\mu_R}+{\mu_T}+L-1} \choose {\mu_R}}{{{\mu_T}+L-1} \choose {{\mu_T}-1}}$. Here, for ease of reference, we term every 	${{{\mu_R}+{\mu_T}+L-1} \choose {\mu_R}}{{{\mu_T}+L-1} \choose {{\mu_T}-1}}$ communication blocks a super-block and every 	$M{\mu_T} \choose {\mu_T}$ super-blocks a hyper-block. In the following, we consider a subset $\mathcal{C} \subseteq [K_R]$ of size 		${\mu_R}+{\mu_T}+L$. Then, for the sub-network of 		$[K_T] \times \mathcal{C}$, we configure the IRS‌ as explained in the case I.A. It is worth mentioning that the sub-network of 		$[K_R]\backslash\mathcal{C}$ is supposed to be fully-connected, and, as a result, there is no need for further IRS‌ configuration.
					Therefore, the number
					of IRS elements needs to be $Q = \mu_T L(L + 1)$. 
					
					Now, it is easy to see that the DoF of $1$ is achievable for each of the receivers in the subset
					$\mathcal{C}$ in 		${M{\mu_T} \choose {\mu_T}} {{{\mu_R}+{\mu_T}+L-1} \choose {\mu_R}}{{\mu_T}+L-1 \choose {{\mu_T}-1}}$ communication blocks of a hyper-block, and the DoF of all the receivers
					$[KR]\backslash\mathcal{C}$ is equal to zero. Subsequently, it is sufficient to repeat this procedure for other subsets
					of size 	${\mu_T}+{\mu_R}+L$ from the receivers. To calculate the overall DoF, note that each receiver
					can decode its desired subfile interference-free in 	${M{\mu_T} \choose {\mu_T}}{{K_R-1} \choose {{\mu_R}+{\mu_T}+L-1}}{{{\mu_R}+{\mu_T}+L-1} \choose {{\mu_R}}}{{{\mu_T}+L-1} \choose {{\mu_T}-1}}$ blocks
					and
					the number of all blocks is
					${M{\mu_T} \choose {\mu_T}}{{K_R} \choose {{\mu_R}+{\mu_T}+L}}{{{\mu_R}+{\mu_T}+L-1} \choose {{\mu_R}}}{{{\mu_T}+L-1} \choose {{\mu_T}-1}}$. Consequently, the DoF of
					$$\mathrm{DoF_{user}} = \frac{{M{\mu_T} \choose {\mu_T}}{{K_R-1} \choose {{\mu_R}+{\mu_T}+L-1}}{{{\mu_R}+{\mu_T}+L-1} \choose {{\mu_R}}}{{{\mu_T}+L-1} \choose {{\mu_T}-1}}}{{M{\mu_T} \choose {\mu_T}}{{K_R} \choose {{\mu_R}+{\mu_T}+L}}{{{\mu_R}+{\mu_T}+L-1} \choose {{\mu_R}}}{{{\mu_T}+L-1} \choose {{\mu_T}-1}}} = \frac{{\mu_R}+{\mu_T}+L}{K_R},$$
					is achievable for each receiver, and the sum-DoF of ${\mu_R}+{\mu_T}+L$ is achievable in this case. 
					\item Without Baranyai's theorem:
					
					In this case, we break each subfile 					$W_{d_j, (A_1, \ldots, A_M), \mathcal{R}, \mathcal{T}}$ into 		${{K_R-({\mu_R}+{\mu_T})} \choose L}$ smaller subfiles 		$W_{d_j, (A_1, \ldots, A_M), \mathcal{R}, \mathcal{T}, \mathcal{L}}$,
					each being corresponding with a unique subset of receivers 
					$\mathcal{R}\cup\{j\}\cup\mathcal{T}\cup\mathcal{L}$ where 		$\mathcal{L} = \{\mathcal{R}_1, \ldots, \mathcal{R}_L\}$ is an arbitrary subset of size $L$‌ from the receivers $[K_R]\backslash\mathcal{R}\cup\mathcal{T}\cup\{j\}$. The number of blocks is considered to be $H = {M{\mu_T} \choose {\underbrace{{\mu_T}, \ldots, {\mu_T}}_{\text{$M$}}}}{{K_R} \choose {{\mu_R}+{\mu_T}+L}}{{{\mu_R}+{\mu_T}+L-1} \choose {\mu_R}}{{{\mu_T}+L-1} \choose {\mu_T-1}}$. Here, for the ease of reference, we term every 		${{{\mu_R}+{\mu_T}+L-1} \choose {\mu_R}}{{{\mu_T}+L-1} \choose {{\mu_T}-1}}$ communication blocks a super-block and every 	${M{\mu_T} \choose {\underbrace{{\mu_T}, \ldots, {\mu_T}}_{\text{$M$}}}}$ super-blocks a hyper-block. In the following, we consider a subset $\mathcal{C} \subseteq [K_R]$ of size 		${\mu_R}+{\mu_T}+L$. Then, for the sub-network of 		$[K_T] \times \mathcal{C}$, we configure the IRS‌ as explained in the case I.B. It is worth mentioning that the sub-network of 		$[K_R]\backslash\mathcal{C}$ is supposed to be fully-connected, and, as a result, there is no need for further IRS‌ configuration.
					Therefore, the number
					of IRS elements needs to be $Q = \mu_T L(L + 1)$. 
					
					Now, it is easy to see that the DoF of $1$ is achievable for each of the receivers in the subset
					$\mathcal{C}$ in 		${M{\mu_T} \choose {\underbrace{{\mu_T}, \ldots, {\mu_T}}_{\text{$M$}}}} {{{\mu_R}+{\mu_T}+L-1} \choose {\mu_R}}{{\mu_T}+L-1 \choose {{\mu_T}-1}}$ communication blocks of a hyper-block, and the DoF of all the other receivers
					$[KR]\backslash\mathcal{C}$ is equal to zero. Subsequently, it is sufficient to repeat this procedure for other subsets
					of size 	${\mu_T}+{\mu_R}+L$ from the receivers. To calculate overall DoF, note that each receiver
					can decode its desired subfile interference-free in 	${M{\mu_T} \choose {\underbrace{{\mu_T}, \ldots, {\mu_T}}_{\text{$M$}}}}{{K_R-1} \choose {{\mu_R}+{\mu_T}+L-1}}{{{\mu_R}+{\mu_T}+L-1} \choose {{\mu_R}}}{{{\mu_T}+L-1} \choose {{\mu_T}-1}}$ blocks
					and he number of all communication blocks is
					${M{\mu_T} \choose {\underbrace{{\mu_T}, \ldots, {\mu_T}}_{\text{$M$}}}}{{K_R} \choose {{\mu_R}+{\mu_T}+L}}{{{\mu_R}+{\mu_T}+L-1} \choose {{\mu_R}}}{{{\mu_T}+L-1} \choose {{\mu_T}-1}}$. Consequently, the DoF of
					$$\mathrm{DoF_{user}} = \frac{{M{\mu_T} \choose {\underbrace{{\mu_T}, \ldots, {\mu_T}}_{\text{$M$}}}}{{K_R-1} \choose {{\mu_R}+{\mu_T}+L-1}}{{{\mu_R}+{\mu_T}+L-1} \choose {{\mu_R}}}{{{\mu_T}+L-1} \choose {{\mu_T}-1}}}{{M{\mu_T} \choose {\underbrace{{\mu_T}, \ldots, {\mu_T}}_{\text{$M$}}}}{{K_R} \choose {{\mu_R}+{\mu_T}+L}}{{{\mu_R}+{\mu_T}+L-1} \choose {t}}{{{\mu_T}+L-1} \choose {{\mu_T}-1}}} = \frac{{\mu_R}+{\mu_T}+L}{K_R},$$
					is achievable for each receiver, and the sum-DoF of ${\mu_R}+{\mu_T}+L$ is achievable in this case. 
				\end{enumerate}
			\end{enumerate}
			
			\bibliography{refs}

% Generated by IEEEtran.bst, version: 1.14 (2015/08/26)
\begin{thebibliography}{10}
\providecommand{\url}[1]{#1}
\csname url@samestyle\endcsname
\providecommand{\newblock}{\relax}
\providecommand{\bibinfo}[2]{#2}
\providecommand{\BIBentrySTDinterwordspacing}{\spaceskip=0pt\relax}
\providecommand{\BIBentryALTinterwordstretchfactor}{4}
\providecommand{\BIBentryALTinterwordspacing}{\spaceskip=\fontdimen2\font plus
\BIBentryALTinterwordstretchfactor\fontdimen3\font minus
  \fontdimen4\font\relax}
\providecommand{\BIBforeignlanguage}[2]{{%
\expandafter\ifx\csname l@#1\endcsname\relax
\typeout{** WARNING: IEEEtran.bst: No hyphenation pattern has been}%
\typeout{** loaded for the language `#1'. Using the pattern for}%
\typeout{** the default language instead.}%
\else
\language=\csname l@#1\endcsname
\fi
#2}}
\providecommand{\BIBdecl}{\relax}
\BIBdecl

\bibitem{IRS_Survey2}
M.~Di~Renzo, M.~Debbah, D.-T. Phan-Huy, A.~Zappone, M.-S. Alouini, C.~Yuen,
  V.~Sciancalepore, G.~C. Alexandropoulos, J.~Hoydis, H.~Gacanin \emph{et~al.},
  ``Smart radio environments empowered by reconfigurable {AI} meta-surfaces: An
  idea whose time has come,'' \emph{EURASIP Journal on Wireless Communications
  and Networking}, vol. 2019, no.~1, pp. 1--20, 2019.

\bibitem{IRS_Survey1}
M.~Di~Renzo, A.~Zappone, M.~Debbah, M.-S. Alouini, C.~Yuen, J.~de~Rosny, and
  S.~Tretyakov, ``Smart radio environments empowered by reconfigurable
  intelligent surfaces: How it works, state of research, and the road ahead,''
  \emph{IEEE Journal on Selected Areas in Communications}, vol.~38, no.~11, pp.
  2450--2525, 2020.

\bibitem{comprehensive_6g}
W.~Jiang, B.~Han, M.~A. Habibi, and H.~D. Schotten, ``The road towards {6G}: A
  comprehensive survey,'' \emph{IEEE Open Journal of the Communications
  Society}, vol.~2, pp. 334--366, 2021.

\bibitem{IRS_Coverage}
S.~Zeng, H.~Zhang, B.~Di, Z.~Han, and L.~Song, ``Reconfigurable intelligent
  surface ({RIS}) assisted wireless coverage extension: {RIS} orientation and
  location optimization,'' \emph{IEEE Communications Letters}, vol.~25, no.~1,
  pp. 269--273, 2021.

\bibitem{Nasri_localization}
A.~Nasri, A.~H.~A. Bafghi, and M.~Nasiri-Kenari, ``Wireless localization in the
  presence of intelligent reflecting surface,'' \emph{IEEE Wireless
  Communications Letters}, vol.~11, no.~7, pp. 1315--1319, 2022.

\bibitem{Physical_layer_IRS_Security}
X.~Yu, D.~Xu, Y.~Sun, D.~W.~K. Ng, and R.~Schober, ``Robust and secure wireless
  communications via intelligent reflecting surfaces,'' \emph{IEEE Journal on
  Selected Areas in Communications}, vol.~38, no.~11, pp. 2637--2652, 2020.

\bibitem{bafghi_TCom}
A.~H.~A. Bafghi, V.~Jamali, M.~Nasiri-Kenari, and R.~Schober, ``Degrees of
  freedom of the {$K$}-user interference channel assisted by active and passive
  {IRS}s,'' \emph{IEEE Transactions on Communications}, vol.~70, no.~5, pp.
  3063--3080, 2022.

\bibitem{DoF_Jafar}
V.~R. Cadambe and S.~A. Jafar, ``Interference alignment and degrees of freedom
  of the {$K$}-user interference channel,'' \emph{IEEE Transactions on
  Information Theory}, vol.~54, no.~8, pp. 3425--3441, 2008.

\bibitem{Yu_nulling_interference}
T.~Jiang and W.~Yu, ``Interference nulling using reconfigurable intelligent
  surface,'' \emph{IEEE Journal on Selected Areas in Communications}, vol.~40,
  no.~5, pp. 1392--1406, 2022.

\bibitem{jafar_Xchannel}
V.~R. Cadambe and S.~A. Jafar, ``Interference alignment and the degrees of
  freedom of wireless {$X$} networks,'' \emph{IEEE Transactions on Information
  Theory}, vol.~55, no.~9, pp. 3893--3908, 2009.

\bibitem{bafghi_xnetwork}
A.~H.~A. Bafghi, V.~Jamali, M.~Mirmohseni, and M.~Nasiri-Kenari, ``Degrees of
  freedom of the wireless {X}-network assisted by intelligent reflecting
  surfaces,'' \emph{IEEE Open Journal of the Communications Society}, vol.~4,
  pp. 1229--1246, 2023.

\bibitem{MIMO_D2D}
M.~Fu, Y.~Zhou, and Y.~Shi, ``Reconfigurable intelligent surface for
  interference alignment in {MIMO} device-to-device networks,'' in \emph{2021
  IEEE International Conference on Communications Workshops (ICC Workshops)},
  2021, pp. 1--6.

\bibitem{DoF_Modulation_IRS}
H.~V. Cheng and W.~Yu, ``Degree-of-freedom of modulating information in the
  phases of reconfigurable intelligent surface,'' \emph{IEEE Transactions on
  Information Theory}, vol.~70, no.~1, pp. 170--188, 2024.

\bibitem{zheng2022dof}
S.~Zheng, B.~Lv, T.~Zhang, Y.~Xu, G.~Chen, R.~Wang, and P.~C. Ching, ``On {DoF}
  of active {RIS}-assisted {MIMO} interference channel with arbitrary antenna
  configurations: When will {RIS} help?'' \emph{IEEE Transactions on Vehicular
  Technology}, vol.~72, no.~12, pp. 16\,828--16\,833, 2023.

\bibitem{K_User_MIMO}
W.~Liu, A.~Li, Y.~Xu, Z.~Zhang, Y.~Zhang, and B.~Jiang, ``{RIS}-aided
  interference subspace alignment for {K} -user {MIMO} interference networks,''
  \emph{IEEE Communications Letters}, vol.~28, no.~8, pp. 1919--1923, 2024.

\bibitem{Rank_deficient_MIMO}
S.~H. Chae and K.~Lee, ``Cooperative communication for the rank-deficient
  {MIMO} interference channel with a reconfigurable intelligent surface,''
  \emph{IEEE Transactions on Wireless Communications}, vol.~22, no.~3, pp.
  2099--2112, 2023.

\bibitem{jiang2025dofanalysisbeamformingdesign}
\BIBentryALTinterwordspacing
J.~Jiang, F.~Shu, X.~Wang, K.~Yang, C.~Shen, Q.~Zhang, D.~Wang, and J.~Wang,
  ``{DoF} analysis and beamforming design for active {IRS}-aided multi-user
  {MIMO} wireless communication in rank-deficient channels,'' 2025. [Online].
  Available: \url{https://arxiv.org/abs/2411.07001}
\BIBentrySTDinterwordspacing

\bibitem{RIS_Cooperative_IA_MIMO}
J.~Li, G.~Chen, W.~Feng, W.~Jiang, P.~Miao, and P.~Xiao, ``{RIS}-assisted
  cooperative interference alignment scheme for {MIMO} multi-user networks,''
  in \emph{ICC 2023 - IEEE International Conference on Communications}, 2023,
  pp. 889--894.

\bibitem{Iterative_IA}
W.~Liu, C.~Li, T.~Li, P.~Xing, J.~Xu, and B.~Pan, ``Iterative {IA} precoding
  and {RIS} beamforming based on rank-reduced {MIMO} interference channel,''
  \emph{Journal of Communications and Networks}, vol.~27, no.~3, pp. 179--189,
  2025.

\bibitem{noncoherentmimo}
K.~G. Seddik, ``On the degrees of freedom of {IRS}-assisted non-coherent {MIMO}
  communications,'' \emph{IEEE Communications Letters}, vol.~26, no.~5, pp.
  1175--1179, 2022.

\bibitem{JafarMIMO}
T.~Gou and S.~A. Jafar, ``Degrees of freedom of the {K} user {M} $\times$ {N}
  {MIMO} interference channel,'' \emph{IEEE Transactions on Information
  Theory}, vol.~56, no.~12, pp. 6040--6057, 2010.

\bibitem{Yener_Wiretap}
M.~Nafea and A.~Yener, ``Secure communication in a multi-antenna wiretap
  channel with a reconfigurable intelligent surface,'' in \emph{2021 17th
  International Symposium on Wireless Communication Systems (ISWCS)}, 2021, pp.
  1--6.

\bibitem{SDoF_Broadcast_Confidential}
H.~Luo, Q.~Wang, J.~Che, and J.~Chen, ``Secure degrees of freedom of
  {RIS}-assisted {MIMO} broadcast channel with confidential messages,''
  \emph{IEEE Communications Letters}, vol.~28, no.~10, pp. 2253--2257, 2024.

\bibitem{SDoF_MIMO_Wiretap_Su}
L.~Su, Y.~Miao, Y.~Song, S.~Zheng, T.~Zhang, Y.~Xu, S.~Wang, and N.~Li,
  ``Linear and numerical {SDoF} bounds of active {RIS}-assisted {MIMO} wiretap
  interference channel,'' \emph{IEEE Open Journal of the Communications
  Society}, vol.~6, pp. 5599--5610, 2025.

\bibitem{SDoF_Green}
Y.~Hao, L.~Jin, H.~Zheng, J.~Bai, X.~Xu, and L.~Liu, ``{RIS}-assisted {MIMO}
  wiretap channel: Theoretical {SDoF} bounds and practical optimization,''
  \emph{IEEE Transactions on Green Communications and Networking}, vol.~10, pp.
  1720--1733, 2026.

\bibitem{main_caching_maddah_niesen}
M.~A. Maddah-Ali and U.~Niesen, ``Fundamental limits of caching,'' \emph{IEEE
  Transactions on Information Theory}, vol.~60, no.~5, pp. 2856--2867, 2014.

\bibitem{nader}
N.~Naderializadeh, M.~A. Maddah-Ali, and A.~S. Avestimehr, ``Fundamental limits
  of cache-aided interference management,'' \emph{IEEE Transactions on
  Information Theory}, vol.~63, no.~5, pp. 3092--3107, 2017.

\bibitem{hetnet_cache}
C.~Yang, Y.~Yao, Z.~Chen, and B.~Xia, ``Analysis on cache-enabled wireless
  heterogeneous networks,'' \emph{IEEE Transactions on Wireless
  Communications}, vol.~15, no.~1, pp. 131--145, 2015.

\bibitem{cache_D2D}
M.~Ji, G.~Caire, and A.~F. Molisch, ``Fundamental limits of caching in wireless
  {D2D} networks,'' \emph{IEEE Transactions on Information Theory}, vol.~62,
  no.~2, pp. 849--869, 2015.

\bibitem{content_tao}
M.~Tao, E.~Chen, H.~Zhou, and W.~Yu, ``Content-centric sparse multicast
  beamforming for cache-enabled cloud {RAN},'' \emph{IEEE Transactions on
  Wireless Communications}, vol.~15, no.~9, pp. 6118--6131, 2016.

\bibitem{cache_bafghi}
A.~H.~A. Bafghi, M.~Mirmohseni, F.~Ashtiani, and M.~Nasiri-Kenari, ``Joint
  optimization of power consumption and transmission delay in a cache-enabled
  {C}-{RAN},'' \emph{IEEE Wireless Communications Letters}, vol.~9, no.~8, pp.
  1137--1140, 2020.

\bibitem{cache_abarghooyi}
S.~M. Azimi-Abarghouyi, M.~Nasiri-Kenari, and M.~Debbah, ``Stochastic design
  and analysis of user-centric wireless cloud caching networks,'' \emph{IEEE
  Transactions on Wireless Communications}, vol.~19, no.~7, pp. 4978--4993,
  2020.

\bibitem{zamani2025cache}
A.~Zamani and M.~Skoglund, ``Cache-aided variable-length coding with perfect
  privacy,'' \emph{IEEE Journal on Selected Areas in Communications}, vol.~44,
  pp. 2116--2131, 2026.

\bibitem{maddah2014decentralized}
M.~A. Maddah-Ali and U.~Niesen, ``Decentralized coded caching attains
  order-optimal memory-rate tradeoff,'' \emph{IEEE/ACM Transactions On
  Networking}, vol.~23, no.~4, pp. 1029--1040, 2014.

\bibitem{shariatpanahi2016multi}
S.~P. Shariatpanahi, S.~A. Motahari, and B.~H. Khalaj, ``Multi-server coded
  caching,'' \emph{IEEE Transactions on Information Theory}, vol.~62, no.~12,
  pp. 7253--7271, 2016.

\bibitem{Madda_niesen_interfernce_33}
M.~A. Maddah-Ali and U.~Niesen, ``Cache-aided interference channels,'' in
  \emph{2015 IEEE International Symposium on Information Theory (ISIT)}, 2015,
  pp. 809--813.

\bibitem{simeone}
A.~Sengupta, R.~Tandon, and O.~Simeone, ``Cache aided wireless networks:
  Tradeoffs between storage and latency,'' in \emph{2016 Annual Conference on
  Information Science and Systems (CISS)}, 2016, pp. 320--325.

\bibitem{niesen_caching_DoF}
J.~Hachem, U.~Niesen, and S.~N. Diggavi, ``Degrees of freedom of cache-aided
  wireless interference networks,'' \emph{IEEE Transactions on Information
  Theory}, vol.~64, no.~7, pp. 5359--5380, 2018.

\bibitem{naderializadeh2019cache}
N.~Naderializadeh, M.~A. Maddah-Ali, and A.~S. Avestimehr, ``Cache-aided
  interference management in wireless cellular networks,'' \emph{IEEE
  Transactions on Communications}, vol.~67, no.~5, pp. 3376--3387, 2019.

\bibitem{Tao_Fundamental}
F.~Xu, M.~Tao, and K.~Liu, ``Fundamental tradeoff between storage and latency
  in cache-aided wireless interference networks,'' \emph{IEEE Transactions on
  Information Theory}, vol.~63, no.~11, pp. 7464--7491, 2017.

\bibitem{Tao_partial}
F.~Xu, M.~Tao, and T.~Zheng, ``Cache-aided interference management in partially
  connected linear networks,'' \emph{IEEE Transactions on Communications},
  vol.~68, no.~1, pp. 301--316, 2020.

\bibitem{random_topology}
N.~Mital, D.~G{\"u}nd{\"u}z, and C.~Ling, ``Coded caching in a multi-server
  system with random topology,'' \emph{IEEE Transactions on Communications},
  vol.~68, no.~8, pp. 4620--4631, 2020.

\bibitem{caching_IRS_Caire}
X.~Niu, M.~Cheng, K.~Wan, R.~C. Qiu, and G.~Caire, ``Reflecting intelligent
  surfaces-assisted multiple-antenna coded caching,'' in \emph{2024 IEEE
  Information Theory Workshop (ITW)}, 2024, pp. 490--495.

\bibitem{MAPDA}
T.~Yang, K.~Wan, M.~Cheng, R.~C. Qiu, and G.~Caire, ``Multiple-antenna
  placement delivery array for cache-aided {MISO} systems,'' \emph{IEEE
  Transactions on Information Theory}, vol.~69, no.~8, pp. 4855--4868, 2023.

\bibitem{active_IRS}
Z.~Zhang, L.~Dai, X.~Chen, C.~Liu, F.~Yang, R.~Schober, and H.~V. Poor,
  ``Active {RIS} vs. passive {RIS}: Which will prevail in {6G}?'' \emph{IEEE
  Transactions on Communications}, vol.~71, no.~3, pp. 1707--1725, 2023.

\bibitem{active_IRS_3}
Z.~Kang, C.~You, and R.~Zhang, ``Active-{IRS}-aided wireless communication:
  Fundamentals, designs and open issues,'' \emph{IEEE Wireless Communications},
  vol.~31, no.~3, pp. 368--374, 2024.

\bibitem{active_IRS_2}
R.~Long, Y.-C. Liang, Y.~Pei, and E.~G. Larsson, ``Active reconfigurable
  intelligent surface-aided wireless communications,'' \emph{IEEE Transactions
  on Wireless Communications}, vol.~20, no.~8, pp. 4962--4975, 2021.

\bibitem{caching_imperfect_CSIT}
J.~Zhang and P.~Elia, ``Fundamental limits of cache-aided wireless {BC}:
  Interplay of coded-caching and {CSIT} feedback,'' \emph{IEEE Transactions on
  Information Theory}, vol.~63, no.~5, pp. 3142--3160, 2017.

\bibitem{Arash1}
A.~Gholami~Davoodi and S.~A. Jafar, ``Aligned image sets under channel
  uncertainty: Settling conjectures on the collapse of degrees of freedom under
  finite precision {CSIT},'' \emph{IEEE Transactions on Information Theory},
  vol.~62, no.~10, pp. 5603--5618, 2016.

\bibitem{Arash2}
A.~G. Davoodi and S.~A. Jafar, ``{$K$} -user symmetric {$M\times N$} {MIMO}
  interference channel under finite precision {CSIT}: A {GDoF} perspective,''
  \emph{IEEE Transactions on Information Theory}, vol.~65, no.~2, pp.
  1126--1136, 2019.

\bibitem{bjornson2020rayleigh}
E.~Bj{\"o}rnson and L.~Sanguinetti, ``Rayleigh fading modeling and channel
  hardening for reconfigurable intelligent surfaces,'' \emph{IEEE Wireless
  Communications Letters}, vol.~10, no.~4, pp. 830--834, 2020.

\bibitem{IRS_applications}
Q.~Wu and R.~Zhang, ``Towards smart and reconfigurable environment: Intelligent
  reflecting surface aided wireless network,'' \emph{IEEE Communications
  Magazine}, vol.~58, no.~1, pp. 106--112, 2020.

\bibitem{MISO_Schober}
X.~Yu, D.~Xu, and R.~Schober, ``{MISO} wireless communication systems via
  intelligent reflecting surfaces,'' in \emph{2019 IEEE/CIC International
  Conference on Communications in China (ICCC)}.\hskip 1em plus 0.5em minus
  0.4em\relax IEEE, 2019, pp. 735--740.

\bibitem{RIS_energy_efficiency}
C.~Huang, A.~Zappone, G.~C. Alexandropoulos, M.~Debbah, and C.~Yuen,
  ``Reconfigurable intelligent surfaces for energy efficiency in wireless
  communication,'' \emph{IEEE transactions on wireless communications},
  vol.~18, no.~8, pp. 4157--4170, 2019.

\bibitem{phased_array_paradox}
P.~Hannan, ``The element-gain paradox for a phased-array antenna,'' \emph{IEEE
  Transactions on Antennas and Propagation}, vol.~12, no.~4, pp. 423--433,
  1964.

\bibitem{Baranyai_main}
Z.~Baranyai, ``On the factorization of the complete uniform hypergraphs,''
  \emph{Infinite and finite sets}, 1974.

\bibitem{combinatorics_book}
J.~H. Van~Lint and R.~M. Wilson, \emph{A course in combinatorics}.\hskip 1em
  plus 0.5em minus 0.4em\relax Cambridge university press, 2001.

\bibitem{Petros_MIMO}
E.~Lampiris, A.~Bazco-Nogueras, and P.~Elia, ``Resolving the feedback
  bottleneck of multi-antenna coded caching,'' \emph{IEEE Transactions on
  Information Theory}, vol.~68, no.~4, pp. 2331--2348, 2022.

\bibitem{elgamal_oneshot_vs_extension}
M.~Bande, A.~El~Gamal, and V.~V. Veeravalli, ``Degrees of freedom in wireless
  interference networks with cooperative transmission and backhaul load
  constraints,'' \emph{IEEE Transactions on Information Theory}, vol.~65,
  no.~9, pp. 5816--5832, 2019.

\end{thebibliography}
			\bibliographystyle{IEEEtran}
		\end{document}